\title{Learning-Augmented Online TSP on Rings, Trees, Flowers and (almost) Everywhere Else} 
\titlerunning{Learning-Augmented Online TSP on Rings, Trees, Flowers and Everywhere Else} 
\newcommand{\lipsix}{Sorbonne Universit\'e, CNRS, LIP6, F-75005 Paris, France}
\author{Evripidis Bampis}{\lipsix}{evripidis.bampis@lip6.fr}{https://orcid.org/0000-0002-4498-3040}{}
\author{Bruno Escoffier}{\lipsix \and Institut Universitaire de France, Paris, France}{bruno.escoffier@lip6.fr}{https://orcid.org/0000-0002-6477-8706}{}
\author{Themis Gouleakis}{National University of Singapore, Singapore, Singapore}{tgoule@nus.edu.sg}{https://orcid.org/0000-0002-4056-0489}{}
\author{Niklas Hahn}{\lipsix}{niklas.hahn@lip6.fr}{https://orcid.org/0000-0002-4929-0542}{}
\author{Kostas Lakis}{ETH Zürich, Zürich, Switzerland}{klakis@student.ethz.ch}{https://orcid.org/0009-0004-5595-1839}{}
\author{Golnoosh Shahkarami}{Max-Planck-Institut für Informatik, Universität des Saarlandes, Saarbrücken, Germany}{gshahkar@mpi-inf.mpg.de}{https://orcid.org/0000-0002-6169-7337}{}
\author{Michalis Xefteris}{\lipsix}{michail.xefteris@lip6.fr}{https://orcid.org/0009-0006-2894-3029}{}
\authorrunning{Bampis et al.} 
\keywords{TSP, online algorithms, predictions, competitive analysis} 
\newcommand{\opt}{\ensuremath{\operatorname{OPT}}}
\newcommand{\alg}{\ensuremath{\operatorname{ALG}}}
\newcommand{\ratio}{\ensuremath{\rho}}
\newcommand{\orig}{\ensuremath{\mathcal{O}}}
\newcommand{\oracle}{\ensuremath{\mathcal{D}}}
\newcommand{\firingPerm}{\ensuremath{\sigma_0}}
\newcommand{\chosenPerm}{\ensuremath{\sigma_1}}
\newcommand{\algoStart}{\ensuremath{T}}
\newcommand{\tree}{\ensuremath{\mathcal{T}}}
\newcommand{\swag}{\hyperref[algo:general]{\texttt{SWAG}}}
\newcommand{\laswag}{\hyperref[algo:generalSmoothRobust]{\texttt{LA-SWAG}}}
\newcommand{\arcThroughThreePoints}[4][]{
\coordinate (middle1) at ($(#2)!.5!(#3)$);
\coordinate (middle2) at ($(#3)!.5!(#4)$);
\coordinate (aux1) at ($(middle1)!1!90:(#3)$);
\coordinate (aux2) at ($(middle2)!1!90:(#4)$);
\coordinate (center) at ($(intersection of middle1--aux1 and middle2--aux2)$);
\draw[#1, red, dashed] 
 let \p1=($(#2)-(center)$),
      \p2=($(#4)-(center)$),
      \n0={veclen(\p1)},       
      \n1={atan2(\y1,\x1)}, 
      \n2={atan2(\y2,\x2)},
      \n3={\n2>\n1?\n2:\n2+360}
    in (#2) arc(\n1:\n3:\n0);
}
\begin{document}
	
	\maketitle
        
	\begin{abstract}
        We study the Online Traveling Salesperson  Problem (OLTSP) with predictions. In OLTSP, a sequence of initially unknown requests arrive over time at points (locations) of a metric space. The goal is, starting from a particular point of the metric space (the origin), to serve all these requests while minimizing the total time spent. 
        The server moves with unit speed or is ``waiting'' (zero speed) at some location.
        We consider two variants: in the open variant, the goal is achieved when the last request is served. In the closed one, the server additionally has to return to the origin. We adopt a prediction model, introduced for OLTSP on the line~\cite{GouleakisLS22}, in which the predictions correspond to the locations of the requests and extend it to more general metric spaces.
        
        We first propose an oracle-based algorithmic framework, inspired by previous work~\cite{tsp_l}. This framework allows us to design online algorithms for general metric spaces that provide competitive ratio guarantees which, given perfect predictions, beat the best possible classical guarantee (\emph{consistency}). Moreover, they degrade gracefully along with the increase in error (\emph{smoothness}), but always within a constant factor of the best known competitive ratio in the classical case (\emph{robustness}). 
        
        Having reduced the problem to designing suitable efficient oracles, we describe how to achieve this for general metric spaces as well as specific metric spaces (rings, trees and flowers), the resulting algorithms being tractable in the latter case. The consistency guarantees of our algorithms are tight in almost all cases, and their smoothness guarantees only suffer a linear dependency on the error, which we show is necessary. Finally, we provide robustness guarantees improving previous results.
	\end{abstract}

    \clearpage
    \addtocounter{page}{-1}
 
    \section{Introduction}\label{sec:intro}

In the classical Traveling Salesperson Problem (TSP), we are  given a set of locations as well as the pairwise distances between them and the objective is to find a shortest tour visiting all the locations. TSP is one of the most fundamental and well studied problems in Computer Science~\cite{Lawler91}. We focus on the online version of the problem in a metric space, the Online Traveling Salesperson Problem (OLTSP), introduced in the seminal paper of Ausiello et al.~\cite{AusielloFLST01}.  In OLTSP, the input arrives over time, i.e., new requests (locations) that have to be visited by the traveler (or server) will appear during the travel. The time in which a request is communicated to the traveler  is called its release time (or release date). The objective is the minimization of the total traveled time assuming that at any time the traveler either moves at unit speed or is ``waiting'' (zero speed) at some location\footnote{Note that this choice of possible speeds is w.l.o.g., as any setting where the maximum speed is bounded can be reduced to this setting. Specifically, if the maximum speed is some $S>0$, we can normalize the maximum speed to be $1$ and multiply all distances by $S$. Also, our setting can simulate any other setting where the algorithm is free to chose either the unit speed of some other speed $0<s<1$. Moving at speed $s$ is equivalent to dividing the time into intervals of length $dt\rightarrow 0$ and moving at unit speed only for time $s\cdot dt$ within each interval.}. We study both the {\em closed} variant, where the server is required to return to the origin after serving all requests, and the {\em open} variant, where the server does not have to  return to the origin after serving all the requests.  A series of papers considered many variations of OLTSP in different metric spaces (general metric space~\cite{AusielloFLST01,tsp_l}, the line~\cite{AusielloFLST01,BjeldeHDHLMSSS21,GouleakisLS22}, the semi-line~\cite{AusielloDLP04,tsp_l,BlomKPS01},  the ring~\cite{tsp_l, 10.1007/978-3-030-14812-6_20} and the star~\cite{tsp_l}). 

The motivation of studying OLTSP and its variations comes from applications in many different domains, such as e.g.\ logistics and robotics~\cite{AscheuerGKR90,PsaraftisSMK90}. 
In the framework of competitive  analysis, the performance of an online algorithm is usually evaluated using the competitive ratio which is defined as the maximum ratio between the cost of the online algorithm and the cost of an optimal offline algorithm, which by definition has knowledge of the entire input in advance, over all input instances. However, it is admitted that the competitive analysis approach can be overly pessimistic as it is calculated considering worst-case instances, giving a lot of power to  the adversary. 
Hence, many papers try  to limit the power of the adversary~\cite{BlomKPS01}, 
or give extra knowledge and hence more power to the online algorithm~\cite{AllulliAL05,tsp_l,Jaillet}. 

More recently, the framework of   {\em Learning-Augmented (LA) algorithms} has emerged due to the vibrant successes of Machine Learning methods and Artificial Intelligence in predicting and learning the unknown (i.e., future inputs in the case of online algorithms) based on data~\cite{LykourisV21}. 
In this line of research, the goal is to utilize predictions of the future input that have potentially been acquired using a learning algorithm, in order to have provably improved competitive ratio in the case that the predictions are accurate enough, while maintaining worst-case guarantees even if the prediction error is arbitrarily large. 
In particular, a (possibly erroneous) prediction of the input is given to the algorithm and the goal is to design algorithms with a good performance guarantee when the prediction is accurate (consistency), a not too bad (and bounded) performance when the prediction is wrong (robustness) and a gradual deterioration of the competitive ratio with respect to the prediction error (smoothness). We give more precise definitions in Section \ref{sec:prelim}.

Recent works have proposed a variety of approaches to tackle OLTSP with predictions (LA-OLTSP).
In~\cite{GouleakisLS22}, Gouleakis et al.\ studied a learning-augmented framework for OLTSP on the line. They introduced a prediction model in  which the predictions correspond to the locations of the requests. They proposed LA algorithms for both the closed and the open variants that are consistent, smooth and robust.   
Bampis et al., in~\cite{tsp_l}, considered the case of perfect predictions of the locations and studied different metric spaces (general metric space, semi-line, ring, and star) and proposed competitive online algorithms and lower bounds.
In~\cite{HuWLCL22}, Hu et al.\ proposed three different prediction models for OLTSP. In two of their models, each request is associated to a prediction for both its release time and its location while in the third one  the prediction is just the release time of the last request. In \cite{BernardiniLMMSS22}, Bernardini et al.\ studied OLTSP with predictions of both the release time and the location of each request. They introduced a new error measure, the cover error, and they also considered other online graph problems. More recently, Chawla and Christou~\cite{Chawla23} studied the Online Time-Windows TSP with predictions of release time and location of the requests.

In this work, we adopt the prediction model of~\cite{GouleakisLS22}. We propose a general oracle-based framework that allows us to design consistent, smooth and robust LA 
algorithms for both the closed and open variants of the problem for different metric spaces. We also provide some interesting lower bounds.


\subsection{Our contributions and techniques}
In this paper, we propose a novel approach to improve the competitive ratio of OLTSP using predictions concerning the locations of requests in general metric spaces. Our algorithms provide tight competitive ratio guarantees in most cases. Moreover, we show how to get polynomial-time/FPT algorithms in specific metrics, namely rings, trees and flowers.

Bampis et al.~\cite{tsp_l} gave an algorithm for general metrics with a competitive ratio of $3/2$ for the case of perfect predictions (known locations), which is tight. First, in Section \ref{subsec:perfectpred}, we modify this algorithm (still under the assumption of perfect predictions) and introduce our main oracle-based 3/2-competitive framework, which we call \emph{Strategically Wait And Go} ($\swag$, for pseudocode see Algorithm~\ref{algo:general}). 
The main idea is to consider a suitable subset of permutations of the requests, referred to as \hyperref[Def:DominationPermutation]{\emph{Dominating permutations}}, given by a so-called \hyperref[def:oracle]{\emph{Domination oracle}} instead of all the permutations. This allows for a reduction of the running time, since the bottleneck is located in the cardinality of the set of considered permutations. This restriction of the permutation set preserves the consistency of $3/2$.

Then, we introduce our main algorithm, \emph{Learning-Augmented Strategically Wait And Go} ($\laswag$, for pseudocode see Algorithm~\ref{algo:generalSmoothRobust}), which does not assume perfect predictions. $\laswag$ is consistent, smooth, and robust. More formally, in Section \ref{subsec:smooth} we show the following.

\begin{restatable}[Consistency and Smoothness]{theorem}{ConsistencySmoothness}
\label{th:smoothness}
    $\laswag$ has a competitive ratio of at most $3/2 + 5\eta$ for both closed and open LA-OLTSP.
\end{restatable}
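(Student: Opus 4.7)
The plan is to analyse $\laswag$ by comparing its actual execution against a hypothetical execution of $\swag$ on the predicted locations (the \emph{reference run}). By the $3/2$-consistency established in Section~\ref{subsec:perfectpred}, the reference run has cost at most $\tfrac{3}{2}\opt'$, where $\opt'$ denotes the optimal offline cost on the predicted instance. The remaining task is to bound (i) $\opt'$ in terms of $\opt$, and (ii) $\laswag$'s actual cost in terms of the reference cost, charging each bound to the prediction error $\eta$.

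First, I would prove an \emph{instance-transfer} lemma, namely $\opt' \le \opt + 2\sum_i d(r_i,\hat r_i)$. The proof takes the true optimal tour and, for each request, inserts a round-trip detour between the true location $r_i$ and the predicted location $\hat r_i$, whose length is at most $2\,d(r_i,\hat r_i)$. Under the natural normalisation $\eta = \sum_i d(r_i,\hat r_i)/\opt$ inherited from~\cite{GouleakisLS22}, this reads $\opt' \le (1+2\eta)\opt$.

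Second, I would prove the more delicate \emph{execution-transfer} lemma: the cost of $\laswag$ is at most the reference cost plus $2\sum_i d(r_i,\hat r_i)$. I would couple the two trajectories in time and argue, by induction over release events, that $\laswag$ only deviates from the reference through an out-and-back correction of length at most $2\,d(r_i,\hat r_i)$ whenever a request $r_i$ is revealed at a location different from $\hat r_i$; after the correction, the trajectories resynchronise. Combining the two lemmas yields
\begin{align*}
\laswag \;\le\; \tfrac{3}{2}\opt' + 2\eta\,\opt \;\le\; \tfrac{3}{2}(1+2\eta)\opt + 2\eta\,\opt \;=\; \left(\tfrac{3}{2} + 5\eta\right)\opt,
\end{align*}
which is the claimed bound for the open variant. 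The closed variant follows from the same coupling together with the observation that $\laswag$'s final return to the origin costs no more than the reference's, up to the displacement already paid for in the correction step.

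The main obstacle will be the execution-transfer lemma, in particular when $\laswag$ has already committed to moving toward $\hat r_i$ before $r_i$ is actually released, or when the true locations would induce the oracle to select a different dominating permutation. In such cases, a naive local detour might trigger a cascading replanning cost. I plan to handle this by deferring each correction to the end of the currently committed segment, so that the additional cost is always a self-contained out-and-back bounded by $2\,d(r_i,\hat r_i)$; a short potential argument tracking the maximum remaining distance to an unserved request should suffice to show that such deferrals never accumulate beyond the claimed additive overhead.
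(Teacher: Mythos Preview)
Your two-lemma decomposition (instance transfer $\opt'\le\opt+2\Delta$ and execution transfer $|\laswag|\le \text{reference}+2\Delta$, where $\Delta=\sum_i d(x_i,p_i)$) is exactly the structure the paper uses, and your detour argument for the instance-transfer lemma is correct. However, you are overcomplicating the execution-transfer lemma because of a misconception about what $\laswag$ computes.

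The obstacle you anticipate --- ``the true locations would induce the oracle to select a different dominating permutation'' --- does not exist. $\laswag$ feeds the oracle the \emph{predicted} locations and the (actual) release times; the true locations $x_i$ play no role in determining $T$ or $\sigma_1$. Hence the reference run of $\swag$ on the predicted instance and the actual run of $\laswag$ compute \emph{identical} $T$ and $\sigma_1$. After time $T$, both follow $\sigma_1$; the only difference is that $\laswag$, after waiting at $p_{\sigma_1[i]}$, takes the side trip to $x_{\sigma_1[i]}$ before heading to $p_{\sigma_1[i+1]}$. Given this, no coupling induction or potential argument is needed. The paper simply lets $t^*$ be the \emph{last} time $\laswag$ waits (say at $p_{\sigma_1[s]}$); since the request there is unreleased before $t^*$, the reference run cannot serve it earlier either, so the reference cost is at least $t^*$ plus the remaining $p$-to-$p$ path length. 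A single application of the triangle inequality $d(x_{\sigma_1[i]},p_{\sigma_1[i+1]})\le d(x_{\sigma_1[i]},p_{\sigma_1[i]})+d(p_{\sigma_1[i]},p_{\sigma_1[i+1]})$ on this suffix gives the $2\Delta$ overhead directly. Your deferral/resynchronisation plan would also work, but it is fighting a difficulty that isn't there.

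One minor correction: the paper normalises $\eta=\Delta/F$, where $F$ is the length of a shortest TSP tour/path on the true locations (ignoring release times), not $\Delta/|\opt|$. Since $F\le|\opt|$, both choices yield the same final inequality $|\alg|\le \tfrac{3}{2}|\opt|+5\Delta\le(\tfrac{3}{2}+5\eta)|\opt|$, so your arithmetic goes through unchanged; just be sure to use the paper's definition when you write it up.
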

Here, $\eta$ is the error of the prediction (defined formally later) that captures the normalized sum of distances between predicted and actual locations of the requests.  Note that for $\eta = 0$, we get a consistency of $3/2$, which is tight for all cases except the open variant on trees. 

Additionally, we show a smoothness lower bound of $3/2 + \eta/2$ for the open variant with $\eta \in [0,1/3]$ (Proposition \ref{smoothnessLowerBound}), implying that a linear dependency on $\eta$ is required.

Regarding robustness, the algorithm in~\cite{GouleakisLS22} achieves $3$-robustness on the line. 
$\laswag$ improves this bound for general metrics and further so in specific metrics. 

\begin{restatable}[Robustness-Closed]{theorem}{robustnessColsed}
\label{robustness:2.75}
$\laswag$ is $2.75$-robust for closed LA-OLTSP in general metric spaces, and $2.5$-robust in Euclidean spaces and in trees.
\end{restatable}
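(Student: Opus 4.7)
The plan is to analyze $\laswag$'s execution under worst-case (adversarial) predictions. Since $\laswag$ follows a Wait-And-Go paradigm, its total cost decomposes naturally into a waiting phase (during which the server is stationed at the origin) and a traversal phase (during which it moves to serve requests and, for the closed variant, returns to the origin). I would bound each phase separately as a fraction of $\opt$ and then sum.

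First I would upper bound the waiting time $\tau$ by $c \cdot \opt$ for some constant $c < 1$. Crucially, for robustness the trigger to start moving cannot depend unboundedly on the predictions; it has to be forced by online-observable quantities. The two standard lower bounds on $\opt$ available online are the maximum release time of any request seen so far, and twice the maximum distance from the origin to any released request (the latter valid for the closed variant, since visiting such a request and returning costs at least twice that distance). I would show that $\laswag$ is designed so that $\tau$ never exceeds a fixed fraction of the maximum of these two quantities, so that $\tau \leq c \cdot \opt$ unconditionally.

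Next I would bound the cost of the traversal phase, measured from time $\tau$. At that moment the algorithm either executes the tour recommended by the $\swag$ oracle, or, if the released data has already revealed that this tour would miss some requests, falls back to a classical online TSP subroutine. For the closed variant in general metric spaces a PLAN-AT-HOME-style $2$-competitive algorithm gives traversal cost at most $2 \cdot \opt$. Setting $c = 3/4$ then yields $(3/4 + 2)\cdot\opt = (11/4)\cdot\opt = 2.75 \cdot \opt$. In Euclidean spaces and in trees, known strengthened classical algorithms, or direct metric-specific analyses exploiting the extra structure (shortcutting along a geodesic, exploiting the fact that a tree tour traverses each edge at most twice, etc.), tighten either the traversal bound or the allowed waiting budget, yielding a combined bound of $5/2 \cdot \opt$.

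The main technical obstacle will be handling the transition between phases. First, one must argue that invoking the fallback from a mid-traversal position (rather than from the origin) does not inflate the classical bound; the standard fix is to charge the return-to-origin detour against the waiting budget, which forces slightly tighter thresholds in the safeguard. Second, since requests may still be released after $\tau$, one must verify that the classical competitive guarantee continues to hold from a partially completed state. Getting these two points right while keeping the constants within the claimed $11/4$ and $5/2$ is where the bulk of the bookkeeping would lie.
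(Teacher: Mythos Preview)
Your decomposition does not match how $\laswag$ actually works, and the mismatch is fatal for the argument. The waiting time $T$ in $\laswag$ is \emph{not} triggered by online-observable quantities; it is computed from the \emph{predicted} locations (the conditions $t\ge \ell_{\sigma_0}/2$ and $\alpha_{\sigma_0}(t)\ge 1/2$ use lengths and released fractions of tours on the predictions). With adversarial predictions, $T$ can therefore be arbitrarily large relative to $|\opt|$, so no bound of the form $\tau\le c\cdot|\opt|$ with $c<1$ holds unconditionally. What saves the algorithm is the breaking rule, which at the last release time $t_f\le|\opt|$ abandons everything and follows an optimal tour on the remaining requests. Likewise, the traversal phase is not a classical $2$-competitive online subroutine: until the breaking rule fires, the server is chasing predicted locations and actual locations alternately, and may wander far from anything $\opt$ does. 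There is no ``PLAN-AT-HOME'' fallback running from time $\tau$.

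The paper's argument is instead a case analysis on where $T$ falls relative to $|\opt|$. If $T>|\opt|$, the breaking rule fires at $t_f$ from the origin, giving ratio $2$. If $T\le c|\opt|$, one bounds how far the server can be from the origin at time $t_f$: it lies on a shortest path between some prediction $p_j$ (at distance $\le \ell'_{\sigma_1}/2\le T\le c|\opt|$ from $\orig$) and some real location $x_i$ (at distance $\le |\opt|/2$), and by the triangle inequality this is at most $(c+1/2)|\opt|$; returning to $\orig$ and copying $\opt$ gives ratio $2.5+c$. If $c|\opt|<T\le|\opt|$, the server has moved at most $|\opt|-T$ from $\orig$ by time $|\opt|$, giving ratio $3-c$. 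Balancing $2.5+c$ against $3-c$ yields $c=1/4$ and robustness $2.75$. The improvement to $2.5$ in Euclidean spaces and trees has nothing to do with better classical algorithms: it comes from the purely geometric fact that in those metrics a point on a shortest $p_j$--$x_i$ path is at distance at most $\max\{d(\orig,p_j),d(\orig,x_i)\}$ from $\orig$ rather than the sum, which replaces $2.5+c$ by $2+\max\{c,1/2\}$ and allows $c=1/2$.
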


\begin{restatable}[Robustness-Open]{theorem}{robustnessOpen}
\label{robustness:2.84}
    $\laswag$ is $\left(3 - 1/6 \right)$-robust for open LA-OLTSP in general metric spaces, and  $\left(3 - 1/3\right)$-robust in trees.
\end{restatable}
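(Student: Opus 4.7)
The plan is to analyze $\laswag$ under an arbitrary (worst-case) prediction by decomposing its trajectory into two segments: a waiting phase at the origin of duration $W$, and a traversal phase of length $L$, so that $\alg \le W + L$. Since $\laswag$ chooses its waiting time using the predicted optimum (which can be arbitrarily large in the robust regime), I would first pair this with a safety-threshold rule that forces the algorithm out of the origin as soon as the wait would exceed a fixed fraction of a verifiable lower bound on $\opt$ (for instance, the largest release date observed so far, or the distance to the farthest already-released request). This yields $W \le \alpha \cdot \opt$ for a tunable constant $\alpha$, reducing the task to an upper bound on $L$.

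For the general-metric bound of $3-1/6$, once $\laswag$ departs at time $W$, I would argue that all requests released by then can be collected by an open tour of length at most $2\opt$ via the triangle inequality and a standard shortcutting of a doubled optimal tour, while any late-arriving request $r$ contributes an extra detour that is bounded against the portion of $\opt$ spent waiting for $r$. Combining these contributions gives $\alg \le \alpha\opt + (3 - 1/6 - \alpha)\opt$, and balancing $\alpha$ yields the factor. The key refinement over the previous $3$-robust bound of~\cite{GouleakisLS22} is to notice that a constant fraction of the late-arrival detour is already accounted for in the waiting contribution of $\opt$, which is what produces the $1/6$ saving.

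For the tree bound of $3-1/3$, I would exploit the unique-path property: any backtracking performed by $\laswag$ after re-planning traverses tree edges that $\opt$ must also use to reach requests beyond them, so a constant fraction of the traversal length of $\laswag$ can be amortized against edge costs of $\opt$ via an edge-by-edge charging scheme. The main obstacle, particularly for the open variant, is that $\laswag$ need not return to the origin, so the doubling argument fails on the terminal segment; one must additionally argue that $\laswag$'s endpoint is close enough to the last served request that no extra wasted traversal appears. Making this precise uniformly across both general metrics and trees, so that the waiting-time parameter $\alpha$ can be optimized in each case, is the most delicate step of the analysis.
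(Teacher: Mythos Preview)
Your proposal does not match the paper's argument and, more importantly, has genuine gaps that prevent it from yielding the stated constants.

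First, you propose augmenting $\laswag$ with a ``safety-threshold rule'' to cap the waiting time. This is both unnecessary and off-target: $\laswag$ already contains a breaking rule (once all requests are released, it follows an optimal path), so if the prediction-driven start time $T$ exceeds $|\opt|$, the breaking rule fires at $t_f \le |\opt|$ and gives $|\alg| \le 2|\opt|$ directly. You are analyzing a different algorithm, not the one in the statement.

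Second, and more critically, your plan for the general-metric bound of $3-1/6$ lacks the key idea. The paper does \emph{not} bound a traversal length $L$ via shortcutting a doubled tour. Instead, it looks at the server's position $z$ at the last release time $t_f$, which lies on a shortest path between some prediction $p$ and some true location $x$, and considers two ways to finish: (i) go $z \to p \to \orig$ and copy $\opt$ forward, or (ii) go $z \to x \to x_\ell$, the location of $\opt$'s \emph{last} request, and copy $\opt$ \emph{backwards}. Averaging the two and using $d_p + d_x = d(p,x)$ together with the triangle inequality gives $\min\{|P_1|,|P_2|\} \le |\opt|/2 + d(p,\orig)$. One then bounds $d(p,\orig) \le 2c|\opt|$ from the starting condition $T = c|\opt| \ge \ell_{\sigma_0}/2$, obtaining the two competing bounds $2.5 + 2c$ and $3 - c$, balanced at $c = 1/6$. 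Without the backward-copy trick your plan has no route to the $1/6$ saving; the vague statement that ``a constant fraction of the late-arrival detour is already accounted for'' does not supply it.

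Third, your tree argument via an ``edge-by-edge charging scheme'' is not what produces $3-1/3$. The paper's argument is structural: on a tree, moving between $p$ and $x$ passes through their lowest common ancestor, so at $t_f$ the server is either on the $p$-to-$\orig$ path (distance to $\orig$ at most $d(p,\orig) \le 2c|\opt|$) or on the $x$-to-$\orig$ path, which is a subpath of $\opt$'s own path; in the latter case one can go to either end of $\opt$'s path and copy it, costing at most $3/2|\opt|$ after $t_f$. This yields bounds $2+2c$, $2.5$, and $3-c$, balanced at $c=1/3$. Your charging outline does not isolate this dichotomy and would not give the constant.
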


Our analysis for $2.5$-robustness and $\left(3 - 1/3\right)$-robustness is tight even on the line (Remarks \ref{tight2.5} and \ref{tight3-1/3}).
Moreover, we show a negative result concerning the consistency/robustness trade-off of any algorithm for the open variant (Lemma \ref{tradeoff}).

The main technical contribution of our work, found in Sections \ref{sec:singleExponential} and \ref{sec:polyTimeAlgos}, is the implementation of the domination oracles we have referred to. On a high level, we say that a permutation $\pi_{dom}$ dominates another permutation $\pi$ at time $t$, if the following conditions hold. Assuming $q$ is the first unreleased request in $\pi$ at time $t$, the distance traveled up to $q$ is not longer in $\pi_{dom}$, and also a superset of the requests preceding $q$ in $\pi$ is visited before $q$. Moreover, $\pi_{dom}$ induces a not longer path than $\pi$. These two key facts allow us to preserve $3/2$-consistency. 

For general metrics, we achieve this domination using a very similar idea as the one employed in the definition of the $O(n^22^n)$ dynamic programming solution of the classical TSP~\cite{Bellman, Karp}. That is, for any possible subset of released requests that might have been served by $\pi$ before $q$, we simply build two optimal paths for the parts before and after $q$ (without release times) and then we concatenate them to get a dominating permutation. We call the resulting sets \textit{general dominating sets}. Any permutation is dominated by the one corresponding to the correct guess of requests served up to $q$. This yields a single-exponential time algorithm overall.

\begin{restatable}[General Metrics]{theorem}{expGeneral}
    $\laswag$  with an oracle $\oracle$ which uses the general dominating sets runs in single-exponential time and is $\min\{3/2+5\eta, 2.75\}$-competitive for the closed variant and $\min\{3/2+5\eta, 3-1/6\}$-competitive for the open variant of LA-OLTSP. 
\end{restatable}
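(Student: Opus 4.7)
The plan is to separate the claim into two parts: (a) the oracle $\oracle$ built from the general dominating sets satisfies the domination property required in Section~\ref{subsec:perfectpred}, so that the guarantees of $\laswag$ apply to it, and (b) the construction runs in single-exponential time. Once (a) is established, the competitive ratios $3/2 + 5\eta$, $2.75$, and $3 - 1/6$ follow immediately by invoking Theorems~\ref{th:smoothness}, \ref{robustness:2.75}, and \ref{robustness:2.84} respectively and taking the minimum.

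For (a), I would define the oracle explicitly. For each request $q$ and each subset $S \subseteq R \setminus \{q\}$ of the other predicted locations, compute (ignoring release times) an optimal path starting at the origin, visiting exactly $S \cup \{q\}$, and ending at $q$; concatenate it with an optimal path starting at $q$ and visiting $R \setminus (S \cup \{q\})$ (then returning to the origin in the closed variant). Call the resulting permutation $\pi_{S,q}$ and let $\oracle$ return the family $\{\pi_{S,q}\}$. To verify the domination property, fix any permutation $\pi$ and any time $t$, let $q$ be the first request in $\pi$ unreleased at time $t$, and let $S^\pi$ be the set of requests preceding $q$ in $\pi$. Then $\pi_{S^\pi,q}$ visits exactly the same set $S^\pi \cup \{q\}$ on its prefix ending at $q$ but along a minimum-length path, so this prefix is no longer than the corresponding prefix of $\pi$; the comparison of the suffixes starting at $q$ is analogous. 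Both key conditions for domination at time $t$ therefore hold, and since the union of the served set equals $S^\pi$ by construction, the ``superset of requests served before $q$'' condition is also met (with equality).

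For (b), each of the two path-optimization problems per pair $(q,S)$ is an instance of the classical shortest Hamiltonian path problem with prescribed endpoint(s), solvable by the Bellman--Held--Karp dynamic program in $O(m^2 2^m)$ time on $m$ vertices. Summed over the $n \cdot 2^n$ choices of $(q, S)$, the oracle is computed in $O(n^3 \cdot 4^n) = 2^{O(n)}$ time. The outer $\laswag$ routine inspects the returned family at a polynomial number of event points, each with polynomial per-step work, preserving the single-exponential bound overall.

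The main care point, and the step where I expect to spend the most effort, is matching the prefix/suffix comparison above to the precise formal definition of domination invoked by $\laswag$. In particular, one must ensure that $S^\pi$ is indeed a subset of the requests already released by the time $\pi$ reaches $q$ (so that the release-time semantics used inside $\laswag$ are respected), and that the decision rule by which $\laswag$ commits depends only on prefix length and on set-inclusion of served requests, both of which are exactly what the above argument controls. Once this bookkeeping is pinned down, the smoothness and robustness guarantees transfer verbatim from the prior theorems.
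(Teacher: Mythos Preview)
Your proposal is correct and follows essentially the same approach as the paper: both construct the oracle by enumerating, for each choice of a ``first unreleased'' request $q$ and a subset $S$ of the other requests, an optimal prefix path $\orig \to S \to q$ concatenated with an optimal suffix from $q$, then invoke Lemma~\ref{dominating:general} together with Theorems~\ref{th:smoothness}, \ref{robustness:2.75}, and~\ref{robustness:2.84}. Two minor remarks: (i) your oracle returns a single time-independent family, which makes the monotonicity condition of Definition~\ref{def:oracle} trivially hold (the paper instead accumulates a growing union $\bigcup_{j\le i} D_j$ indexed by release events, and explicitly checks this condition); (ii) your runtime bound of $O(n^3 4^n)$ is coarser than the paper's $O(n^2 2^n)$, since the Bellman--Held--Karp table can be precomputed once rather than re-solved per pair $(q,S)$, but both are single-exponential so the stated claim is unaffected.
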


The overarching insight behind how we reduce the runtime in specific metrics is the fact that we do not really need to try all possible subsets of requests served before $q$. For example, in trees, we first show a structural result about the optimal solutions (with release times). Specifically, we prove that the requests placed along a path from a leaf to the origin (considered the root) can be assumed to be served in a very specific order. This is the order in which the requests are encountered as one traverses the path from the leaf to the origin. This fact allows us to design a domination oracle which, roughly, provides a single permutation for any subset of \textit{leaves} visited before $q$. Hence, we get an FPT algorithm parameterized by the number $l$ of leaves of the tree.

\begin{restatable}[Trees]{theorem}{trees}
    There exists a Domination oracle for $\laswag$ in trees which yields a time complexity of $O(2^l \cdot n^3)$ for the closed variant and $O(2^l \cdot n^4)$ for the open variant, where $l$ is the number of leaves of the input tree.
\end{restatable}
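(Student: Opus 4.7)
The plan is to exploit the structural observation hinted at in the introduction: rooting the tree at the origin $o$, for any leaf $\ell$ the requests lying on the $o$-to-$\ell$ path can, in some optimal schedule, be assumed to be served in the order they are encountered when walking from $\ell$ back toward $o$. I would first formalize and prove this as a structural lemma, via a local exchange argument. If a request $r$ on the path to $\ell$ is served before the server first reaches $\ell$, then I postpone $r$'s service until the return trip: the server still passes over $r$, and any wait that was previously required to meet $r$'s release time can instead happen at the leaf $\ell$ (or further along). Neither feasibility nor the makespan is harmed, and iterating the exchange produces an optimal schedule of the claimed form.

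Given the lemma, the oracle I propose enumerates subsets $L \subseteq$ leaves and produces one candidate permutation per $L$. Concretely, if $q$ is the first unreleased request identified by \laswag{} at the call time, then for each $L$ the candidate (i) visits every request lying on the union of root-to-$\ell$ paths for $\ell \in L$, excluding $q$ and its descendants, \emph{before} $q$, in the canonical order prescribed by the structural lemma, namely a DFS of the subtree spanned by $L$ in which, on every branch, requests are collected on the way back toward the root; (ii) then visits $q$; (iii) then completes with a shortest tour through the remaining requests on the tree, computed without release times. The latter subproblem is solvable in polynomial time on trees, by Euler-tour doubling in the closed case and by choosing the best ending leaf in the open case.

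For correctness, let $\pi$ be any permutation \laswag{} would otherwise consider, with first unreleased request $q$. Apply the structural lemma to rewrite the pre-$q$ portion of $\pi$, without increase in distance traveled, in the canonical per-leaf order; this determines the set $L^\star$ of leaves whose entire path-to-root requests are served before $q$. The candidate generated by the oracle for $L = L^\star$ serves the same prefix of requests (so the superset condition of a dominating permutation holds with equality) and uses the shortest such prefix plus the shortest suffix, so both the distance up to $q$ and the total length are no larger than in $\pi$. All three requirements in the definition of a \hyperref[def:oracle]{Domination oracle} are therefore satisfied, and the $3/2$-consistency and smoothness/robustness analyses of \laswag{} go through unchanged.

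For the runtime, there are $2^l$ subsets; constructing each candidate, including the tree-TSP sub-solve, costs polynomial time. The oracle is invoked $O(n)$ times during the execution of \laswag{}, yielding the stated bounds $O(2^l \cdot n^3)$ and $O(2^l \cdot n^4)$. The main obstacle I anticipate is making the structural lemma fully rigorous in the presence of release times, in particular arguing that moving an on-the-way request to the return trip never increases total wait downstream, even when several postponements interact. A secondary technical point is disambiguating internal requests that lie on paths to several leaves in $L$; I would assign each such request to its deepest descendant branch reaching into $L$, so that the canonical prefix is well-defined and the argument above applies cleanly.
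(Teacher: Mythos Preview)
Your overall strategy matches the paper's: define a ``sensible'' set of permutations via the leaf-to-root ordering lemma, then dominate it by enumerating leaf subsets. However, there is a genuine gap in your construction.

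You write that $q$ is ``the first unreleased request identified by \laswag{} at the call time,'' and then build \emph{one} candidate per leaf subset $L$ using this $q$. But \laswag{} does not identify any such $q$. The domination oracle must output a set $S(t)$ containing some $\sigma'$ that dominates the (unknown) optimal permutation $\sigma_{\opt}$, and the first unreleased request in $\sigma_{\opt}$ can be \emph{any} unreleased request. Your candidate is only well-defined once $q$ is fixed, so the oracle must also enumerate over the choice of $q$; this is exactly what the paper does with its $(q,L')$-dominators, ranging $q$ over the furthest unreleased request on each leaf path. Without this enumeration your set simply does not contain a dominator for every sensible $\pi$. (Incidentally, your claim that the candidate for $L^\star$ serves ``the same prefix'' as $\pi$ is also off: it serves a \emph{superset}, since requests high on a path to a leaf in $L^\star$ need not have been served by $\pi$ before $q$. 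This is harmless for domination, which is purely a length condition, but your phrasing suggests a mismatch with Definition~\ref{Def:DominationPermutation}.)

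A second gap concerns the open variant. Your structural lemma relies on the tour ending at the origin; in the open case the tour ends at some unknown request $q_f$, and the leaf-to-root ordering that holds is relative to the tree \emph{rerooted at $q_f$}. ``Choosing the best ending leaf'' only optimizes the post-$q$ suffix and does not salvage the lemma for the prefix. The paper handles this by enumerating all $n$ possible endpoints $q_f$, rerooting, and building the closed-variant dominators for each rerooted tree; this is where the extra factor of $n$ between the closed and open runtimes comes from. Your proposal needs the same step.
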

As a first step towards more general graphs, we deal with the concept of cycles by considering the ring. While we cannot retrieve the exact same structural result about online optimal solutions, we show something quite similar. Namely, we prove that the cyclic nature of the ring may be utilized only once by an optimal solution. After such a cyclic traversal, we can assume that the ring is split in half, yielding a tree which we know how to deal with.
\begin{restatable}[Ring]{theorem}{ring}
    There exists a Domination oracle for $\laswag$ in the ring which yields a time complexity of $O(n^3)$ for the closed variant and $O(n^5)$ for the open variant.
\end{restatable}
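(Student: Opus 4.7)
The plan is to reduce the ring case to the path case — that is, a tree with two leaves — and then invoke the tree oracle guaranteed by the previous theorem. The reduction rests on a structural lemma that I would prove first: there exists an optimal online schedule on the ring in which the cyclic structure is exploited at most once. Concretely, such a schedule contains at most one maximal interval during which the server traverses a full lap in a single direction; once this wrap-around has occurred (or we have committed to omitting it), the remainder of the schedule can be described as a schedule on a path obtained by cutting the ring at a single carefully chosen point. The proof is an exchange argument: whenever a schedule reverses direction more times than strictly necessary, the redundant reversal can be eliminated via the cyclic shortcut. Release times are handled by inserting idle waits whenever the shortcut would cause the server to arrive at some location before the corresponding request is released, which preserves feasibility without increasing the total travel time.

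Given this lemma, the Domination oracle works as follows. For each candidate pivot request $q$ — the role played by the first unreleased request in the generic definition of domination — I would enumerate the $O(1)$ structural choices dictated by the lemma (whether the dominating permutation wraps around before reaching $q$, and in which direction) and, for the open variant, $O(n)$ choices for the location of the endpoint, which simultaneously determines where the ring is cut into a path. For each combination of guesses, the ring becomes a path of $n$ requests with $l=2$ leaves, and a single invocation of the tree oracle from the previous theorem produces a dominating candidate in time $O(2^{2}\cdot n^{3})=O(n^{3})$ for the closed variant and $O(2^{2}\cdot n^{4})=O(n^{4})$ for the open variant. Multiplying by the number of enumerated guesses yields $O(n^{3})$ overall for the closed variant and $O(n\cdot n^{4})=O(n^{5})$ for the open variant.

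The main technical obstacle I foresee is the structural lemma itself. Without release times, the ``at most one lap'' property is essentially folklore for TSP on a ring, but in the online setting the exchange argument must verify that rerouting never violates any release constraint and, crucially, that the inequality on the distance traveled up to $q$ — which is central to the definition of domination — is preserved by each local modification. In particular, one must check that waits inserted to absorb a shortcut are compatible with the idle periods the original schedule already contained, and that the resulting path, concatenated with whatever suffix the tree oracle produces, is indeed not longer than the original schedule. The open case further requires tracking the endpoint through the exchange, which is precisely what motivates the extra $O(n)$ factor in the enumeration.
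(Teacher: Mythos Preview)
For the closed variant your plan is essentially the paper's: the structural lemma you describe is exactly the paper's safety claim for the ring's sensible set (at most one initial loop, then line-like behaviour after cutting at the antipodal point $\orig'$), and the paper's explicit dominators---two ``crescent'' permutations and one ``full-moon'' permutation per unreleased pivot $q$---are what the tree oracle with $l=2$ produces once the loop/no-loop guess is fixed. One cosmetic point: the $O(2^{2}\cdot n^{3})$ figure you quote is the full $\laswag$ runtime with the tree oracle, not a per-invocation cost; what actually drives the bound is that the ring oracle outputs $O(n)$ permutations, whence Lemma~\ref{general:time} gives $O(n^{3})$.

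The open variant is where your proposal breaks, and it is precisely the step you flag as delicate. Your assertion that ``the location of the endpoint simultaneously determines where the ring is cut'' is false. On a unit ring with requests at $0.3$ and $0.7$ released at time $0$ and at $0.5$ released at time $1$, the optimal open schedule is $0\to 0.3\to 0.7\to 0.5$ with cost $1$, ending at $q_f=0.5$; yet cutting just to either side of $q_f$ forces every path-like schedule ending at $0.5$ to cost at least $1.1$, and the cut that recovers cost $1$ sits at $0.7^{+}$, the far extreme of the optimal trajectory. Conversely, with requests at $0.4,0.6$ released at $0$ and at $0.3$ released at $0.8$, the optimum (cost $0.8$, endpoint $0.3$) is recovered only by cutting adjacent to $0.3$, while cutting at the antipode $0.8$ of the endpoint gives at best $0.9$. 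So neither ``cut at $q_f$'' nor ``cut at the antipode of $q_f$'' works universally, and your $O(n)$ enumeration as stated does not suffice.

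The paper sidesteps this entirely: for the open ring it does \emph{not} reduce to the tree case but directly dominates the set of \emph{all} permutations. The case analysis distinguishes whether $\pi$ has crossed $\orig'$ before the pivot $q$; when it has, two auxiliary requests $q_1,q_2$ are introduced that bracket the unserved gap on one arc, and the dominator first sweeps to $q_1$, then around through $\orig'$ to $q_2$, and on to $q$. This yields $O(n^{3})$ dominators (choices of $q,q_1,q_2$) and hence the $O(n^{5})$ bound. Your reduction could plausibly be repaired by enumerating $O(n)$ cut points adjacent to each request rather than tying the cut to the endpoint, but that is a different argument from the one you wrote, and it would still require proving the open-variant structural lemma that the paper deliberately avoids.
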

Finally, we combine the two previous sets of ideas to tackle flowers. Flowers are essentially comprised of a bunch of rings (petals) and a semi-line (stem), all of which are attached to the origin. It is still true that each single ring may be assumed to be traversed with a loop only once in this case. It turns out that we can consider at most $6$ options for every petal.
\begin{restatable}[Flowers]{theorem}{flowers}
    There exists a Domination oracle for $\laswag$ in flowers which yields a time complexity of $O(6^{p} \cdot n^3)$ for the closed variant and $O(6^{p} \cdot n^5)$ for the open variant, where $p$ is the number of petals of the input flower. 
\end{restatable}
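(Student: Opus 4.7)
The plan is to lift the ring domination oracle to the flower by handling each petal with a constant-size case analysis and treating the stem (a semi-line) as an additional branch in a tree-like computation. Fix a permutation $\pi$ to be dominated and let $q$ be its first unreleased request at the domination moment (as in the definition). For each petal, invoke the ring structural result proven for the previous theorem, asserting that the cyclic traversal of that petal is used at most once by any reasonable schedule. Combined with the domination constraints (travel distance up to $q$ and total tour length do not exceed those of $\pi$), this implies that the behavior of the dominating permutation on each petal is fully captured by at most $6$ \emph{petal configurations}, encoding (i) whether the petal is entered before $q$, (ii) the direction in which the petal is first entered (clockwise or counter-clockwise), and (iii) whether the loop around the petal is completed before $q$, after $q$, or not completed at all. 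Pruning the infeasible or symmetric combinations leaves $6$ effective options per petal.

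Once a configuration is fixed for each of the $p$ petals (accounting for the $6^p$ factor), the dominating permutation consistent with these guesses is built as follows. Each petal, with its configuration fixed, behaves like a pair of paths rooted at the origin, so the flower effectively reduces to a tree-like structure in which the stem and the two arcs of each petal are branches attached at the origin. By the structural result used for trees, the order in which requests along any single branch are served is forced: it is the order in which they appear along the branch. The only remaining freedom is how the branch visits are interleaved around $q$, which we handle with the same dynamic programming used in the ring oracle: $O(n^3)$ per configuration for the closed variant (choosing $q$ and the extreme requests reached on either side before $q$) and $O(n^5)$ for the open variant (additionally guessing the endpoint together with the branch and direction from which it is approached). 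Multiplying by the number of configurations yields the announced running times $O(6^p \cdot n^3)$ and $O(6^p \cdot n^5)$.

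The main obstacle is proving that $6$ configurations per petal truly suffice, i.e.\ that every optimal-in-hindsight schedule restricted to a single petal (subject to the domination constraints with respect to $\pi$) falls into one of these categories, and that each category is realized by a concrete dominating permutation. This part leans heavily on the ring argument that a cycle is traversed at most once, but must be adapted to the situation where the petal interacts with the rest of the instance only through the origin, so that the waiting/go decisions on other petals and on the stem do not invalidate the per-petal case analysis. A secondary, more mechanical difficulty is bookkeeping at the interface between the stem and the petals: one has to verify that the cumulative distance up to $q$ in the concatenated schedule stays bounded by that of $\pi$, which is where the restriction to \hyperref[Def:DominationPermutation]{Dominating permutations} provides the required slack.
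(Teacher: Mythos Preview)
Your high-level plan---guess a bounded number of options per petal, reduce the flower to a tree-like star, then build dominators---is the same as the paper's. But two details of your write-up do not hold together, and the second one is a real gap in the runtime argument.

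First, your description of the six per-petal configurations (``entered before $q$'' $\times$ direction $\times$ ``loop before/after/not'') does not match the paper's and does not clearly cover all cases. The paper's six options per petal are simply: loop done before $q$, loop after $q$, or---if the petal is never looped and hence snipped at its midpoint into two branches---one of the $4$ subsets of its two leaves visited before $q$. In particular, the case ``snipped, and \emph{both} branches visited before $q$ (without a loop)'' is a legitimate behaviour of a sensible permutation; your enumeration, which fixes a single entry direction, does not obviously produce it.

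Second, and more seriously, the $O(n^3)$ bound you claim \emph{per configuration} does not follow from ``choosing $q$ and the extreme requests reached on either side before $q$''. After fixing your configurations, the remaining structure is not a line with two sides but a star with up to $2p+1$ branches (the stem plus two arcs per snipped petal). Choosing an ``extreme request'' per branch would cost $n^{O(p)}$, and choosing a subset of visited leaves would cost $2^{O(p)}$; either way you would not land at $O(6^p\cdot n^3)$. The paper avoids this precisely by folding the leaf-subset choice into the per-petal six options: once the $6^p$ configuration is fixed, the set $L'$ of leaves visited before $q$ is fully determined, so one only chooses $q$ (giving $O(6^p\cdot n)$ dominators) and then invokes Lemma~\ref{general:time} to get $O(n^2\cdot 6^p\cdot n)=O(6^p\cdot n^3)$. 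For the open variant, the extra $n^2$ comes from additionally guessing either the final request $q_f$ (when it lies off the final petal) or the pair $q_1,q_2$ in the ring-type dominator for the final petal---not from guessing an endpoint and an approach direction as you write. Rewriting your six-option list to include the four leaf-subset choices, and dropping the ``extreme requests on either side'' step, would fix the argument.
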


We summarize our results in Tables~\ref{CSRR} and~\ref{Consistency-UB}. In the closed variant, there is a lower bound of $3/2$ even in the case of a line~\cite{GouleakisLS22}. In the open one, we show a lower bound of $\approx 1.468$ even in the case of a line in the Appendix (Proposition~\ref{open_lb_line}) and there is a lower bound of $3/2$ even in the case of a ring~\cite{tsp_l}.

\begin{table}[h] 
\caption{Consistency, smoothness, robustness and runtime guarantees of $\laswag$.}
\label{CSRR}
\centering
\begin{tabular}{|c|c|cc|cc|}
\hline
\multirow{2}{*}{} & \begin{tabular}[c]{@{}c@{}}Smoothness\\ (Consistency for $\eta=0$)\end{tabular} & \multicolumn{2}{c|}{Robustness}    & \multicolumn{2}{c|}{Runtime}                 \\ \cline{2-6} 
                  & Closed/Open                                                                 & \multicolumn{1}{c|}{Closed} & Open & \multicolumn{1}{c|}{Closed}     & Open       \\ \hline
Tree              & \multirow{5}{*}{$3/2+5\eta$}                                                     & \multicolumn{1}{c|}{2.5}    & $3-1/3$ & \multicolumn{1}{c|}{$O(2^l\cdot n^3)$} & $O(2^l \cdot n^4)$ \\ \cline{1-1} \cline{3-6} 
Ring              &                                                                             & \multicolumn{1}{c|}{2.75}   & $3-1/6$ & \multicolumn{1}{c|}{$O(n^3)$}     & $O(n^5)$     \\ \cline{1-1} \cline{3-6} 
Flower            &                                                                             & \multicolumn{1}{c|}{2.75}   & $3-1/6$ & \multicolumn{1}{c|}{$O(6^p \cdot n^3)$}  & $O(6^p \cdot n^5)$  \\ \cline{1-1} \cline{3-6} 
Euclidean         &                                                                             & \multicolumn{1}{c|}{2.5}    & $3-1/6$ & \multicolumn{1}{c|}{$O(n^2 \cdot 2^n)$}  & $O(n^2 \cdot 2^n)$  \\ \cline{1-1} \cline{3-6} 
General           &                                                                             & \multicolumn{1}{c|}{2.75}   & $3-1/6$ & \multicolumn{1}{c|}{$O(n^2 \cdot 2^n)$}  & $O(n^2 \cdot 2^n)$  \\ \hline
\end{tabular}
\end{table}

\begin{table}[h] 
\caption{Consistency of $\laswag$ and other tractable algorithms. The upper bounds with * are given by an FPT algorithm and a polytime algorithm otherwise. Tight bounds are denoted in bold.}
\label{Consistency-UB}
\centering
\begin{tabular}{|c|cccccccc|}
\hline
\multirow{3}{*}{} & \multicolumn{8}{c|}{\begin{tabular}[c]{@{}c@{}}Consistency\end{tabular}}                                       \\ \cline{2-9} 
                  & \multicolumn{4}{c|}{Closed}                                                   & \multicolumn{4}{c|}{Open}                                \\ \cline{2-9} 
                  & \multicolumn{3}{c|}{Previous work} & \multicolumn{1}{c|}{\textbf{This paper}} &
                  \multicolumn{3}{c|}{Previous work} & \textbf{This paper} \\ \hline
Line              & \phantom{[18]} & \multicolumn{1}{c}{$\boldsymbol{3/2}$} & \multicolumn{1}{c|}{\cite{GouleakisLS22}}           & \multicolumn{1}{c|}{$\boldsymbol{3/2\phantom{^*}}$}                 & \phantom{[18]} & \multicolumn{1}{c}{5/3} & \multicolumn{1}{c|}{\cite{GouleakisLS22}}           & $3/2\phantom{^*}$                 \\ \hline
Star             & \phantom{[18]} & \multicolumn{1}{c}{7/4+$\epsilon$} & \multicolumn{1}{c|}{\cite{tsp_l}}         & \multicolumn{1}{c|}{$\boldsymbol{3/2^*}$}                & \phantom{[18]} & \multicolumn{1}{c}{2} & \multicolumn{1}{c|}{\cite{tsp_l}}             & $3/2^*$               \\ \hline
Tree              & \phantom{[18]} & \multicolumn{1}{c}{2} & \multicolumn{1}{c|}{\cite{AusielloFLST01}}             & \multicolumn{1}{c|}{$\boldsymbol{3/2^*}$}                & \phantom{[18]} & \multicolumn{1}{c}{2} & \multicolumn{1}{c|}{\cite{tsp_l}}             & $3/2^*$                \\ \hline
Ring              & \phantom{[18]} & \multicolumn{1}{c}{5/3} & \multicolumn{1}{c|}{\cite{tsp_l}}           & \multicolumn{1}{c|}{$\boldsymbol{3/2\phantom{^*}}$}                 & \phantom{[18]} & \multicolumn{1}{c}{2} &\multicolumn{1}{c|}{\cite{tsp_l}}             & $\boldsymbol{3/2\phantom{^*}}$                 \\ \hline
Flower            & \phantom{[18]} & \multicolumn{1}{c}{2} & \multicolumn{1}{c|}{\cite{AusielloFLST01}}             & \multicolumn{1}{c|}{$\boldsymbol{3/2^*}$}               & \phantom{[18]} & \multicolumn{1}{c}{2} &\multicolumn{1}{c|}{\cite{tsp_l}}             & $\boldsymbol{3/2^*}$                \\ \hline
\end{tabular}
\end{table}

\subsection{Further related works}

The offline version of the problem, in which the locations and release times are known in advance, has been studied in~\cite{BjeldeHDHLMSSS21, PsaraftisSMK90} for both closed and open variants.
For OLTSP, a 2-competitive algorithm for the closed variant and a 2.5-competitive algorithm for the open variant have been proposed in general metric spaces by~\cite{AusielloFLST01}. Specifically on the line, there exist lower bounds of $1.64$~\cite{AusielloFLST01} and $2.04$~\cite{BjeldeHDHLMSSS21} for the closed and open variants, respectively.

In addition to online TSP, there are several works that have explored learning augmented settings. The online caching problem with predictions was investigated by~\cite{LykourisV21}, and the initial results were improved by~\cite{AntoniadisCE0S20, rohatgi2020near, Wei20}. Adopting the LA approach, algorithms were developed for the ski-rental problem~\cite{DBLP:conf/innovations/0001DJKR20, pmlr-v97-gollapudi19a, NEURIPS2018_73a427ba, WangL20} as well as for scheduling problems~\cite{DBLP:journals/corr/abs-2112-03082, BamasMRS20, Mitzenmacher20, 48659}.
There is also literature on learning augmented algorithms for classical data structures~\cite{DBLP:conf/sigmod/KraskaBCDP18}, bloom filters~\cite{NEURIPS2018_0f49c89d}, routing problems~\cite{evripidis2, megow2, lak2},  online selection and matching problems~\cite{AntoniadisGKK20, DuttingLLV21} and a more general framework of online primal-dual algorithms~\cite{BamasMS20}. There is a survey~\cite{MitzenmacherV20} and an updated list of papers~\cite{website} in this area.	
    \section{Preliminaries}
\label{sec:prelim}	

\paragraph*{Online TSP (OLTSP).}
	The input of OLTSP consists of a metric space $M$ with a distinguished point $\orig$ (the origin), and a set $Q=\{q_1,...,q_n\}$ of $n$ requests. Every request $q_i$ is a pair ($x_i, t_i$), where $x_i$ is a point of $M$ and $t_i \ge 0$ is a real number. 
    We use $t$ to quantify time. The number $t_i$ represents the moment after which the request $q_i$ can be served (release time). A server located at the origin at time $t = 0$, which can move with unit speed, must serve all the requests after their release times with the goal of minimizing the total completion time (makespan).

    We consider a wide class of continuous metric spaces $M$ whose corresponding distance metric $d(x,y)$ is defined as the shortest path from $x\in M$ to $y\in M$ and is continuous in $M$, as in~\cite{AusielloFLST01}. We call this class general metric spaces (or general metrics). The release times for continuous metric spaces can be any non-negative real number.
	
	For the rest of the paper, we denote the total completion time of an online algorithm $\alg$ by $|\alg|$ and that of an optimal (offline) solution $\opt$ by $|\opt|$. We recall that an algorithm $\alg$ is $\ratio$-competitive if on all instances we have $|\alg|\leq \ratio \cdot |\opt|$.

\paragraph*{Learning-augmented algorithms.}

In order to measure the quality of the predictions, we will define a prediction error $\eta$.
LA algorithms have three main properties. We use the formal definitions in \cite{GouleakisLS22} here.
We say that an algorithm is 
\begin{itemize}
    \item  $\alpha$-\textit{consistent}, if it is $\alpha$-\textit{competitive} when $ \eta = 0$, 
    \item $\beta$-\textit{robust}, if it is always $\beta$-\textit{competitive} regardless of $\eta$, and 
    \item $\gamma$-\textit{smooth} for a continuous function $\gamma(\eta)$, if it is $\gamma(\eta)$-\textit{competitive}.
\end{itemize}
In general, if $c$ is the best competitive ratio achievable without predictions, it is desirable to have $\alpha < c$, $\beta \le k \cdot c$ for some constant $k$ and also the function $\gamma$ should increase from $\alpha$ to $\beta$ along with the error $\eta$.

\paragraph*{Our prediction setting.}
Let $Q=\{q_1,\dots,q_n\}$ be the set of requests. As we mentioned, each request $q_i$ has a corresponding release time $t_i$ and a location $x_i$. 
We have a set of predictions $P=\{p_1, \dots, p_n \}$ in which $p_i$ predicts $x_i$, the location of request $q_i$.
The algorithm gets these predictions as well as the number of requests $n$ as an offline input. The actual values of $x_i$ and $t_i$ only become known at time-point $t_i$. 

The predictions' quality can vary and is unknown to the algorithm. We can evaluate the quality by defining a measure $\eta$. Essentially, $\eta$ measures the sum of all the distance between the predicted location and the actual ones normalized by the length of a shortest path serving all the requests.

\begin{definition}[Prediction Error]
The prediction error of an instance is defined by $\eta = \frac{\sum_{i=1}^{n} d(x_i, p_i)}{F}$, where $F$ is the length of a shortest path serving all the requests (and returning to the origin in the closed case).
\end{definition}

 
Note that the prediction error is scale invariant (i.e., it will not change if all the distances are multiplied by a constant factor), and $F$ acts as the normalization factor.

    \section{Oracle-based framework: the SWAG algorithm}\label{subsec:perfectpred}

In this section, we define an oracle-based algorithm, $\swag$, designed for the case of perfect predictions. The oracle provides the algorithm with a set of permutations of the requests. We show that if the oracle satisfies some conditions, then $\swag$ has a competitive ratio of 3/2. 

$\swag$ is actually a (slightly) modified version of the general algorithm in~\cite{tsp_l}. The principle of this latter algorithm is the following: 
\begin{itemize}
    \item First, to wait at $\orig$ until a chosen time $T$. This time $T$ depends both on the requests' locations and on their release times.
    \item Then, to choose a route of serving requests that minimizes some criterion involving the length of the corresponding route and the fraction of it which is released at time $T$, and to follow this route, waiting at unreleased requests.
\end{itemize} 
The calculation of the chosen time $T$ on the first step is done by computing some values on {\it all} the $n!$ permutations of the requests. The improvement we show here is that we can get the same competitive ratio (i.e., 3/2) while considering not the whole set of permutations, but some well chosen (and ideally small) subset that satisfies a certain property. This subset of permutations is given to the algorithm by the oracle. 

Then, based on this framework, to derive an efficient 3/2-competitive algorithm, one only has to build an efficient oracle. We will show for example in Section~\ref{sec:polyTimeAlgos} that for lines or rings, we can devise a polytime oracle building a polysize subset of permutations, leading to a polytime 3/2-competitive algorithm for these metrics.

More formally, we consider $\swag$, which uses the following notation.
For a given order $\sigma$ on the requests (where $\sigma[i]$ denotes the $i$-th request in the order), we denote:
\begin{itemize} \item by $\ell_\sigma$ the length of the route associated to $\sigma$ (starting at $\orig$), i.e., $\ell_\sigma=d(\orig,\sigma[1])+\sum_{j=1}^{n-1}d(\sigma[j],\sigma[j+1])$ in the open case, $\ell_\sigma=d(O,\sigma[1])+\sum_{j=1}^{n-1}d(\sigma[j],\sigma[j+1])+d(\sigma[n],\orig)$ in the closed case; 
\item by $\alpha_{\sigma}(t)$ the fraction of the length of the largest fully released prefix of the route associated to $\sigma$ at time $t$ over $\ell_{\sigma}$. More formally, if all $n$ requests are released, then $\alpha_{\sigma}(t) = 1$ for all $\sigma$, otherwise, if requests $\sigma[1],\dots,\sigma[k-1]$ are released at $t$ but $\sigma[k]$ is not, then the route is fully released up to $\sigma[k]$, and $$\alpha_{\sigma}(t)={\big(d(\orig,\sigma[1])+\sum_{j=1}^{k-1} d(\sigma[j],\sigma[j+1])} \big)/{\ell_\sigma} \enspace.$$
Note that this definition requires $\ell_{\sigma} > 0$. If $\ell_{\sigma} = 0$, i.e., all requests are at $\orig$, we set $\alpha_{\sigma}=1$.
\end{itemize}

\begin{algorithm}[ht]
    \KwIn{Offline: request locations $x_1, \dots, x_n$ \\
	\phantom{\textbf{Input:} }Online: release times $t_1, \dots, t_n$ \\
    \phantom{\textbf{Input:} }Parameter: an oracle $\oracle$ which outputs a set of permutations on requests }
    Call the oracle $\oracle$ to get an initial set $S(0)$ of permutations at $t=0$. Set $S = S(0)$.\\
    \While{true}{
        At each release time $t_i$, request a new set $S(t_i)$ of permutations and update $S = S(t_i)$.
        For every $\sigma \in S$, compute $\ell_{\sigma}$ and $\alpha_{\sigma}(t_i)$.\\
        If $\exists \: \firingPerm \in S$ s.t.\ (1) $t\geq \ell_{\firingPerm}/2$ and (2) $\alpha_{\firingPerm}(t)\geq 1/2$, set $\algoStart=t$ and \textbf{break}.
    }
	At time $\algoStart$:
	\begin{itemize}\item Compute an order $\chosenPerm$ which minimizes, over all orders $\sigma' \in S$ , $(1-\beta_{\sigma'})\ell_{\sigma'}$,\\
    where  $\beta_{\sigma'}=\min\{\alpha_{\sigma'}(\algoStart),1/2\}$.

	\item Follow the tour/path associated to $\chosenPerm$. Serve the
    requests in this order,\\ 
    waiting at a request location if this request is not released.
	\end{itemize}
    \caption{Strategically Wait And Go (SWAG)}
	\label{algo:general}
\end{algorithm}

The algorithm for general metrics in~\cite{tsp_l} considers all possible permutations of requests to find the waiting time $\algoStart$ and a good permutation to follow. With $\swag$, we build on this idea. Crucially, at any point in time $t$, it focuses only on a subset $S$ of permutations to determine $\algoStart$ and the permutation to follow. The key idea behind these subsets is the following. 
To achieve the same competitive ratio, it is sufficient that for any possible permutation $\sigma$, the set $S$ contains a permutation $\sigma'$ that induces a tour/path that is not longer than that of $\sigma$, and its unreleased portion of the tour at time $t$ is not larger than that of $\sigma$.
We will say that $\sigma'$ dominates $\sigma$ and define this notion formally below. Furthermore, since the released parts only change when a new request is released, it is sufficient to only update the subset in that case.

\begin{definition}[Dominating permutation]\label{Def:DominationPermutation} Let $\sigma$ be a permutation of the $n$ requests and $t$ a given time. We define Dom($\sigma,t$) to be the set of permutations that dominate permutation $\sigma$ at time $t$. A permutation $\sigma'\in Dom(\sigma,t)$ if and only if:
    \begin{equation} \label{gen:length}
        \ell_{\sigma'}\leq \ell_{\sigma} \enspace,
    \end{equation}
    and
    \begin{equation} \label{gen:unreleased}
        (1-\alpha_{\sigma'}(t))\ell_{\sigma'}\leq (1-\alpha_{\sigma}(t))\ell_\sigma \enspace.
    \end{equation}
    We also say that $\sigma'$ is a corresponding dominating permutation of $\sigma$ (at time $t$).
\end{definition}

For ease of exposition, we assume that the subsets of permutations are given to the algorithm by an oracle $\oracle$. We will discuss in Sections~\ref{sec:singleExponential} and~\ref{sec:polyTimeAlgos} how to implement such oracles for specific cases.

We show that $\swag$ with an oracle $\oracle$ is $3/2$-consistent if $\oracle$ is a \emph{domination oracle}. Formally, we have the following.

\begin{definition}[Domination oracle]\label{def:oracle}
    An oracle $\oracle$ which outputs at time $t$ a set $S(t)$ of permutations is a domination oracle if 
    \begin{enumerate}
        \item $S(t) \subseteq S(t')$ for every $t \le t'$, and
        \item for all $t$ there exists a permutation $\sigma' \in S(t)$ such that $\sigma' \in \text{Dom}(\sigma_{\opt},t)$. Here, $\sigma_{\opt}$ is the permutation corresponding to the serving order of requests in an optimal solution.
    \end{enumerate}
\end{definition}

\begin{lemma} \label{dominating:general}
    $\swag$ is $3/2$-consistent for both closed and open variants of OLTSP with perfect predictions if it uses a domination oracle.
\end{lemma}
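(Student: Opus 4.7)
The plan is to piggyback on the $3/2$-competitiveness analysis of the general-metric algorithm in \cite{tsp_l}, showing that restricting the search to the oracle's output $S(T)$ (instead of all $n!$ permutations) does not weaken the guarantee, thanks to the Domination property.

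First, I would fix notation: let $\sigma_{\opt}$ denote the serving order of requests in an optimal offline solution, and let $T$ be the time at which the \textbf{while}-loop of $\swag$ breaks. By Definition~\ref{def:oracle}, there is some $\sigma^d\in S(T)$ with $\sigma^d\in \mathrm{Dom}(\sigma_{\opt},T)$. Unfolding Definition~\ref{Def:DominationPermutation}, conditions \eqref{gen:length} and \eqref{gen:unreleased} give
\begin{equation*}
\ell_{\sigma^d}\leq \ell_{\sigma_{\opt}} \quad\text{and}\quad (1-\alpha_{\sigma^d}(T))\,\ell_{\sigma^d} \leq (1-\alpha_{\sigma_{\opt}}(T))\,\ell_{\sigma_{\opt}}.
\end{equation*}

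Next, I would derive the key inequality $(1-\beta_{\sigma_1})\,\ell_{\sigma_1} \leq (1-\beta_{\sigma_{\opt}})\,\ell_{\sigma_{\opt}}$. Observing that $(1-\beta_\sigma)\ell_\sigma = \max\{(1-\alpha_\sigma(T))\,\ell_\sigma,\, \ell_\sigma/2\}$ by the definition of $\beta_\sigma$, taking the termwise maximum of the two inequalities above yields $(1-\beta_{\sigma^d})\,\ell_{\sigma^d} \leq (1-\beta_{\sigma_{\opt}})\,\ell_{\sigma_{\opt}}$. Since $\sigma_1$ is chosen to minimize $(1-\beta_\sigma)\ell_\sigma$ over $S(T)$ and $\sigma^d\in S(T)$, the desired inequality follows.

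Having established this, the remaining steps are a direct transcription of the analysis in \cite{tsp_l}. I would combine the firing condition $T\geq \ell_{\sigma_0}/2$ and $\alpha_{\sigma_0}(T)\geq 1/2$ with the inequality above to bound $|\swag|$ by $T+\ell_{\sigma_1}$ (absorbing any in-tour waits via the release-time lower bound $|\opt|\geq t_i$), apply the length lower bound $|\opt|\geq \ell_{\sigma_{\opt}}$, and conclude $|\swag|\leq \tfrac32\,|\opt|$ uniformly in the closed and open variants (the distinction being encoded inside the definitions of $\ell_\sigma$ and $\alpha_\sigma$).

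The main technical point, or rather the core verification of this reduction, is the max-combination step: while the two conditions \eqref{gen:length} and \eqref{gen:unreleased} are natural, one has to check with some care that the termwise max preserves the inequality, so that the resulting bound involves $\beta$ (which is what the algorithm optimizes) rather than $\alpha$. Once this is settled, the rest reduces to the corresponding calculation in \cite{tsp_l}, with $\sigma^d$ playing the role of $\sigma_{\opt}$ in the comparison with $\sigma_1$, and no new ideas are needed.
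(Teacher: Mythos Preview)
Your max-combination observation that $(1-\beta_{\sigma^d})\ell_{\sigma^d}\leq (1-\beta_{\sigma_{\opt}})\ell_{\sigma_{\opt}}$ is correct, but the assertion that ``the remaining steps are a direct transcription of the analysis in \cite{tsp_l}'' hides the real difficulty. The crux of the \cite{tsp_l} proof is the inequality $|\opt|\geq T+(1-\beta_{\sigma_1})\ell_{\sigma_1}$ (Equation~\eqref{eq:gen2'} here), and establishing it requires a minimality-of-$T$ contradiction: if $\opt$ were strictly past the midpoint of its route at $T-\epsilon$, then $\sigma_{\opt}$ itself would have fired at $T-\epsilon$, contradicting the definition of $T$. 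That argument relies on $\sigma_{\opt}$ being in the algorithm's search space. In $\swag$ it is not, so this step does not transcribe.

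Replacing $\sigma_{\opt}$ by a dominator $\sigma^d$, as you suggest, does not immediately rescue the contradiction: from $(1-\alpha_{\sigma^d})\ell_{\sigma^d}\leq (1-\alpha_{\sigma_{\opt}})\ell_{\sigma_{\opt}}<\ell_{\sigma_{\opt}}/2$ you \emph{cannot} deduce $\alpha_{\sigma^d}\geq 1/2$, because $\ell_{\sigma^d}$ may be much smaller than $\ell_{\sigma_{\opt}}$. The paper's proof handles exactly this by a further case split on $\alpha_{\sigma'(T-\epsilon)}(T-\epsilon)$: if it exceeds $1/2$ the contradiction goes through; if not, one uses $|\opt|\geq T-\epsilon+(1-\beta_{\sigma'(T-\epsilon)})\ell_{\sigma'(T-\epsilon)}$ and then invokes Condition~1 of the domination oracle ($S(t)\subseteq S(t')$) to place $\sigma'(T-\epsilon)$ in $S(T)$, so that the minimality of $\sigma_1$ over $S(T)$ applies. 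Your outline never uses Condition~1, and without it the argument breaks. The length bound $|\opt|\geq \ell_{\sigma_{\opt}}$ alone is not enough either: you still need a bound linking $T$ to $|\opt|$ that is at least as strong as \eqref{eq:gen2'}.
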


\begin{proof}
    We denote by $S(t)$ the set of permutations that the algorithm considers at time $t$ (set $S$ in the description of $\swag$) and by $t_q$ the release time of a request $q$. Note that the few times we drop the dependence on $t$ we imply $t=\algoStart$.

    First, we show that the algorithm always terminates since the while loop in $\swag$ always terminates when a domination oracle is used. When all requests have been released we have that $\alpha_{\sigma}(t) = 1$ for every permutation $\sigma$ of requests. From Equation~\eqref{gen:unreleased} and the domination oracle's condition~2 we get that there is $\sigma' \in S(t)$ such that $(1-\alpha_{\sigma'}(t))\ell_{\sigma'}\leq 0$. Then $\alpha_{\sigma'}(t)=1$ (when $\ell_{\sigma'}=0$ we also have that $\alpha_{\sigma'}(t)=1$ from the definition of $\alpha$). So, there is a time $t'$ such that $t' = \ell_{\sigma'}/2$ and the loop terminates no later than $t=t'$.

    At time $\algoStart$, there exists a permutation $\firingPerm \in S(\algoStart)$ with $\algoStart\geq \ell_{\firingPerm}/2$ and $\alpha_{\firingPerm}(\algoStart)\geq 1/2$. Then $\beta_{\firingPerm}=1/2$, and $(1-\beta_{\firingPerm})\ell_{\firingPerm}=\ell_{\firingPerm}/2\leq \algoStart$. By definition of  $\chosenPerm$, we have $(1-\beta_{\chosenPerm})\ell_{\chosenPerm}\leq (1-\beta_{\firingPerm})\ell_{\firingPerm}$. So we get
    \begin{equation}\label{eq:gen1'}
     (1-\beta_{\chosenPerm})\ell_{\chosenPerm}\leq \algoStart\enspace.
    \end{equation}

     Now let us consider the optimal solution $\opt$ and its corresponding permutation $\sigma_{\opt}$. Consider at time $t$ a corresponding dominating permutation of $\opt$, $\sigma'(t) \in S(t)$ (domination oracle's condition~2). Since $\sigma'$ will always refer to a permutation that dominates $\opt$, we drop the $\opt$ in the notation.
    From Equation~(\ref{gen:unreleased}), it follows  that
    \begin{equation*}
        |\opt| \ge \algoStart + (1-\alpha_{\sigma_{\opt}}(\algoStart))\ell_{\sigma_{\opt}} \ge \algoStart + (1-\alpha_{\sigma'(\algoStart)}(\algoStart))\ell_{\sigma'(\algoStart)}\enspace.
    \end{equation*}
        We distinguish two cases for the value of $\alpha_{\sigma'(\algoStart)}(\algoStart)$.
        
        If $\alpha_{\sigma'(\algoStart)}(\algoStart) \le 1/2$, then  $\beta_{\sigma'(\algoStart)}(\algoStart) = \alpha_{\sigma'(\algoStart)}(\algoStart)$, and 
            \begin{equation*}
                |\opt| \ge \algoStart + (1-\alpha_{\sigma'(\algoStart)}(\algoStart))\ell_{\sigma'(\algoStart)} = \algoStart + (1-\beta_{\sigma'(\algoStart)}(\algoStart))\ell_{\sigma'(\algoStart)}\enspace.
            \end{equation*}
            Since $\chosenPerm$ is a minimizer for $(1-\beta)\ell$ at $\algoStart$ we get
            \begin{equation*}
                   |\opt| \ge \algoStart+(1-\beta_{\chosenPerm})\ell_{\chosenPerm}\enspace.
            \end{equation*}

        Otherwise, if $\alpha_{\sigma'(\algoStart)}(\algoStart) > 1/2$, then  $\beta_{\sigma'(\algoStart)}(\algoStart) = 1/2$. We study two sub-cases.

        \begin{itemize}
            \item \textbf{Case 1:} $\algoStart \neq t_q$ for every request $q$. Since the algorithm updates set $S(t)$ only when there is a release of a new request, the corresponding dominating permutation of $\opt$, $\sigma'(\algoStart)$, is the same at $\algoStart-\epsilon'$, $\forall \epsilon' \in (0,\epsilon_1]$ for a sufficiently small $\epsilon_1 > 0$ (we can choose $\epsilon_1$ such that there is no release of a new request in $[\algoStart-\epsilon',\algoStart]$, $\forall \epsilon' \in (0,\epsilon_1]$). So, we have that $\sigma'(\algoStart)  = \sigma'(\algoStart-\epsilon')$, $\forall \epsilon' \in (0,\epsilon_1]$.

            We now look at the position of $\opt$ at time $\algoStart$ in its route. Suppose that it is (strictly) on the second half of this route. Then, at $\algoStart-\epsilon$ (for a sufficiently small $\epsilon_1 \ge \epsilon>0$), it was already on the second part. But then $\algoStart-\epsilon \ge \ell_{\sigma_{\opt}}/2$ (as $\opt$ has already visited half of the route). So, from Equation~(\ref{gen:length}) we get $\algoStart-\epsilon \ge \ell_{\sigma_{\opt}}/2 \ge \ell_{\sigma'(\algoStart-\epsilon)}/2$. Moreover, since in $[\algoStart-\epsilon, \algoStart]$ there is no release of a new request, we have that 
            $\alpha_{\sigma'(\algoStart-\epsilon)}(\algoStart-\epsilon) = \alpha_{\sigma'(\algoStart-\epsilon)}(\algoStart) > 1/2$.         
            Then $\algoStart-\epsilon$ would satisfy the two conditions for the starting time of the algorithm, a contradiction with the definition of $\algoStart$. Consequently,  at $\algoStart$, $\opt$ is in the first half of its route. So $|\opt|\ge \algoStart+\ell_{\sigma_{\opt}}/2 \ge \algoStart + \ell_{\sigma'(\algoStart)}/2 = \algoStart + (1-\beta_{\sigma'(\algoStart)}(\algoStart))\ell_{\sigma'(\algoStart)}$. Since $\chosenPerm$ is a minimizer for $(1-\beta)\ell$ at $\algoStart$, we get $|\opt| \ge \algoStart+(1-\beta_{\chosenPerm})\ell_{\chosenPerm}$.

            \item \textbf{Case 2:} $\algoStart = t_q$ for some request $q$. We can now find a (sufficiently small) $\epsilon_1 > 0$ such that there is no release of a new request in $[\algoStart-\epsilon_1,\algoStart)$. We can then find another (sufficiently small) $\epsilon_2$ with $0 < \epsilon_2 \le \epsilon_1$ such that $\opt$ is only in the first half of its route in $[\algoStart-\epsilon_2,\algoStart)$ or only on the second half of its route in $[\algoStart-\epsilon_2,\algoStart)$. Let $\epsilon \in (0,\epsilon_2]$. We now look at the position of $\opt$ at time $\algoStart-\epsilon$ in its route.

            \begin{itemize}
                \item If $\opt$ is in the first half at $\algoStart-\epsilon$, then $|\opt| \ge \algoStart -\epsilon+\ell_{\sigma_{\opt}}/2 \ge \algoStart - \epsilon + \ell_{\sigma'(\algoStart)}/2 = \algoStart-\epsilon + (1-\beta_{\sigma'(\algoStart)}(\algoStart))\ell_{\sigma'(\algoStart)}$. The inequality holds for every choice of $\epsilon \in (0,\epsilon_2]$, so taking the limit $\epsilon \to 0$ we get $|\opt|\ge \algoStart + (1-\beta_{\sigma'(\algoStart)}(\algoStart))\ell_{\sigma'(\algoStart)}$. Since $\chosenPerm$ is a minimizer for $(1-\beta)\ell$ at $\algoStart$, we get
                $|\opt| \ge \algoStart+(1-\beta_{\chosenPerm})\ell_{\chosenPerm}$.

                \item If $\opt$ is (strictly) on the second half of its route at $\algoStart-\epsilon$, then consider a corresponding dominating permutation of $\opt$ at $\algoStart-\epsilon$ that is contained in $S(\algoStart-\epsilon)$, $\sigma'(\algoStart-\epsilon)$.
                    
                Suppose that $\alpha_{\sigma'(\algoStart-\epsilon)}(\algoStart-\epsilon) > 1/2$. Then, since $\opt$ is (strictly) on the second half at $\algoStart-\epsilon$, we have that $\algoStart - \epsilon \ge \ell_{\sigma_{\opt}}/2$ (as $\opt$ has already visited half of the route). So, from Equation~\eqref{gen:length} we get $\algoStart-\epsilon \ge \ell_{\sigma_{\opt}}/2 \ge \ell_{\sigma'(\algoStart-\epsilon)}/2$. Then $\algoStart-\epsilon$ would satisfy the two conditions for the starting time of the algorithm, a contradiction with the definition of $\algoStart$. So, if $\opt$ is (strictly) on the second half of its route at $\algoStart-\epsilon$, it holds that $\alpha_{\sigma'(\algoStart-\epsilon)}(\algoStart-\epsilon) \le 1/2$.

                So, $\alpha_{\sigma'(\algoStart-\epsilon)}(\algoStart-\epsilon) \le 1/2$ and $\beta_{\sigma'(\algoStart-\epsilon)}(\algoStart-\epsilon) = \alpha_{\sigma'(\algoStart-\epsilon)}(\algoStart-\epsilon)$. From Equation~\eqref{gen:length} we have that 
                \begin{align*}
                    |\opt| &\ge \algoStart -\epsilon + (1-\alpha_{\sigma_{\opt}}(\algoStart-\epsilon))\ell_{\sigma_{\opt}} \\
                    &\ge \algoStart-\epsilon+(1-\alpha_{\sigma'(\algoStart-\epsilon)}(\algoStart-\epsilon))\ell_{\sigma'(\algoStart-\epsilon)} \\
                    &= \algoStart-\epsilon+(1-\beta_{\sigma'(\algoStart-\epsilon)}(\algoStart-\epsilon))\ell_{\sigma'(\algoStart-\epsilon)}\enspace.
                \end{align*}
                Note that at time $\algoStart$ the set $S(\algoStart)$ also contains the permutation $\sigma'(\algoStart-\epsilon)$ that dominates $\sigma_{\opt}$ at  $\algoStart-\epsilon$ (domination oracle's condition~1). So, $\alpha_{\sigma'(\algoStart-\epsilon)}(\algoStart-\epsilon) \le \alpha_{\sigma'(\algoStart-\epsilon)}(\algoStart)$ and $\beta_{\sigma'(\algoStart-\epsilon)}(\algoStart-\epsilon) \le \beta_{\sigma'(\algoStart-\epsilon)}(\algoStart)$, for every $\epsilon \in (0,\epsilon_2]$ ($\alpha$ and $\beta$ are non decreasing functions of time for a fixed permutation). Thus, we get $                    |\opt| \ge \algoStart-\epsilon+(1-\beta_{\sigma'(\algoStart-\epsilon)}(\algoStart))\ell_{\sigma'(\algoStart-\epsilon)}$.
                This implies $
                    |\opt| \ge \algoStart-\epsilon+(1-\beta_{\chosenPerm})\ell_{\chosenPerm}$ since $\chosenPerm$ is a minimizer for $(1-\beta)\ell$ at $\algoStart$.
                As the above inequality holds for every $\epsilon \in (0,\epsilon_2]$, for $\epsilon \to 0$ it follows that
                $
                    |\opt| \ge \algoStart+(1-\beta_{\chosenPerm})\ell_{\chosenPerm}
                $.
            \end{itemize}
        \end{itemize}

    Consequently, in all cases we have that
    \begin{equation}\label{eq:gen2'}
         |\opt| \ge \algoStart+(1-\beta_{\chosenPerm}(\algoStart))\ell_{\chosenPerm}\enspace.
    \end{equation}

    Now we look at the value $|\alg|$ of the solution $\alg$ output by the algorithm. We distinguish two cases:
    \begin{itemize}
        \item If $\alg$ does not wait after $\algoStart$, it holds that $|\alg|=\algoStart+\ell_{\chosenPerm}$. By definition $\beta_{\chosenPerm}(\algoStart)\leq 1/2$, so
        $|\alg|\leq \algoStart+2(1-\beta_{\chosenPerm}(\algoStart))\ell_{\chosenPerm}$. Adding Equations~(\ref{eq:gen1'}) and~(\ref{eq:gen2'})
        with coefficients $1/2$ and $3/2$, we get
        $\algoStart+2(1-\beta_{\chosenPerm}(\algoStart))\ell_{\chosenPerm}\leq 3|\opt|/2$. 
        Hence, $|\alg|\leq 3|\opt|/2$.  
        \item Otherwise, $\alg$ waits after $\algoStart$ for some request to be released. Let $t^*$ be the last time $\alg$ waits. As a fraction $\alpha_{\chosenPerm}(\algoStart)$ of $\chosenPerm$ is completely released at $\algoStart$ (i.e., when $\alg$ starts), $\alg$ has distance at most  $(1-\alpha_{\chosenPerm}(\algoStart))\ell_{\chosenPerm}$ to perform after $t^*$. So
        \begin{equation}\label{eq:gen3'}
         |\alg|\leq t^*+(1-\alpha_{\chosenPerm}(\algoStart))\ell_{\chosenPerm} \leq t^*+(1-\beta_{\chosenPerm}(\algoStart))\ell_{\chosenPerm}\enspace,
        \end{equation}
        where we use the fact that, by definition,  $\beta_{\chosenPerm}(\algoStart)\leq \alpha_{\chosenPerm}(\algoStart)$.
        We have $t^*\leq |\opt|$, as a request is released at $t^*$. Adding Equations~(\ref{eq:gen1'}) and~(\ref{eq:gen2'}) gives $2(1-\beta_{\chosenPerm}(\algoStart))\ell_{\chosenPerm}\leq |\opt|.$ Putting these two inequalities in Equation~(\ref{eq:gen3'}) gives $|\alg|\leq 3|\opt|/2$.
    \end{itemize}
\end{proof}

Let us consider now the running time of $\swag$ and prove the following lemma. 

\begin{lemma} \label{general:time}
 If $N$ is the maximum number of permutations which the oracle $\oracle$ outputs at each time $t$ and $T_\oracle$ is the total time required (by the oracle) to compute all permutations, then the running time of $\swag$ is $O(\max\{n^2\cdot N, T_\oracle\})$.
\end{lemma}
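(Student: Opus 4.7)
The plan is to account for the computational work done in each phase of $\swag$'s execution and sum the contributions. The key observation is that all non-trivial computation takes place either inside the oracle, or at one of the (at most $n$) release times, or at the final instant $\algoStart$; at each such moment we process each of the $\le N$ permutations in $S$ using $O(n)$ time per permutation.

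First I would count oracle invocations: the oracle is consulted once at $t=0$ and once at each of the at most $n$ release times, so at most $n+1$ calls in total, contributing $T_\oracle$ cumulatively by hypothesis. Second, at each release time $t_i$, for every $\sigma \in S(t_i)$, $\swag$ computes $\ell_\sigma$ (a sum of at most $n+1$ pairwise distances) and $\alpha_\sigma(t_i)$ (by scanning $\sigma$ until the first unreleased request, summing prefix distances, and dividing by $\ell_\sigma$). Each requires $O(n)$ arithmetic operations, assuming $O(1)$ distance queries in the metric, so the per-release cost is $O(nN)$, and the aggregate over all releases is $O(n^2 N)$.

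Third, checking the break conditions $t \ge \ell_\firingPerm/2$ and $\alpha_\firingPerm(t) \ge 1/2$ at each release reduces to an $O(N)$ scan using the quantities already computed, for a cumulative $O(nN)$. At time $\algoStart$, selecting $\chosenPerm$ as the minimizer of $(1-\beta_{\sigma'})\ell_{\sigma'}$ over $\sigma' \in S$ is another $O(N)$ pass, and executing the chosen tour afterwards requires $O(n)$ bookkeeping decisions (move to the next location or wait). Summing everything gives a running time of $O(T_\oracle + n^2 N + nN + N + n) = O(\max\{n^2 N, T_\oracle\})$, as claimed.

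The only delicate points worth flagging in the actual write-up are that (a) the bound $|S(t_i)| \le N$ applies uniformly at every update, so per-release work never accumulates beyond $N$ permutations, and (b) the cost of delivering the oracle's output to the algorithm is already folded into $T_\oracle$, so the subsequent $O(nN)$ per-release processing is not a hidden re-parsing. The argument is otherwise a routine aggregation, and I do not expect any substantive obstacle.
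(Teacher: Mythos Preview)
Your proposal is correct and follows essentially the same approach as the paper: both accounts hinge on the fact that at each of the at most $n+1$ update times one processes at most $N$ permutations at $O(n)$ cost apiece, giving $O(n^2 N)$, with the oracle work folded in as $T_\oracle$. Your write-up is more granular (explicitly accounting for the break-condition scan, the final minimization, and tour execution), but these are all dominated by the $O(n^2 N)$ term and the paper omits them for brevity.
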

\begin{proof}
    For every release time $t_i$ and every permutation $\sigma \in S(t_i)$, the algorithm has to compute the length $\ell_{\sigma}$ and the fraction of the associated tour that is released, $\alpha_{\sigma}$. It can do that with $O(n)$ operations and has to repeat this procedure $n+1$ times, once for every release of a new request (and once at $t=0$). If $N$ is the maximum number of permutations which the oracle $\oracle$ outputs, then the total running time of the algorithm without taking the oracle's workload into account is $O(n^2\cdot N)$. Moreover, if we assume that for all calls of the oracle $\oracle$ the total computational time required by $\oracle$ to compute all permutations is $T_\oracle$, then the total running time of $\swag$ is $O(\max\{n^2\cdot N, T_\oracle\})$. 
\end{proof}

    \section{Performance guarantees for LA-OLTSP: the LA-SWAG algorithm}\label{subsec:smooth}

In this section, we deal with imperfect predictions; more specifically, we show how to adapt $\swag$ to also get smoothness and robustness upper bounds. We first present the algorithm, then the consistency-smoothness analysis, and finally the robustness analysis.

\begin{algorithm}[ht]
	\KwIn{Offline: predicted request locations $p_1, \dots, p_n$\\ 
	\phantom{\textbf{Input:} }Online: release times $t_1, \dots, t_n$, true request locations $x_1,\dots,x_n$ \\
    \phantom{\textbf{Input:} }Parameter: an oracle $\oracle$ which outputs a set of permutations on requests} 

    \medskip
    
    (Breaking rule) At any time $t$: if all requests are released, follow an optimal path serving all unserved requests (returning to $\orig$ if in the closed variant) and \textbf{break}.\\

    \medskip
    
    Run $\swag$ until the starting time $\algoStart$ of the server, and the computation of $\chosenPerm$.\\ 

    \medskip
    
    At time $\algoStart$, follow the tour/path $\chosenPerm$, serving the requests in the following order:\\
    \For{$i=1,\dots,n$}{
    \begin{itemize}
        \item first go to $p_{\chosenPerm[i]}$; if  $q_{\chosenPerm[i]}$ is not released, wait there until it is released;
        \item then, go to $x_{\chosenPerm[i]}$ and serve the request.
    \end{itemize}}
     In the closed version, go back to $\orig$. 

    \caption{Learning-Augmented Strategically Wait And Go (LA-SWAG) }
	\label{algo:generalSmoothRobust}
\end{algorithm}

We note that if the predictions are perfect, then $\laswag$ is at least as good as $\swag$ (it works the same, the only difference being that it optimally serves the remaining unserved requests when everything is released). So in particular it is 3/2-consistent as well, provided that the oracle satisfies the conditions of Lemma~\ref{dominating:general}. 

We also note that the algorithm could serve the request in a more clever way: instead of going first to the predicted location of a request and then to its true location, the algorithm could go to the true location directly if it is released (or as soon as it is). However, the lower bounds we will provide still hold under this modification. 

Regarding the running time of $\laswag$, it is trivial to see that if $T_{TSP}$ is the time required for the computation of an optimal path serving a subset of the requests, then from Lemma~\ref{general:time} we get the following.

\begin{corollary} \label{general:time-2}
 If $N$ is the maximum number of permutations which the oracle $\oracle$ outputs at each time $t$, $T_\oracle$ is the total time required (by the oracle) to compute all permutations and $T_{TSP}$ is the time for the computation of an optimal path that serves a subset of the requests, then the running time of $\laswag$ is $O(\max\{n^2\cdot N, T_\oracle, T_{TSP}\})$.
\end{corollary}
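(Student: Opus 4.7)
The plan is to bootstrap the runtime analysis from Lemma~\ref{general:time} and then bound the only extra operations that $\laswag$ performs on top of $\swag$. First I would observe that the second step of $\laswag$, namely running $\swag$ up through the determination of the starting time $\algoStart$ and the chosen permutation $\chosenPerm$, takes time $O(\max\{n^2\cdot N, T_\oracle\})$ directly by Lemma~\ref{general:time}, since this is literally the cost of executing $\swag$ using the same oracle.

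Next I would account for the overhead introduced by the two features that distinguish $\laswag$ from $\swag$. The first feature is the breaking rule on the first line: at the (unique) moment when the last request is released, $\laswag$ computes an optimal path serving the remaining unserved requests (closing it back to $\orig$ in the closed variant). This computation is invoked at most once during the entire execution and costs $T_{TSP}$ by definition, contributing an additive $O(T_{TSP})$ term. The second feature is the modified traversal in the loop, where the server first goes to $p_{\chosenPerm[i]}$, possibly waits, and then moves to $x_{\chosenPerm[i]}$. Each iteration of this loop is a constant number of arithmetic/geometric operations on a point of $M$, giving $O(n)$ overall for the traversal, which is absorbed into $O(n^2\cdot N)$.

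Combining the three contributions $O(\max\{n^2\cdot N, T_\oracle\})$, $O(T_{TSP})$, and $O(n)$ yields the stated bound $O(\max\{n^2\cdot N, T_\oracle, T_{TSP}\})$. There is no genuine obstacle here; the only thing worth being careful about is pointing out that the breaking rule fires at most once (so we pay $T_{TSP}$ only once, not once per release), and that all per-request constant-time bookkeeping is already subsumed by the $n^2\cdot N$ term coming from $\swag$. The conclusion then follows immediately as a corollary of Lemma~\ref{general:time}.
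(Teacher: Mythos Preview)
Your proposal is correct and follows the same approach as the paper: the paper simply states that the corollary is ``trivial to see'' from Lemma~\ref{general:time} once one adds the $T_{TSP}$ cost of the breaking rule, and your write-up just makes this explicit. Your observation that the breaking rule fires at most once and that the traversal loop is $O(n)$ (hence absorbed) is exactly the unstated content behind the paper's one-line justification.
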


Note that, for the cases we consider, we achieve suitable bounds on $T_{TSP}$. We describe how this is this done for each metric space in Section \ref{sec:polyTimeAlgos}.

\subsection{Smoothness}

We now show that $\laswag$ is smooth with respect to the measure of error $\eta$ in the predictions, where we recall that
$\eta = {\sum_{i=1}^{n} d(x_i, p_i)}/{F}$ with $F$ the length of a shortest TSP-tour serving all the requests (and returning to the origin in the closed case), ignoring release times.
For convenience, let $\Delta = \eta F = \sum_{i=1}^{n} d(x_i, p_i)$.

\ConsistencySmoothness*

\begin{proof}
Let $I$ be an instance of the problem with requests $q_1, \dots, q_n$, predictions $p_1,\dots, p_n$ and actual locations $x_1,\dots,x_n$.   
Let us denote by $I_H$ a corresponding (hypothetical) instance of the problem where the predicted locations $p_1,\dots,p_n$ would be the actual ones (so $I_H$ is an instance with perfect predictions). Let us denote by $\alg_{I_H}$ (resp. $\opt_{I_H}$) the solution output by $\swag$ on $I_H$ (resp. an optimal solution on $I_H$). By Lemma~\ref{dominating:general}, we have $|\alg_{I_H}|_{I_H} \le 3/2|\opt_{I_H}|_{I_H}$ since in $I_H$ the predictions are perfect. 

Now, we first claim that $|\alg|_I \le |\alg_{I_H}|_{I_H} + 2\Delta$. As previously noticed, the breaking rule in $\laswag$ cannot increase the value of the solution, so $|\alg|_I\leq |\alg'|_I$ where $\alg'$ is the solution on $I$ that would perform the entire for loop in Step~4 of $\laswag$.  Note that now $\alg_{I_H}$ and $\alg'$ start at the same time $\algoStart$, and serve the requests in the same order $\chosenPerm$. 
Let $t^*$ be the last time $\alg'$ waits on a prediction $p_{\chosenPerm[s]}$ for the actual instance $I$, or $t^*=\algoStart$ if $\alg'$ never waits on a prediction. In the first case,  $\alg_{I_H}$ cannot serve the corresponding request $\chosenPerm[s]$ before $t^*$ (as it is not released). In the second case, $\alg_{I_H}$ does not leave the origin before $t^*$. Thus, denoting for convenience $s=0$, ${\chosenPerm[0]}=0$ and $p_{0}=\orig$ if $\alg'$ never waits at a prediction, we  have for the closed case (just remove the term $d(p_n, \orig)$ for the open case):
\begin{displaymath}
|\alg_{I_H}|_{I_H} \ge t^* + \sum_{i = s}^{n - 1} d(p_{\chosenPerm[i]}, p_{\chosenPerm[i+1]}) + d(p_{\chosenPerm[n]}, \orig) \enspace.
\end{displaymath}
Since the hypothetical algorithm $\alg'$ moves between predictions and requests, we get the following bound on $|\alg'|_I$
\begin{displaymath}
\begin{gathered}
    |\alg'|_I \le t^* + \left(\sum_{i = s}^{n - 1} d(p_{\chosenPerm[i]}, x_{\chosenPerm[i]}) + d(x_{\chosenPerm[i]}, p_{\chosenPerm[i+1]})\right) + d(p_{\chosenPerm[n]}, x_{\chosenPerm[n]}) + d(x_{\chosenPerm[n]}, \orig) \enspace.
\end{gathered}
\end{displaymath}
Using the triangle inequality $d(x_{\chosenPerm[i]}, p_{\chosenPerm[i+1]}) \le d(x_{\chosenPerm[i]}, p_{\chosenPerm[i]}) + d(p_{\chosenPerm[i]}, p_{\chosenPerm[i+1]})$, we get 
\begin{align*}
    |\alg'|_I &\le 
    t^* + \left(\sum_{i = s}^{n - 1} d(p_{\chosenPerm[i]}, p_{\chosenPerm[i+1]}) + 2d(x_{\chosenPerm[i]}, p_{\chosenPerm[i]})\right) + 2d(p_{\chosenPerm[n]}, x_{\chosenPerm[n]}) + d(p_{\chosenPerm[n]}, \orig) \\
    &\le |\alg_{I_H}|_{I_H} + 2\Delta \enspace.
\end{align*}

Now, we shift our attention to the values $|\opt_I|_I$ and $|\opt_{I_H}|_{I_H}$. Our goal will be to show that $|\opt_I|_I \ge |\opt_{I_H}|_{I_H} - 2\Delta$. Since $|\opt_{I_H}|_{I_H} \le |\opt_I|_{I_H}$, it suffices to show $|\opt_I|_I \ge |\opt_I|_{I_H} - 2\Delta$. We can assume that $\opt$ moves to the requests eagerly, following an order $\sigma$, and leaves as soon as they are released. Similarly to before, let $t^*$ be the last time that the tour of $\opt_I$ in the instance $I_H$ induces some waiting, and let $q_{\sigma[s]}$ be that request (or the origin for $s = 0$, which also implies $t^* = 0$). It follows then that 

\begin{displaymath}
    |\opt_I|_{I_H} = t^* + \sum_{i = s}^{n - 1} d(p_{\sigma[i]}, p_{\sigma[i+1]}) + d(p_{\sigma[n]}, \orig)\enspace.
\end{displaymath}

But, because the request $q_{\sigma[s]}$ is unreleased before $t^*$ (the case $s = 0$ implies $t^* = 0$, so the conclusion will still hold), we also have
\begin{align*}
    |\opt_I|_I &\ge t^* + \sum_{i = s}^{n - 1} d(x_{\sigma[i]}, x_{\sigma[i+1]}) + d(x_{\sigma[n]}, \orig) \\
    &\ge t^* + \left(\sum_{i = s}^{n - 1} -d(x_{\sigma[i]}, p_{\sigma[i]}) + d(p_{\sigma[i]}, p_{\sigma[i+1]}) - d(p_{\sigma[i+1]}, x_{\sigma[i+1]})\right) \\*
    &\hspace{22pt}- d(x_{\sigma[n]}, p_{\sigma[n]}) + d(p_{\sigma[n]}, \orig) \\
    &\ge
    |\opt_I|_{I_H} - 2\Delta\enspace,
\end{align*}
where we use the triangle inequality for the second inequality.
The smoothness of the general algorithm follows. We have $|\alg_{I_H}|_{I_H} \le 3/2|\opt_{I_H}|_{I_H}$, which implies $|\alg|_I - 2\Delta \le
    3/2(|\opt_I|_I + 2\Delta)$. This gives us $\frac{|\alg|_I}{|\opt_I|_I} \le 3/2 + 5 \eta.$
\end{proof}

$\laswag$ achieves smoothness that is linear in the error (besides the optimal consistency bound of 3/2). We now show that this linear dependency is necessary for the open case.

\begin{proposition}[Smoothness Lower Bound]\label{smoothnessLowerBound}
No algorithm can have a better competitive ratio than $\left(3/2+\eta/2\right)$ for the open variant on an instance with prediction error $\eta\in [0,1/3]$.
\end{proposition}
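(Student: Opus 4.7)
The plan is to establish this lower bound by an adversarial construction. For each $\eta \in [0, 1/3]$ I will exhibit a family of candidate instances that share identical predictions and release times but differ in the true location of one request by a displacement proportional to $\eta$. A deterministic online algorithm sees only the predictions, so it must commit to a strategy (in particular, to a position at the critical release time) before learning which candidate is the true instance, and the adversary then picks the candidate that is most damaging against this commitment.

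At $\eta = 0$ the candidates coincide, and the construction must already force a $3/2$ ratio with perfect predictions. The natural choice is the known ring-based $3/2$ lower bound from~\cite{tsp_l} alluded to in the introduction: two requests placed symmetrically around the origin (e.g., antipodally on a ring) with a common release time tuned so that the server cannot have reached both sides by that time, so a classical symmetric-split argument lower-bounds the ratio by $3/2$. For $\eta > 0$ I would perturb only one of the requests: its prediction stays at the symmetric position, while its true location is shifted outward by $\Delta = \eta F$ (with $F$ the shortest TSP-path length from the definition of $\eta$) in the direction chosen adversarially to exacerbate whichever side the algorithm has committed to. The optimum, knowing the true location, absorbs this displacement at an extra cost of at most $O(\Delta)$ on top of the base. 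The online algorithm, by contrast, has already approached the predicted position, so reaching the true location costs it roughly an additional $2\Delta$ on top of the base cost. After dividing through, this $2\Delta = 2\eta F$ numerator overhead translates into an additive $\eta/2$ in the ratio (using that $|\opt|$ scales like $F$ in the base instance), and combined with the $3/2$ baseline this gives the claimed $3/2 + \eta/2$.

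The main obstacle will be the fine bookkeeping of the perturbation. The normalisation by $F$ in the definition of $\eta$ means that shifting the true location also changes $F$ itself, and I will need to check that at first order this contributes exactly $+\eta/2$ to the ratio rather than a slightly different linear term (and not merely an $\Omega(\eta)$ lower bound). The restriction $\eta \leq 1/3$ is what I expect to guarantee that $\Delta$ is small enough for the shifted true location to remain in the ``symmetric regime'' of the base construction (e.g., still on the same side of the origin, with the relative order of requests along the optimal tour unchanged), so the clean additive combination of the $3/2$ baseline and the $\eta/2$ correction still holds. Beyond this threshold, the perturbed geometry would cross a boundary of the base construction and a more careful case analysis would be required.
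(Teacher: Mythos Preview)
Your proposal is a plan rather than a proof: you outline an adversarial strategy but never pin down the construction, and you explicitly flag the key bookkeeping (how perturbing a location also perturbs $F$, hence $\eta$) as an obstacle you have not resolved. The paper's argument is fully concrete and structurally different from what you sketch.

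The paper does \emph{not} use a ring. It builds a bespoke six-vertex graph metric: $\orig$ is joined to $A$ and to $B$ by unit-length edges, and $A,B$ are connected by \emph{two} parallel length-$1$ paths $A\text{--}C\text{--}D\text{--}B$ and $A\text{--}E\text{--}F\text{--}B$, with $C,E$ at distance $\epsilon$ from $A$ and $D,F$ at distance $\epsilon$ from $B$. Two requests are predicted at $A$ and $B$. After a symmetry argument at $t=1$ the first is released (correctly) at $A$; then at $t=2-\epsilon$ the second is released at either $D$ or $F$, chosen adversarially \emph{after} seeing the algorithm's position. The two parallel $A$--$B$ paths are the crux: whichever of them the algorithm has partially committed to (or neither), the adversary drops the request on the other, forcing at least one more unit of travel. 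One gets $|\alg|\ge 3-\epsilon$, $|\opt|=2-\epsilon$, $\eta=\epsilon/(2-\epsilon)$, and $(3-\epsilon)/(2-\epsilon)=3/2+\eta/2$ \emph{exactly}, with no first-order approximation and with the range $\eta\in[0,1/3]$ arising cleanly from $\epsilon\in[0,1/2]$.

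Your ring-plus-outward-perturbation idea does not obviously deliver this exact constant: on a ring the two perturbation directions are not decoupled from the algorithm's path in the same way, and you would have to redo the baseline $3/2$ argument jointly with the perturbation. Two concrete things to fix if you pursue your route: (i) parametrize by the displacement $\epsilon$ first and compute $\eta$ afterward (as the paper does) rather than fixing $\eta$ and solving for the displacement---that dissolves the circularity you flagged; and (ii) make the binary adversarial choice genuinely symmetric in the metric, which is exactly what the paper's two parallel paths accomplish and what a single ring arc does not.
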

\begin{proof}
    Let $\eta\in [0,1/3]$, and define $\epsilon$ such that $\eta=\epsilon/(2-\epsilon)$ (i.e., $\epsilon=2\eta/(1+\eta)$). Note that $\epsilon \in [0,1/2]$. Consider the graph represented in Figure~\ref{fig:smoothLB}. 
    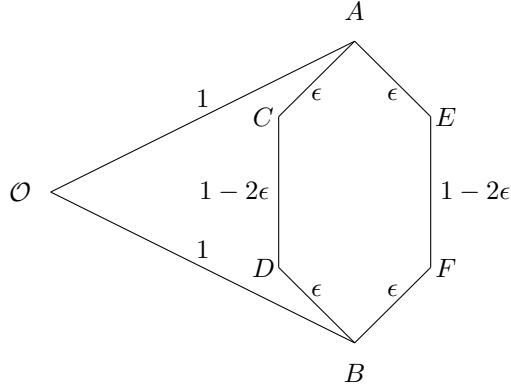
\begin{figure}[ht]
    \begin{center}
\begin{tikzpicture}[scale=2]
	\draw (0,0)-- (2,1) node[midway,above] {$1$};
	\draw (0,0) -- (2,-1)  node[midway,above] {$1$};
	\draw (2,1) -- (1.5,0.5)  node[midway,below] {$\epsilon$};
 	\draw (1.5,0.5) -- (1.5,-0.5)  node[midway,left] {$1-2\epsilon$};
   	\draw (1.5,-0.5) -- (2,-1)  node[midway,above] {$\epsilon$};
    	\draw (2,1) -- (2.5,0.5)  node[midway,below] {$\epsilon$};
 	\draw (2.5,0.5) -- (2.5,-0.5)  node[midway,right] {$1-2\epsilon$};
   	\draw (2.5,-0.5) -- (2,-1)  node[midway,above] {$\epsilon$};
	\node at (-0.2,0.0) (origin) {$\orig$};
 	\node at (2,1.2) (origin) {$A$};
   	\node at (2,-1.2) (origin) {$B$};
  \node at (1.4,0.5) (origin) {$C$};
  \node at (1.4,-0.5) (origin) {$D$};
  \node at (2.6,0.5) (origin) {$E$};
  \node at (2.6,-0.5) (origin) {$F$};  
	\end{tikzpicture}
\end{center}
\caption{An example for the open variant showing a metric space in which the competitive ratio has to scale linearly with $\eta$ for values of the latter up to $1/3$. }\label{fig:smoothLB}
\end{figure}
    
    There are two requests, $q_1$ predicted to be in location $p_1=A$ and $q_2$ predicted to be in $p_2=B$. We assume w.l.o.g.\ that at time $t=1$ the algorithm is on $[O,B]$. Then $q_1$ is released at $t=1$ on its predicted location ($x_1=p_1=A$). Note that at time  $2-\epsilon$, the algorithm must be at distance at least $\epsilon$ from $A$ (in particular, it cannot be on $(C,A]$ or on $(E,A]$). If it is on $[B,E]$, then we release the second request $q_2$ on $D$. Otherwise, we release it on $F$. Suppose w.l.o.g.\ that the first case occurs. Then the algorithm needs at least one unit of time to serve both requests (at $t = 2-\epsilon$ it is not on $(B,C)$ and at distance at least $\epsilon$ from $A$). So $|\alg|\geq 3-\epsilon$. 

    The optimal solution is to serve $q_1$ at $t=1$ and $q_2$ at $t=2-\epsilon.$ Note that $\eta=\frac{\epsilon}{2-\epsilon}$ is indeed the error in the prediction. Then $$|\alg|\geq \frac{3-\epsilon}{2-\epsilon} |\opt|=\left(\frac{3}{2}+\frac{\eta}{2}\right)|\opt|\enspace.$$
\end{proof}

    \subsection{Robustness}

We now deal with the robustness analysis. We first upper bound the robustness of $\laswag$, and then complement this analysis with some lower bounds.

\begin{remark}[3-Robustness]\label{3-robust}
It is easy to see that $\laswag$ is 3-robust (in both the closed and open variants). Indeed, let $t_f$ be the time when the final request is released. Of course $|\opt|\geq t_f$. Then, at $t_f$ the server can be at most $|\opt|$ away from the origin. Since, thanks to the breaking rule, the algorithm serves the remaining requests optimally after $t=t_f$, to return back to the origin and then follow $\opt$ is an upper bound of its cost. So, $|\alg| \le t_f+|\opt|+|\opt| \le 3|\opt|$ and the robustness of the algorithm with predictions is at most $3$.
\end{remark}

We now improve this bound, both for the closed and open cases, both in general metrics and for specific metric spaces. 

Let us first deal with the closed variant.




\paragraph*{Closed variant.}
\begin{lemma} \label{robustness:2.75a}
$\laswag$ is $2.75$-robust for closed OLTSP in general metric spaces.
\end{lemma}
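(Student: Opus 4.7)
The plan is to use the breaking rule to express $|\alg|$ in terms of the last release time $t_f$ and the algorithm's position at $t_f$, and then to derive two complementary bounds on this position whose combination yields the $2.75$ ratio. Let $T$ be the starting time chosen by $\swag$, $P_f$ the algorithm's position at time $t_f$, $d_f = d(P_f, \orig)$, and $\ell^*$ the length of an optimal closed TSP tour on all requests. From $P_f$ a feasible continuation is to return to $\orig$ (at cost $d_f$) and then follow an optimal closed TSP tour (at cost $\ell^*$), which visits all requests in $R_{\alg}$. The breaking rule thus gives $|\alg| = t_f + L \le t_f + d_f + \ell^*$. Since $|\opt| \ge t_f$ and $|\opt| \ge \ell^*$, bounding $d_f$ suffices. (If $t_f \le T$ then $P_f = \orig$ and $|\alg| \le t_f + \ell^* \le 2|\opt|$ immediately, so we assume $t_f > T$.)

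The first, immediate bound is $d_f \le t_f - T$: the server leaves $\orig$ at time $T$ at unit speed, so it has walked at most $t_f - T$ by time $t_f$. The second, more delicate bound is $d_f \le T + \ell^*/2$. To prove it, I would first observe that $\ell_{\chosenPerm} \le 2T$: the condition $T \ge \ell_{\firingPerm}/2$ together with $\alpha_{\firingPerm}(T) \ge 1/2$ gives $(1-\beta_{\firingPerm})\ell_{\firingPerm} \le T$, and by the minimality of $\chosenPerm$, $\ell_{\chosenPerm}/2 \le (1-\beta_{\chosenPerm})\ell_{\chosenPerm} \le T$. Consequently every predicted location $p_j$ satisfies $d(p_j, \orig) \le T$ (taking the shorter direction around the closed tour $\chosenPerm$), while every actual location $x_j$ satisfies $d(x_j, \orig) \le \ell^*/2$, since the optimal closed tour visits $x_j$ and returns. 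The algorithm's walk from $T$ onwards is a concatenation of geodesic segments of the form $\orig \to p_j$, $p_j \to x_j$, $x_j \to p_{j+1}$, or $x_n \to \orig$. For any such segment with endpoints $A, B$ the triangle inequality via $\orig$ gives $d(A,B) \le d(A,\orig) + d(\orig, B) \le T + \ell^*/2$. For $P_f$ on the segment (so $d(P_f, A) + d(P_f, B) = d(A,B)$), summing the two one-sided triangle inequalities $d(P_f, \orig) \le d(P_f, A) + d(A, \orig)$ and $d(P_f, \orig) \le d(P_f, B) + d(B, \orig)$ yields
\[
2 d_f \;\le\; d(A,B) + d(A,\orig) + d(B,\orig) \;\le\; 2T + \ell^*,
\]
i.e., $d_f \le T + \ell^*/2$.

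Combining with $|\alg| \le t_f + d_f + \ell^*$, the first bound gives $|\alg| \le 2 t_f - T + \ell^* \le 3|\opt| - T$, and the second gives $|\alg| \le t_f + T + \tfrac{3}{2}\ell^* \le \tfrac{5}{2}|\opt| + T$. Hence $|\alg| \le \min\{3|\opt| - T,\ \tfrac{5}{2}|\opt| + T\}$, and this minimum is maximized at the crossover $T = |\opt|/4$, where both terms equal $\tfrac{11}{4}|\opt| = 2.75|\opt|$, completing the argument. The main obstacle is the second bound on $d_f$: in a general (non-Euclidean) metric a point on a geodesic can be farther from $\orig$ than either endpoint, so a direct endpoint-based bound fails; the key trick is that averaging the two one-sided triangle inequalities cancels the (potentially large) segment length, which is then controlled through the origin. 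The parallel $2.5$-robust bound for trees and Euclidean spaces claimed earlier in the paper can be recovered by an analogous argument that replaces $T + \ell^*/2$ by $\max\{T, \ell^*/2\}$ using the convexity/tree structure.
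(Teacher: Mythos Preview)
Your proof is correct and follows essentially the same route as the paper. Both arguments bound $|\alg|\le t_f+d_f+|\opt|$ via the breaking rule, establish $\ell_{\chosenPerm}\le 2T$ so that every prediction is within $T$ of $\orig$ while every real location is within $\ell^*/2$, and then use the same ``sum of two triangle inequalities'' trick to bound the server's distance from the origin on a geodesic segment by the sum of the endpoint distances. The only cosmetic difference is that the paper carries out an explicit case split on a parameter $c$ (cases $T\le c|\opt|$ versus $T>c|\opt|$, optimized at $c=1/4$), whereas you keep $T$ as a free variable in the two bounds $|\alg|\le 3|\opt|-T$ and $|\alg|\le \tfrac{5}{2}|\opt|+T$ and take the minimum; the crossover at $T=|\opt|/4$ recovers exactly the paper's choice $c=1/4$.
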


\begin{proof}
Let $\algoStart \ge 0$ be the time computed in Step 2 of  $\laswag$. It depends both on the predicted locations of the requests and the (real) release times. Let $c \in (0,1]$ be a real constant to be determined later. We denote by $\rho$ the ratio $|\alg|/|\opt|$.
\begin{itemize}
    \item If $\algoStart > |\opt|$, then $\alg$ starts following an optimal tour from the origin at time $t=t_f \le |\opt|$ because of the breaking rule. So, $|\alg| \le 2|\opt|$ and 
    \begin{equation} \label{robustness:rho_1}
        \ratio \le 2 \enspace.
    \end{equation}
    
    \item If $0 \le \algoStart \le c \cdot |\opt|$, then there is an order of requests $\firingPerm$ that the general algorithm keeps track of such that $\algoStart \ge \ell'_{\firingPerm}/2$, $\alpha'_{\firingPerm}(\algoStart) \ge 1/2$ and $\beta'_{\firingPerm}(\algoStart)=1/2$, where the parameters with the prime symbol $\ell',\alpha', \beta'$ are computed on the predicted locations and the real release times.

    Let $\chosenPerm$ be the tour that our algorithm will follow at $\algoStart$. Thus, we have that $$\big(1-\beta'_{\chosenPerm}(\algoStart)\big)\ell'_{\chosenPerm} \le \big(1-\beta'_{\chosenPerm}(\algoStart)\big)\ell'_{\chosenPerm} = \ell'_{\chosenPerm}/2 \le \algoStart \enspace.$$
    Since $\beta' \le 1/2$, it holds that ${1-\beta'_{\chosenPerm}(\algoStart)} \ge 1/2$ and $\ell'_{\chosenPerm} \le 2\algoStart$. Hence, with the assumption $\algoStart \le c\cdot |\opt|$ we get that
    \begin{equation} \label{robust_c}
         \ell'_{\chosenPerm} \le 2c \cdot |\opt| \enspace.
    \end{equation}

    After $\algoStart$, the algorithm follows the order of requests $\sigma_1$. It first goes to the predicted location of a request and waits for its release. Then, it goes to the real location of the request and serves it. When all requests are released, the algorithm computes and follows an optimal path that serves the remaining requests and returns to the origin.
    
    The furthest distance from the origin that a predicted request can have is $L_P \le \ell'_{\chosenPerm}/2$. From~\eqref{robust_c}, we get $L_P \le c \cdot |\opt|$. On the other hand, the furthest distance from the origin that a real request can have is $L_R \le \ell_{\opt}/2 \le |\opt|/2$, where $\ell_{\opt}$ is the length of the optimal tour.

    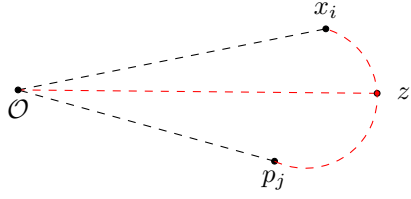
\begin{figure}[ht]
    \begin{center}
    \begin{tikzpicture}[scale=0.9,rotate=270]

		\coordinate (A) at (1,0);
		\coordinate (B) at (0,5);
		\coordinate (O) at ($(A)!0.05!(B)$);
		\coordinate (R) at ($(A)!0.95!(B)$);
		\coordinate (P) at (2,4);
		\coordinate (A1) at (1,5.5);

		\draw[dashed] (O) -- (R);
		\draw[dashed] (P) -- (O);
		\draw[dashed, red] (A1) -- (O);
  
		\draw[fill=black, draw=black] (O) circle (1.2pt);
		\draw[fill=black, draw=black] (R) circle (1.2pt);
		\draw[fill=black, draw=black] (P) circle (1.2pt);
		\draw[fill=red, draw=black] (A1) circle (1.2pt);

		\node[shift={(0,-0.25)}] at (P) {$p_j$};
		\node[shift={(0,-0.25)}] at (O) {$\orig$};
		\node[shift={(0,0.25)}] at (R) {$x_i$};
		\node[shift={(0.35,0)}] at (A1) {$z$};

        \arcThroughThreePoints{P}{A1}{R};

	\end{tikzpicture}
	\end{center}

    \caption{The case for general metrics when the server goes from the furthest predicted location $p_j$ to the location $x_i$ of the furthest (real) request $q_i$ or vice-versa. The algorithm ($z$) is always on the shortest path between $x_i$ and $p_j$.}
    \label{fig:robustness}
    \end{figure}
    
    Now let us consider the furthest distance that $\alg$ can have from the origin at time $t=t_f$. $\alg$ during its tour moves between predicted and real locations of requests until time $t_f$. In the following, we denote the position of $\alg$ at $t_f$ by $z$. The worst-case occurs if $\alg$ goes from the furthest predicted request to the furthest real request or vice-versa (see Fig.~\ref{fig:robustness}). From the triangle inequality, we have that $d(z,\orig) \le d(\orig,x_i)+d(x_i,z)$ and $d(z,\orig) \le d(\orig,p_j)+d(p_j,z)$. Hence, we have $2d(z,\orig) \le d(\orig,x_i)+d(\orig,p_j)+d(x_i,z)+d(p_j,z)$.
     Since the algorithm is on the shortest path between $x_i$ and $p_j$, we get $d(x_i,p_j)=d(x_i,z)+d(p_j,z)$. Using the triangle inequality again it follows that $d(x_i,p_j) = d(x_i,z)+d(p_j,z) \le d(\orig,x_i)+d(\orig,p_j)$.
     Therefore,
     \begin{equation*}
                2d(z,\orig) \le d(\orig,x_i)+d(\orig,p_j)+d(x_i,z)+d(p_j,z) \le 2(d(\orig,x_i)+d(\orig,p_j))\enspace,
     \end{equation*}
     which implies
     \begin{equation} \label{robustness:distance}
            d(z,\orig) \le d(\orig,x_i)+d(\orig,p_j) \le L_R + L_P \le (1/2+c)\cdot |\opt| \enspace.
     \end{equation}
     The strategy that returns to the origin at time $t=t_f$ and then follows $\opt$ is clearly an upper bound of $|\alg|$, and thus from~\eqref{robustness:distance} and the fact that $t_f \le |\opt|$ it follows that $
         |\alg| \le t_f + d(z,\orig) + |\opt| \le (2.5+c)\cdot |\opt| $
     and
     \begin{equation} \label{robustness:rho_2}
         \rho \le 2.5+c \enspace.
     \end{equation}          

    \item Else, $c \cdot |\opt| < \algoStart \le |\opt|.$ Then, $\alg$ is at most $(|\opt|-\algoStart)$ away from the origin at $t=|\opt|$. So, the following strategy is clearly an upper bound of $|\alg|$: an algorithm that is in the origin at $t=|\opt|+|\opt|-\algoStart$ and then follows $\opt$ to serve all requests. Thus, we get
    \begin{equation} \label{robustness:rho_3}
        \ratio \le \frac{|\opt|+|\opt|-\algoStart+|\opt|}{|\opt|} < \frac{(3-c)|\opt|}{|\opt|}=3-c \enspace.
    \end{equation}
\end{itemize}

We can see that the case leading to \eqref{robustness:rho_1} will not be a factor in the determination of the robustness, so we can ignore it. We optimize \eqref{robustness:rho_2} and \eqref{robustness:rho_3} simultaneously. It is easy to see that the best value is attained for $c=0.25$, giving a robustness of $2.75$.
\end{proof}

Moreover, we observe that we can get a better robustness guarantee with almost the same proof for the special metric spaces of Euclidean spaces and trees. More specifically, we get the following lemma.

\begin{lemma}\label{th:2.5rob}
    $\laswag$ is $2.5$-robust for closed OLTSP in Euclidean spaces and in trees. 
\end{lemma}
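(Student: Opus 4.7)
The plan is to reuse the entire skeleton of the proof of Lemma~\ref{robustness:2.75a} (the three cases based on $\algoStart$ versus $|\opt|$) and only refine the bound on $d(z,\orig)$, where $z$ denotes the position of $\alg$ at time $t_f$ when $\algoStart \le c\cdot|\opt|$. In general metrics the bound $d(z,\orig)\le L_R+L_P$ was obtained by the triangle inequality, which is the bottleneck that prevented us from going below $2.75$. The observation is that in Euclidean spaces and in trees, any point $z$ that lies on a shortest path between two points $x_i$ and $p_j$ satisfies the stronger inequality $d(z,\orig)\le \max\{d(\orig,x_i),d(\orig,p_j)\}$.

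For Euclidean spaces, this follows immediately from convexity of the norm: $z$ lies on the straight line segment joining $x_i$ and $p_j$, so $z=(1-\lambda)x_i+\lambda p_j$ and
\[
\|z-\orig\|\le (1-\lambda)\|x_i-\orig\|+\lambda\|p_j-\orig\|\le \max\{\|x_i-\orig\|,\|p_j-\orig\|\}.
\]
For trees, uniqueness of paths is the key. Let $w$ be the point on the (unique) path from $x_i$ to $p_j$ that minimizes the distance to $\orig$; then the paths $\orig$--$x_i$ and $\orig$--$p_j$ both pass through $w$. For any $z$ on the path from $x_i$ to $p_j$, writing $d(z,\orig)=d(z,w)+d(w,\orig)$ and noting that $d(z,w)\le\max\{d(w,x_i),d(w,p_j)\}$, we get $d(z,\orig)\le \max\{d(x_i,\orig),d(p_j,\orig)\}$.

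Plugging this improved bound into Case~2 of the previous proof and using $L_R\le|\opt|/2$ and $L_P\le c\cdot|\opt|$ from equation~\eqref{robust_c}, we obtain $d(z,\orig)\le \max\{1/2,c\}\cdot|\opt|$, and therefore $|\alg|\le t_f+d(z,\orig)+|\opt|\le \bigl(2+\max\{1/2,c\}\bigr)|\opt|$. Combined with Case~3, which still yields $\rho\le 3-c$, the balancing is now different: for $c\le 1/2$ the Case~2 bound is flat at $2.5$, while Case~3 gives $3-c$, so the optimal choice becomes $c=1/2$, yielding $\rho\le 2.5$ in both cases. Case~1 is unaffected and gives $\rho\le 2$.

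The only delicate point is justifying that $z$ really does lie on a shortest path between the pair $(x_i,p_j)$ realizing the maxima $L_R$ and $L_P$ at the moment the algorithm is furthest from $\orig$; this is exactly the geometric setup used in the proof of Lemma~\ref{robustness:2.75a} (see Fig.~\ref{fig:robustness}) and carries over verbatim since it only uses that $\alg$ moves along shortest paths between consecutive visits. Once this is in place, the remainder of the argument is a direct re-optimization of $c$, and the full three-case analysis needs no other modification.
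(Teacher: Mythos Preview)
Your proposal is correct and follows essentially the same approach as the paper's proof: reuse the three-case analysis of Lemma~\ref{robustness:2.75a}, replace the bound $d(z,\orig)\le L_R+L_P$ by $d(z,\orig)\le\max\{L_R,L_P\}$ using the geometry of Euclidean spaces/trees, and re-optimize to $c=1/2$. You actually supply more detail than the paper on \emph{why} the $\max$ bound holds in these two metrics; the paper simply asserts it.

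One small clarification: you need not argue that $z$ lies on a shortest path between the \emph{specific} pair realizing the maxima $L_R,L_P$. At time $t_f$ the server is on a shortest path between \emph{some} prediction $p$ and \emph{some} true location $x$ (or between $\orig$ and a prediction), and your geometric inequality gives $d(z,\orig)\le\max\{d(\orig,x),d(\orig,p)\}$ for that pair; bounding each term by $L_R$ and $L_P$ respectively then yields $d(z,\orig)\le\max\{L_R,L_P\}$. This is what the paper (and your argument) actually use, so the ``delicate point'' you flag is not an issue.
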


\begin{proof}
    The proof is almost the same as the proof of Lemma~\ref{robustness:2.75a}. The only difference is that since it holds that $d(z,\orig) \le \max\{d(\orig,x_i), d(\orig,p_j)\} \le \max\{ c\cdot|\opt|, 1/2\cdot|\opt|\}$ in these metric spaces (as the algorithm is on a shortest path between $x_i$ and $p_j$), we can actually get a better upper bound than that of inequality~\eqref{robustness:rho_2}, namely
    \begin{equation} \label{robustness:rho_4}
        \rho \le \frac{t_f+ \max\{d(\orig,x_i), d(\orig,p_j)\}+|\opt|}{|\opt|} \le 2 + \max\{ 1/2, c\} \enspace.
    \end{equation}
    We can again ignore \eqref{robustness:rho_1}. Considering \eqref{robustness:rho_3} and \eqref{robustness:rho_4}, we get robustness $2.5$ with $c=0.5$.
\end{proof}

From Lemma~\ref{robustness:2.75a} and~\ref{th:2.5rob} we directly get the following theorem for the robustness of $\laswag$ in the closed variant.

\robustnessColsed*

\begin{remark}[Tightness of Analysis for Closed Variant 2.5-Robustness.]\label{tight2.5}
The 2.5-robustness bound of $\laswag$ given in Lemma~\ref{th:2.5rob} is tight even on the line, as shown in the following example. There are two requests. The real request locations are $x_1 = 1, x_2 = 0$, released at times 1 and 2, respectively, and the predictions are $p_1 = 0, p_2 = -1$. Note that at time 1 our algorithm will start moving towards the prediction $p_2$, because half the tour $\orig \rightarrow p_2 \rightarrow p_1 \rightarrow \orig$ is ``released'' and enough time (half of its length) has passed.\footnote{In fact, this is the only permutation that $\laswag$ will be provided with by the oracles we define later. To ensure this, we can perturb $p_1$ to the left of the origin by some arbitrarily small $\epsilon$ and then let $\epsilon \to 0$. Hence, it will definitely choose to follow this permutation.} It will keep moving and reach position $-1$ at time $2$, since $p_2$'s corresponding request is unreleased until that time. Then, after reaching this unfavorable position, $\alg$ can only go back to the origin and copy $\opt$. We see that $|\alg| = 2 + |-1| + 2 = 5$ while $\opt = 2$, so the algorithm has competitive ratio $2.5$. 
\end{remark}


\paragraph*{Open variant.}
In fact, we can also show a better than 3 robustness bound even for the open variant, where only the triangle inequality is assumed. The trick is to consider another option for finishing up, namely to go to the last request served by $\opt$, say $q_\ell$, and then copy $\opt$ \textit{backwards}. It is clear that this copy also takes at most $|\opt|$ time, since there might not be a request at the origin. Hence, it remains to show that we can reach $q_\ell$ \textit{or} the origin relatively fast.

\begin{lemma}\label{th:3-1/6}
    $\laswag$ is $\left(3 - \frac{1}{6} \right)$-robust for open OLTSP in general metric spaces.
\end{lemma}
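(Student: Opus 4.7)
The plan is to mirror the three-case analysis of Lemma~\ref{robustness:2.75a}, replacing the closed-tour quantities by their open-variant analogs and introducing, in the small-$\algoStart$ case, the backwards-copy finishing option suggested by the hint. Fix a threshold constant $c\in(0,1]$ to be tuned at the end. For $\algoStart>|\opt|$ the breaking rule already gives $|\alg|\le t_f+|\opt|\le 2|\opt|$ exactly as in the closed case. For the large case $c|\opt|<\algoStart\le|\opt|$, the same argument as in Lemma~\ref{robustness:2.75a} applies verbatim: at time $|\opt|$ the server has traveled at most $|\opt|-\algoStart$ since leaving $\orig$, so by returning to $\orig$ (time $\le|\opt|-\algoStart$) and then executing an optimal open tour (time $\le|\opt|$), one obtains $|\alg|\le 3|\opt|-\algoStart<(3-c)|\opt|$.

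The interesting case is $\algoStart\le c|\opt|$. As in Lemma~\ref{robustness:2.75a}, $\ell'_{\chosenPerm}\le 2\algoStart\le 2c|\opt|$, so the farthest predicted location satisfies $L_P\le 2c|\opt|$ (without the factor-$1/2$ improvement available in the closed case), while still $L_R\le|\opt|$. Let $z$ denote the server's position at the time $t_f$ of the last release and let $q_\ell$ be the last request in $\opt$'s serving order. The server now has two finishing options:
\begin{itemize}
\item (a) walk to $\orig$ and copy $\opt$ forward, with cost at most $d(z,\orig)+\ell_{\opt}\le d(z,\orig)+|\opt|$;
\item (b) walk to $q_\ell$ and copy $\opt$ in reverse, with cost at most $d(z,q_\ell)+(\ell_{\opt}-d(\orig,\opt[1]))\le d(z,q_\ell)+|\opt|$, since the reverse path need not return to $\orig$.
\end{itemize}
Taking the better of the two, $|\alg|\le t_f+\min\{d(z,\orig),d(z,q_\ell)\}+|\opt|$, with $t_f\le|\opt|$.

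The crux is to upper bound $\min\{d(z,\orig),d(z,q_\ell)\}$ well enough to obtain $|\alg|\le(3-c)|\opt|$ also in this case, so that balancing with the large-$\algoStart$ bound yields the overall ratio $3-c$. For $d(z,\orig)$ I would combine the two bounds available: the triangle-inequality argument of Lemma~\ref{robustness:2.75a} giving $d(z,\orig)\le L_P+L_R\le(1+2c)|\opt|$, together with the travel-budget bound $d(z,\orig)\le t_f-\algoStart$ (the server moves at unit speed from $\orig$ since time $\algoStart$). For $d(z,q_\ell)$ I would exploit that $q_\ell$ is a request on $\opt$'s open path, so $d(\orig,q_\ell)\le\ell_{\opt}\le|\opt|$, and combine this with triangle inequalities through the shortest path on which $z$ lies between the current prediction and its paired real location. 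The hard part is the inevitable trade-off between the two distances: when the server is driven far by a distant real request (so $d(z,\orig)$ is large) the backwards option tends to be cheap, whereas when $z$ stays near $\orig$ option~(a) is already cheap; the geometric argument has to quantify this trade-off tightly enough to absorb the loss of the closed-case factor-$1/2$ savings. Balancing the resulting bound for case~2 against $(3-c)$ from case~3 is what forces $c=1/6$, yielding the claimed $(3-1/6)$-robustness. The anticipated main obstacle is precisely this geometric bound on $\min\{d(z,\orig),d(z,q_\ell)\}$; all other steps are direct adaptations of the closed-case proof.
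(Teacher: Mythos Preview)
Your three-case skeleton and the two finishing options (forward copy from $\orig$, backward copy from $q_\ell$) match the paper exactly, and you correctly isolate the one non-routine step: bounding $\min\{d(z,\orig),d(z,q_\ell)\}$. But you leave precisely that step open, and the individual bounds you propose cannot close it. The bound $d(z,\orig)\le L_P+L_R\le(1+2c)|\opt|$ yields only $|\alg|\le(3+2c)|\opt|$; the travel-budget bound $d(z,\orig)\le t_f-\algoStart$ just reproduces the $(3-c)$ bound you already have from case~3; and routing $d(z,q_\ell)$ through $\orig$ is useless because $d(\orig,q_\ell)$ can itself be as large as $|\opt|$. No combination of these separate bounds produces anything better than $3-c$ on its own, so there is nothing to balance.

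The missing idea is to bound the \emph{sum} $d(z,\orig)+d(z,x_\ell)$ and average. At time $t_f$ the server lies on a shortest path between a prediction $p$ and a true request location $x$, so $d(z,p)+d(z,x)=d(p,x)$. Route each term through its nearby endpoint:
\[
d(z,\orig)+d(z,x_\ell)\;\le\;\bigl(d(z,p)+d(p,\orig)\bigr)+\bigl(d(z,x)+d(x,x_\ell)\bigr)\;=\;d(p,x)+d(p,\orig)+d(x,x_\ell).
\]
Now apply $d(p,x)\le d(p,\orig)+d(\orig,x)$ and the key observation that $x$ is an \emph{actual} request, hence lies on $\opt$'s open path, so $d(\orig,x)+d(x,x_\ell)\le\ell_{\opt}\le|\opt|$. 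This gives
\[
\min\{d(z,\orig),d(z,x_\ell)\}\;\le\;d(p,\orig)+\tfrac12|\opt|\;\le\;\bigl(2c+\tfrac12\bigr)|\opt|,
\]
and hence $|\alg|\le t_f+(2c+\tfrac12)|\opt|+|\opt|\le(2.5+2c)|\opt|$. Balancing $2.5+2c=3-c$ forces $c=1/6$. This averaging step is the entire content of the lemma beyond the closed-case template; everything else in your plan is correct.
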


 \begin{proof}
     
     Note that $\laswag$, at the final release time $t_f$, is on \textit{some} shortest path from a prediction to the location of a request or vice versa (or from the origin to a prediction, which is trivial or can be thought of as a request location). Now, let $p$ be that prediction and $q$ the request with actual location $x$. We do not distinguish whether the movement is made from $p$ to $x$ or vice versa. Consider the path $P_1$, which moves from $z$ (the position of the server at time $t_f$) to $p$ and then to $\orig$, and the path $P_2$ which moves from $z$ to $x$ and then to $x_\ell$, the location of the last request served by $\opt$. Now, let $d_p = d(z, p)$ and $d_x = d(z, x)$. Since $\alg$ is on the shortest path from $p$ to $x$ or vice versa, it holds that $d_p + d_x = d(p, x)$. Note that we have
     \begin{align*}
        min\{|P_1|, |P_2|\} &\le \frac{|P_1| + |P_2|}{2} = \frac{d_p + d_x + d(p, \orig) + d(x, x_\ell)}{2} \\
        &\le  d(p, \orig) + \frac{d(x, x_\ell) + d(x, \orig)}{2} \le \frac{|\opt|}{2} + d(p, \orig)\enspace.
     \end{align*}
     This means that we can finish in at most $(2.5 + \gamma)|\opt|$ overall, where $\gamma = \frac{d(p, \orig)}{|\opt|}$. Now, we can apply the same idea from the proofs for the closed case. That is, let $\algoStart = c \cdot |\opt|$, where $\algoStart$ is the start time of the algorithm. We see that $d(p, \orig) \le 2c \cdot |\opt|$, otherwise the conditions necessary for the algorithm to start would not be satisfied at $\algoStart$. Also, a trivial bound for the performance is $(3 - c)|\opt|$. Hence, on the one hand we have $\frac{|\alg|}{|\opt|} \le 2.5 + 2c$ and on the other $\frac{|\alg|}{|\opt|} \le 3 - c$, and these bounds hold simultaneously. Thus, overall we have that $\frac{|\alg|}{|\opt|} \le 3 - \frac{1}{6}$ by setting $c = \frac{1}{6}$.
     \end{proof}

Also, we can improve the bound when dealing with trees.

\begin{lemma}\label{th:3-1/3}
    $\laswag$ is $\left(3 - 1/3\right)$-robust for open OLTSP in trees.
\end{lemma}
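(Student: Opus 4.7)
Building on the general-metric argument of Lemma~\ref{th:3-1/6}, I plan to exploit a tree-specific geometric fact: if $z$ lies on the unique shortest path between $p$ and $x$ in a tree, then for every third point $w$ one has $d(z,w) \le \max\{d(p,w), d(x,w)\}$, because the closest point of this path to $w$ is a single branching point $u$, and any $z$ on the path satisfies $d(z,w) = d(z,u) + d(u,w)$ with $d(z,u) \le \max\{d(p,u), d(x,u)\}$. This lets us replace the two-segment ``detour'' bounds $|P_1|$, $|P_2|$ from the general-metric proof with the direct distances $d(z,\orig)$ and $d(z,x_\ell)$, opening the way to a sharper robustness ratio.

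As in Lemma~\ref{th:3-1/6}, I would split on $\algoStart$: if $\algoStart > c|\opt|$ with $c = 1/3$, the trivial argument yields $|\alg| \le (3-c)|\opt| = (8/3)|\opt|$. Otherwise, $\algoStart \le |\opt|/3$ and hence $d(p,\orig) \le 2\algoStart \le (2/3)|\opt|$. I would then consider two finishing strategies starting at $z$ at time $t_f \le |\opt|$: (A) travel directly to $\orig$ and copy $\opt$, costing at most $t_f + d(z,\orig) + |\opt|$; (B) travel directly to $x_\ell$ (the last request of $\opt$) and follow $\opt$ in reverse, costing at most $t_f + d(z,x_\ell) + |\opt|$. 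Using $t_f\le |\opt|$, it therefore suffices to prove $\min\{d(z,\orig),\ d(z,x_\ell)\} \le (2/3)|\opt|$.

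The case split is on $d(x,\orig)$. If $d(x,\orig) \le (2/3)|\opt|$, the tree fact with $w = \orig$ gives $d(z,\orig) \le \max\{d(p,\orig), d(x,\orig)\} \le (2/3)|\opt|$, so strategy (A) works. In the harder case $d(x,\orig) > (2/3)|\opt|$, the bound $|\opt| \ge d(\orig,x) + d(x,x_\ell)$ forces $d(x,x_\ell) < (1/3)|\opt|$. Since $d(p,\orig) < d(x,\orig)$, the remaining scenario is that $z$ lies on the sub-path from $\orig$ to $x$ (the alternative gives $d(z,\orig) \le d(p,\orig)$ and reduces to (A)). I parameterize by $s = d(z,\orig)$ and introduce $r = d(\orig,v)$, $w = d(v,x_\ell)$, where $v$ is the point at which the path from $\orig$ to $x_\ell$ diverges from the path from $\orig$ to $x$. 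Then $d(z,x_\ell) = |s-r| + w$, while the tree yields $|\opt| \ge 2d(\orig,x) - r + w$ and $d(\orig,x_\ell) = r+w \le |\opt|$. Combining these, whenever $s > (2/3)|\opt|$, a short arithmetic manipulation (separating $s \ge r$ and $s < r$) shows $|s-r|+w \le (2/3)|\opt|$, so strategy (B) succeeds.

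The main obstacle will be this tree parameterization in the second case: carefully handling the sub-cases $s \ge r$ and $s < r$ without losing constants, and checking that degenerate configurations such as $r=0$ (where $x_\ell$ attaches at $\orig$) are actually precluded by the lower bound $|\opt| \ge 2d(\orig,x)+w$, so that the case $s>(2/3)|\opt|$ forces $w$ and $|s-r|$ to be small enough. Once this is done, combining the two cases with the $(3-c)|\opt|$ trivial bound at $c = 1/3$ yields $|\alg| \le (8/3)|\opt|$, establishing $(3-1/3)$-robustness.
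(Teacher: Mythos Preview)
Your proof is correct, but it takes a more computational route than the paper's. The paper's argument is shorter: it observes that in a tree the server's position $z$ at time $t_f$ lies either on the $\orig$-to-$p$ path (so $d(z,\orig)\le d(p,\orig)\le 2c|\opt|$, giving the bound $2+2c$) or on the $\orig$-to-$x$ path. In the latter case, since $\opt$ starts at $\orig$ and must visit $x$, every point of the $\orig$-to-$x$ path lies on $\opt$'s walk; hence $z$ itself lies on $\opt$'s walk, which it splits into two pieces of total length $|\opt|$. Following the shorter piece (length $\le |\opt|/2$) to an endpoint and then copying $\opt$ costs at most $\tfrac{3}{2}|\opt|$ after $t_f$, giving the bound $2.5$ outright, independent of $c$. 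Together with the trivial bound $3-c$, setting $c=1/3$ yields $8/3$. Your approach replaces the ``$z$ lies on $\opt$'s walk'' observation with an explicit parameterization of the tree via the branch point $v$ and the quantities $r,s,w$, followed by a sub-case analysis on $s\ge r$ versus $s<r$; this also works (your arithmetic is fine, and the $r=0$ worry is not actually an obstacle), but it is heavier and yields only $d(z,x_\ell)\le \tfrac{2}{3}|\opt|$ where the paper gets $\min\{d(z,\orig),d(z,x_\ell)\}\le \tfrac12|\opt|$ essentially for free. The paper's insight is the cleaner one and is precisely what separates the tree case from the general-metric argument of Lemma~\ref{th:3-1/6}.
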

\begin{proof}
    Consider a prediction $p$ and the location $x$ of a request $q$.
    For simplicity, let us assume that $p$ and $x$ are vertices in the tree, otherwise we could simply add vertices at these locations. Note that in this case, when moving from $p$ to $x$ or vice versa, $\alg$ first moves to the lowest common ancestor of $p$ and $x$ and then continues along the path to its destination. This means that at any point through this movement, it is on one of $x$ or $p$'s paths to the origin.  In particular, this is also true at $t = t_f$. If it is on the path to $p$, then the distance to the origin is at most $p$'s distance to the origin. If it is on the path to $x$, then it means that it is also on some point along the path that $\opt$ would take. Thus, it can complete the objective by first moving to either endpoint of the path and then copying it, which takes at most $(1 + 2c)|\opt|$ or $3/2|\opt|$ time in total, depending on the previous case distinction, where $c$ is the fraction of $|\opt|$ waited. Hence, we have the bounds $2.5, 2 + 2c, 3 - c$, and so we get the desired competitive ratio by considering $c = 1/3$. 
\end{proof}

The two previous lemmata give us our main robustness theorem for the open variant.

\robustnessOpen*

\begin{remark}[Tightness of Analysis for Open Variant $\left(3 - \frac{1}{3}\right)$-Robustness.]\label{tight3-1/3}
The $\left(3 - \frac{1}{3}\right)$-robustness bound of $\laswag$ given in Lemma~\ref{th:3-1/3} is tight, even on a line. To see this, we place (on the real line) a prediction at $-1$ and the real request at $1.5$, released at $t = 1.5$. Note that at time $t = 0.5$, our algorithm embarks on its path to the prediction. It reaches $-1$ at $t = 1.5$. This means that it cannot finish before $t = 4$, since it has to travel from $-1$ to $1.5$, starting at $t = 1.5$. However, the optimal time is $1.5$, yielding a competitive ratio of $8/3 = 3 - 1/3$.
\end{remark}

\paragraph*{Consistency-Robustness tradeoff lower bound for the open variant.}

We conclude this section by showing a consistency-robustness tradeoff lower bound for the open variant.

\begin{proposition}\label{tradeoff}
    Let $\alg$ be an algorithm for the open variant with consistency guarantee $2 - \lambda$, where $\lambda \in [0, 1]$. Then, $\alg$ cannot be $\left(2 + \lambda - \epsilon\right)$-robust, for any $\epsilon > 0$. 
\end{proposition}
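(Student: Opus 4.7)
The plan is to exhibit two sister instances on the real line, each with a single request, sharing the same offline prediction but differing in the actual location of the request. The strategy exploits the fact that the algorithm's trajectory before the release time is determined entirely by the prediction, so it coincides on both instances; the performance bounds then force contradictory constraints on the server's position at that time.

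Concretely, both instances have one predicted location $p_1 = 1$ and a single release time $t_1 = 1$. In the first instance $I_0$ (perfect prediction, $\eta = 0$), the actual location is $x_1 = 1$. In the second instance $I_1$, the actual location is $x_1 = -1$. A straightforward offline argument shows that $|\opt| = 1$ in both cases, since the offline optimum can depart from the origin immediately and arrive at $x_1$ precisely at the release time; moreover, in the open variant it may then stop.

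Since $\alg$ receives the same offline input (the single prediction $p_1 = 1$) and observes no release before time $1$, its position $y \in [-1, 1]$ at time $1$ is the same on both instances. On $I_0$, $\alg$ still needs $1 - y$ units of time to reach $x_1 = 1$, giving makespan $2 - y$; the consistency bound therefore requires $2 - y \leq 2 - \lambda$, i.e.\ $y \geq \lambda$. On $I_1$, $\alg$ still needs $y + 1$ units of time to reach $x_1 = -1$, giving makespan $2 + y$; assuming for contradiction that $\alg$ is $(2 + \lambda - \epsilon)$-robust would require $2 + y \leq 2 + \lambda - \epsilon$, i.e.\ $y \leq \lambda - \epsilon$. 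These two inequalities are incompatible for every $\epsilon > 0$, yielding the desired contradiction.

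The argument is essentially free of obstacles; the only subtlety is justifying that the algorithm's position at time $1$ is identical across the two instances, which follows immediately from the fact that no release occurs before $t_1 = 1$ and the prediction is the same in both. Note also that the range $\lambda \in [0,1]$ is handled uniformly: the endpoint $\lambda = 0$ recovers the folklore lower bound of $2$ on the competitive ratio of any (open) online algorithm, while $\lambda = 1$ forces $y = 1$ on $I_0$ and hence makespan $3$ on $I_1$, matching the trivial robustness bound of $3$.
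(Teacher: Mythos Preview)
Your proof is correct and follows essentially the same approach as the paper's: a single predicted request on the line, together with a mirrored actual location, forces incompatible bounds on the server's position at the release time. The only difference is that you place the prediction at $+1$ and the adversarial request at $-1$, whereas the paper does the reverse; the arguments are otherwise identical.
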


\begin{proof}
    We show this on the line. We place a prediction at $-1$ and the real request at $1$, which will be released at $t=1$. Since $\alg$ is $\left(2 - \lambda\right)$-consistent, it must be located at a position not strictly to the right of $-\lambda$ at $t = 1$, otherwise it could not reach the prediction by time $2 - \lambda$, which would be a contradiction to $\alg$'s consistency if the prediction was correct and released at $t = 1$. Therefore, $\alg$ can only reach the real request at or later than $t = 2 + \lambda$, which concludes the proof since $|\opt| = 1$.
\end{proof}

Note that the above implies that the best possible robustness for any $1.5$-consistent algorithm is $2.5$, placing our algorithm relatively close to the underlying Pareto front.

    \section{A single-exponential time algorithm for general metrics} \label{sec:singleExponential}

In the case of perfect predictions, using $\swag$ and Lemma~\ref{dominating:general} we can reduce the factorial running time of the algorithm in~\cite{tsp_l} for general metrics. We will show that an exponential number of permutations suffices to have a domination oracle. After that, using the scheme of $\laswag$ and the theorems proved in the previous section we will get an exponential-time algorithm that is $3/2$-consistent, smooth and robust.

At each time $t$ there is a set $R(t)$ of released requests and a set $U(t)$ of unreleased requests. For ease of explanation we refer to the closed case here, but the open case is similar. Consider an arbitrary permutation $\sigma$ of the $n$ requests at time $t$. A corresponding dominating tour for this permutation can be constructed as follows. First, find the first unreleased request $u \in U(t)$ in the permutation $\sigma$. The released requests before $u$ in the permutation form a subset $R'(t) \subseteq R(t)$ of released requests. Then, a corresponding general dominating tour would be the optimal TSP path that starts from the origin, serves all requests in $R'(t)$ and ends at request $u$ plus the optimal TSP path that starts from $u$, serves all requests in $\big(R(t) \setminus R'(t)\big) \cup \big(U(t) \setminus \{u\}\big)$ and returns to the origin. We will call these tours (corresponding) \emph{general dominating tours}.

Throughout this section, we will assume for ease of exposition that $t_1 \le t_2 \le \dots \le t_n$ and use $t_0 = 0$ and $t_{n+1}=\infty$. At time $t_i$, with $i \le n$, we compute the set of general dominating tours $D_i$ given the sets $R(t_i)$ and $U(t_i)$. Together with the previously computed sets, it is then provided to $\swag$ as set $S(t_i) = \bigcup_{j=0}^{i} D_j$.

\begin{definition}[General Dominating Permutations]\label{def:GenDomPer}
Let $\sigma$ be a permutation of the $n$ requests. Given a time $t$, there is a set $R(t)$ of released requests and a set $U(t)$ (assume for now that $U(t) \neq \emptyset$) of unreleased requests. Let $u \in U(t)$ be the first unreleased request in the permutation $\sigma$ and $R'(t) \subseteq R(t)$ be the set of released requests that precedes $u$ in the permutation. A corresponding general dominating tour of $\sigma$ at time $t$ consists of:
\begin{enumerate}
    \item an optimal TSP path $TSP_{\orig-R'-u}$ that starts from the origin, serves all requests in $R'(t)$ and ends at request $u$,

    \item and an optimal TSP path $TSP_{u-B-\orig}$ that starts from request $u$, serves all requests in $B=\big(R(t) \setminus R'(t)\big) \cup \big(U(t) \setminus \{u\}\big)$ and returns to the origin.
\end{enumerate}
Note that an optimal TSP path here is computed ignoring the release times of the requests. The length $\ell$ of the corresponding general dominating tour is then
\begin{equation*}
    \ell =  \ell_{TSP_{\orig-R'-u}} + \ell_{TSP_{u-B-\orig}}\enspace.
\end{equation*}

If $U(t) = \emptyset$, then $|R(t)|=n$, every request is released and we define the corresponding general dominating tour for every $\sigma$ to be an optimal TSP tour that starts from the origin, serves all requests and returns again to the origin (the same optimal TSP tour for every $\sigma$).
\end{definition}

The definition of the corresponding general dominating path of a permutation $\sigma$ for the open case is similar, i.e., instead of ending at the origin, a path ends at an arbitrary request from $\big(R \setminus R'\big) \cup \big(U \setminus \{u\}\big)$, and thus we omit it. From now on we will refer to general dominating permutations (paths or tours).

\begin{proposition}
     Let $\sigma$ be a permutation of the $n$ requests. Then, at each time $t$, there exists a corresponding general dominating permutation $\sigma'$ of $\sigma$.  
     Equivalently, it holds that 
     \begin{equation} \label{gen:length_1}
          \ell_{\sigma'} \le \ell_{\sigma}
     \end{equation}
     and
     \begin{equation} \label{gen:unreleased_1}
         (1-\alpha_{\sigma'}(t))\ell_{\sigma'} \le (1-\alpha_{\sigma}(t))\ell_{\sigma} \enspace.
     \end{equation}
 \end{proposition}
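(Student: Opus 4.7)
The plan is to exhibit the required dominating permutation $\sigma'$ by directly applying Definition~\ref{def:GenDomPer}, and then to verify the two inequalities by splitting $\sigma$'s route at its first unreleased request and invoking optimality of the two TSP subproblems used to build $\sigma'$.

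Concretely, in the main case $U(t)\neq\emptyset$, I would let $u$ be the first unreleased request in $\sigma$, let $R'\subseteq R(t)$ be the released requests preceding $u$ in $\sigma$, and let $B=(R(t)\setminus R')\cup(U(t)\setminus\{u\})$. Taking $\sigma'$ to be the corresponding general dominating permutation from Definition~\ref{def:GenDomPer}, its route is the concatenation of $TSP_{\orig-R'-u}$ and $TSP_{u-B-\orig}$, so
\begin{equation*}
\ell_{\sigma'}=\ell_{TSP_{\orig-R'-u}}+\ell_{TSP_{u-B-\orig}}.
\end{equation*}
I would decompose $\sigma$'s route analogously, writing $\ell_\sigma=L_1+L_2$, where $L_1$ is the length of the $\sigma$-prefix from $\orig$ to $u$ through the requests of $R'$ (in $\sigma$'s order), and $L_2$ is the length of the $\sigma$-suffix from $u$ back to $\orig$ through $B$ (in $\sigma$'s order). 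Each piece is a feasible (release-time-oblivious) tour for the corresponding TSP subproblem, so optimality immediately yields $L_1\ge\ell_{TSP_{\orig-R'-u}}$ and $L_2\ge\ell_{TSP_{u-B-\orig}}$; summing gives~\eqref{gen:length_1}. For~\eqref{gen:unreleased_1}, the key observation is that by construction all requests of $R'$ are released and placed before $u$ in $\sigma'$, so $u$ is also the first unreleased request in $\sigma'$. Hence $\alpha_{\sigma'}(t)\ell_{\sigma'}=\ell_{TSP_{\orig-R'-u}}$, which means $(1-\alpha_{\sigma'}(t))\ell_{\sigma'}=\ell_{TSP_{u-B-\orig}}\le L_2=(1-\alpha_\sigma(t))\ell_\sigma$.

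The remaining edge cases are immediate. If $U(t)=\emptyset$, then $\sigma'$ is an optimal TSP tour on all $n$ requests, so $\ell_{\sigma'}\le\ell_\sigma$ by optimality, while both sides of~\eqref{gen:unreleased_1} vanish since $\alpha_\sigma(t)=\alpha_{\sigma'}(t)=1$; the degenerate subcase $\ell_{\sigma'}=0$ is covered by the convention $\alpha_{\sigma'}=1$ from the definition of $\alpha$. For the open variant, I would replace $TSP_{u-B-\orig}$ by the optimal path from $u$ through $B$ minimized over its endpoint in $B$; the $\sigma$-suffix is still one such feasible candidate, so the same optimality comparison applies. The argument is essentially bookkeeping, and the only point requiring care is ensuring that $u$ remains the first unreleased request in $\sigma'$, which is automatic once $R'$ is placed before $u$ in the prefix TSP path.
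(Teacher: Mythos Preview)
Your proposal is correct and follows essentially the same approach as the paper: split $\sigma$'s route at the first unreleased request $u$, compare each piece to the corresponding optimal TSP subpath by feasibility/optimality, and identify $(1-\alpha_{\sigma'}(t))\ell_{\sigma'}$ with $\ell_{TSP_{u-B-\orig}}$. If anything, you are slightly more explicit than the paper in justifying that $u$ is also the first unreleased request in $\sigma'$ and in spelling out the open-variant and degenerate cases.
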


 \begin{proof}
     Since time $t$ is fixed here, the time dependence will be suppressed for convenience. At time $t$, there is a set of released requests $R$ and a set of unreleased requests $U$ (suppose for now that $U \ne \emptyset$). Let $u \in U$ be the first unreleased request in the permutation $\sigma$ at time $t$ and $R' \subseteq R$ be the set of released requests that precedes $u$ in the permutation. So, we know that permutation $\sigma$ serves first the set $R'$ of requests, then the request $u$ and after that, the rest of the requests (set $B=\big(R \setminus R'\big) \cup \big(U \setminus \{u\}\big)$). So, we can write its length $\ell_{\sigma}$ as
     \begin{equation*}
         \ell_{\sigma} = \ell_{\orig-R'-u} + \ell_{u-B-\orig}\enspace.
     \end{equation*}
     It is easy to see that the value $(1-\alpha_{\sigma})\ell_{\sigma}$ is exactly the length of the part of the tour starting with the unreleased request $u$, i.e., $\ell_{u-B-\orig}$. Similarly for its corresponding general dominating permutation $\sigma'$, we have that $(1-\alpha_{\sigma'})\ell_{\sigma'} = \ell_{TSP_{u-B-\orig}}$. It is also clear that $\ell_{TSP_{\orig-R'-u}} \le \ell_{TSP_{\orig-R'-u}}$ and $\ell_{TSP_{u-B-\orig}} \le \ell_{TSP_{u-B-\orig}}$. Thus, we get 
     \eqref{gen:length_1} and \eqref{gen:unreleased_1}.

     If $U  = \emptyset$, then $\sigma'$ is an optimal TSP tour that serves all  requests and returns to the origin and $\alpha_{\sigma}=\alpha_{\sigma'}=1$. Therefore, we have again that \eqref{gen:length_1} and \eqref{gen:unreleased_1} hold.
     
     Consequently, at each time $t$, a general dominating permutation $\sigma'$ dominates permutation $\sigma$.
 \end{proof}

Next, we give the formal definition of a general dominating set $D_i$.

\begin{definition}[General dominating set]
    We define the general dominating set $D_i$, with $i \in \{0, 1, 2, \dots, n\}$, as the set which contains all general dominating permutations that are generated with sets $R(t_i)$, $U(t_i)$ and every possible selection of $R'(t_i) \subseteq R(t_i)$ and $u \in U(t_i)$. 
\end{definition}

For the set $S$ of permutations $\swag$ uses each time, it holds that $S(t) \subseteq S(t')$ for every $t \le t'$, since we provide the algorithm at time $t_i$ with the union of all $D_k, k \le i \le n$. Moreover, the general dominating set $D_i$ dominates all permutations for all $t < t_{i+1}$. Thus, we will show that an oracle $\oracle$ which outputs at any time $t \in [t_i, t_{i+1})$ the general dominating sets 
$\bigcup_{j=0}^{i} D_j$ for each $i \in \{0, 1, 2, \dots, n\}$, is a domination oracle.
Before doing that, let us compute the running time of such an algorithm. 
First, we show that the number of general dominating permutations which are contained in the set $\bigcup_{j=0}^{i} D_j$ with $i\le n$, can always be upper bounded by $2^n$.

\begin{lemma} \label{run_time_dom}
There always exists a set $\bigcup_{j=0}^{i} D_j$ which consists of at most $2^n$ distinct general dominating permutations, for all $i \in \{0,1,\dots,n\}$.
\end{lemma}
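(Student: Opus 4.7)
The plan is to parametrize general dominating permutations by pairs and then combinatorially bound the number of such pairs. Each permutation in $D_j$ is constructed from a pair $(R', u)$ with $R' \subseteq R(t_j)$ and $u \in U(t_j)$, via two optimal TSP subpaths that depend only on $(R', u)$. Consequently, the number of distinct permutations in $\bigcup_{j=0}^{i} D_j$ is upper bounded by the number of distinct pairs $(R', u)$ that can occur in \emph{any} $D_j$ with $j \le i$.

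Next I would characterize those pairs. A pair $(R', u)$ appears in $D_j$ for some $j \le i$ if and only if there is a release time $t_j \le t_i$ satisfying $\max_{q \in R'} t_q \le t_j < t_u$ (with the convention $\max \emptyset = -\infty$). Equivalently, writing $A = R' \cup \{u\}$, the pivot $u$ must be the \emph{strictly latest-released element of $A$}, and additionally $\max_{q \in R'} t_q \le t_i$ must hold. The structural key is the former: $u$ is forced to be the unique release-time maximum of $A$.

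Given this characterization, the counting is immediate. Consider the map $\Phi \colon (R', u) \mapsto A = R' \cup \{u\}$ restricted to valid pairs. It is injective, because $u$ is recovered from $A$ as its strict release-time maximum, after which $R' = A \setminus \{u\}$. Hence the number of valid pairs, and thus the number of distinct permutations in $\bigcup_{j=0}^{i} D_j$ arising from the case $U(t_j) \neq \emptyset$, is at most the number of non-empty subsets of $\{q_1, \ldots, q_n\}$, namely $2^n - 1$. The sole remaining contribution is the degenerate case $U(t_j) = \emptyset$ (only possible when $i = n$), which by Definition~\ref{def:GenDomPer} adds exactly the single optimal TSP tour. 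Summing yields the claimed bound $2^n$.

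The main obstacle is careful bookkeeping rather than a deep idea. The delicate points are: the injectivity of $\Phi$ when release times are not strictly ordered (if several elements of $A$ tie for the maximum release time, then no valid pivot $u$ exists for this $A$, and the count only improves); and ensuring the degenerate full-release case is counted exactly once. With these technicalities handled, $|\bigcup_{j=0}^{i} D_j| \le 2^n$ follows directly for every $i \in \{0, 1, \ldots, n\}$.
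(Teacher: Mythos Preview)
Your proposal is correct and takes essentially the same approach as the paper: both arguments map each pair $(R',u)$ to the set $Q'=R'\cup\{u\}$ and observe that $u$ is recoverable from $Q'$ (the paper phrases this as ``$u$ is the only unreleased request in $Q'$'', which is equivalent to your ``strict release-time maximum''). The one small point the paper makes explicit that you leave implicit is tie-breaking among optimal TSP subpaths: the paper fixes the lexicographically first optimal path for each of $TSP_{\orig\text{-}R'\text{-}u}$ and $TSP_{u\text{-}B\text{-}\orig}$, so that the constructed permutation truly depends only on $Q'$; since the lemma is existential this is harmless, but you should state it.
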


\begin{proof}
Consider a general dominating permutation as it is defined in Definition~\ref{def:GenDomPer}. We only consider the closed variant here, the proof is analogous for the open variant. Suppose that we always take the permutation corresponding to the tour which consists of the (unique) lexicographically first optimal TSP paths $TSP_{\orig-R'-u}$ and $TSP_{u-B-\orig}$. Any such permutation is completely determined by the set $Q'=R'(t_i)\cup \{u\}$ for any value of $i$, as $u$ is the only unreleased request in $Q'$. Also, all sets $Q'$ are subsets of the set of all $n$ requests, and thus there can be at most $2^n$ distinct ones. Note that if $u$ does not exist, i.e., all requests have been released, there exists only a single lexicographically first optimal TSP tour.  Consequently, we can always construct the set $\bigcup_{j=0}^{i} D_j$ consisting of at most $2^n$ distinct (general dominating) permutations for all $i \in \{0,1,\dots,n\}$.
\end{proof}

Next, we use Lemma~\ref{run_time_dom} to show the following theorem for the running time of $\laswag$.

\begin{lemma}
$\laswag$ that uses the general dominating sets runs in $O(n^2\cdot 2^n)$ time complexity for both closed and open variants of OLTSP.
\end{lemma}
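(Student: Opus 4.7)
The plan is to invoke Corollary~\ref{general:time-2} directly, which upper bounds the running time of $\laswag$ by $O(\max\{n^2 \cdot N,\, T_{\oracle},\, T_{TSP}\})$. Lemma~\ref{run_time_dom} already gives $N \le 2^n$, so $n^2 \cdot N = O(n^2 \cdot 2^n)$, and it remains to bound $T_{\oracle}$ and $T_{TSP}$ by the same quantity.

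For $T_{TSP}$, I will appeal to the classical Bellman--Held--Karp dynamic program for TSP, which (as required by Definition~\ref{def:GenDomPer}) ignores release times. The DP stores, for each pair $(S,v)$ with $S$ a subset of requests and $v \in S$, the length of a shortest path from $\orig$ visiting exactly $S$ and ending at $v$. There are $O(n \cdot 2^n)$ entries, each filled in $O(n)$ time by considering the last hop, so the total preprocessing is $O(n^2 \cdot 2^n)$. A symmetric table yields shortest paths that start at $v$, cover a prescribed subset, and either return to $\orig$ (closed case) or end at an endpoint chosen by an extra minimization (open case).

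For $T_{\oracle}$, I will exploit the parametrization already used in the proof of Lemma~\ref{run_time_dom}: every canonical general dominating permutation is indexed by a pair $(Q',u)$ where $u \in Q'$ is the unreleased pivot, and its two halves are precisely the lexicographically first optimal TSP paths computed by the DP above. Thus, once the DP is precomputed (once, at the very start), each dominating permutation can be assembled in $O(n)$ time from the stored DP values and standard backpointers. Summed over the at most $2^n$ pairs $(Q',u)$ and the $n+1$ oracle invocations (which reuse the same preprocessing), the total oracle cost is $O(n^2 \cdot 2^n)$.

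Plugging $N \le 2^n$, $T_{\oracle}=O(n^2 \cdot 2^n)$, and $T_{TSP}=O(n^2 \cdot 2^n)$ into Corollary~\ref{general:time-2} yields the stated complexity for both the closed and open variants. The argument requires no genuinely new idea; the only mild care is to instantiate the Held--Karp DP in both its closed and open flavors so that the same precomputation serves both the $TSP_{\orig - R' - u}$ and $TSP_{u - B - \orig}$ halves of a dominating permutation, and to amortize the DP preprocessing across all oracle calls rather than paying for it on each invocation.
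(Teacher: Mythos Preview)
Your proposal is correct and follows essentially the same route as the paper: invoke Corollary~\ref{general:time-2}, bound $N$ via Lemma~\ref{run_time_dom}, and handle both $T_{\oracle}$ and $T_{TSP}$ by a single Bellman--Held--Karp precomputation in $O(n^2 2^n)$ that is amortized across all oracle calls. If anything, your version is more explicit than the paper's about how the DP tables serve both halves of each dominating permutation and about the open-variant adjustment.
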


\begin{proof}
From Lemma~\ref{run_time_dom}, there are at most $N=2^n$ permutations in the set $S$ the algorithm uses. At $t=0$, we can precompute a set $D$ of general dominating permutations of size at most $2^n$ by Lemma~\ref{run_time_dom} in time $O(n^2 2^n)$ using dynamic programming~\cite{Bellman, Karp}. Then, every time $t_j$ a request is released, we can select the right set $D_j \subseteq D$ of general dominating permutations (from all the possible general dominating permutations we have precomputed), add it to the previous ones and provide them to the algorithm acting as the oracle. From Corollary~\ref{general:time-2}, we get that $\laswag$ with the general dominating sets runs in $O(n^2\cdot 2^n)$ time.
\end{proof}

Now we are ready to state and prove the main theorem of this section.

\expGeneral*

\begin{proof}
    If the two conditions of the domination oracle are satisfied for the general dominating sets, then we can use $\laswag$ and Lemma~\ref{dominating:general} to get an exponential time algorithm for open and closed LA-OLTSP that is $3/2$-consistent. It is obvious that Condition~$1$ is satisfied as $S(t) \subseteq S(t')$ for every $t \le t'$. Condition $2$ is also satisfied since for each time $t$ we have a corresponding dominating permutation for every possible subset of $R(t)$ and choice of $u$, and thus for every permutation $\sigma$. Therefore, we apply Lemma~\ref{dominating:general} and get a consistency of $3/2$. For smoothness and robustness, we use Theorems~\ref{th:smoothness},~\ref{robustness:2.75} and~\ref{robustness:2.84} directly, completing the proof.
\end{proof}

    \section{Polynomial/FPT time algorithms for specific metric spaces}\label{sec:polyTimeAlgos}

In this section, we show how to build polynomial/FPT time algorithms based on the general framework for specific metric spaces, namely trees, rings and flowers. Specifically, we show how to build (in polynomial/FPT time) a dominating set of polynomial/FPT size in each of these cases. These will dominate a so-called \textit{sensible} set of permutations in each case, which we define later. We focus now on the closed variant and explain later how to deal with the open case. Also, for the sake of simplicity, we will define the ``new'' dominating permutations that will be added to the oracle's set whenever a request is released. It is to be understood that the oracle maintains the union of all permutations generated and provides that to $\laswag$.

We also need to implement the ``cleanup'' step of $\laswag$, where a computation of an optimal classical (offline, without release times) TSP path/tour is required. We describe briefly how this is done in Subsection~\ref{subsec:offlineTSPfast}. Note, however, that these runtimes are small enough (in fact, in $O(n)$) so that the bottleneck will be the cardinality of the set of dominating permutations for the runtime of $\laswag$.

Now, let us start working towards the definitions and constructions promised.
\begin{definition}[Safe Set]
Let $X$ be a set of request locations. A set of permutations $\Pi$ is safe for $X$ if for any assignment $f$ of release times to each location in $X$, there exists a permutation $\pi \in \Pi$ which is optimal for the resulting input $Q = \{(x, f(x)) \: | \: x \in X\}$.
\end{definition}

Note that the set of all permutations is safe. However, as we will see, we can define sets of permutations that follow some desirable structure but are still safe. We will call such sets \textit{sensible}. It is these sets which we will dominate instead of the set of all permutations. Recall that $\laswag$ provides all the guarantees it does if the oracle that it uses dominates a safe set, which motivates this entire section. 

In general, we will show domination by the following steps. First, we will argue that the dominating permutation serves a superset of the requests served by the dominated permutation in no more time before reaching the first unreleased request $q$ (in the dominated permutation). Afterwards, an optimal solution that serves the rest of the requests will be followed. The above imply domination.

\subsection{Trees}

In this section, we consider metric spaces that are shaped like a tree. That is, we have continuous spaces that can be embedded into trees with lengths on the edges, where these are positive (and correspond to the traversal time of the edge) and the edges incident to a leaf can have ``infinite'' length. The request may appear on nodes of the tree or at some point along the edges. For example, a request may appear 1 unit away from $q$ and 2 units away from $p$, where $p$ and $q$ are nodes of the tree connected by an edge with length 3. This corresponds to requests along some street between crossings in a real-world scenario. It is important that we allow the leaf edges to have infinite length in order to be able to subsume the line and star metrics.

\paragraph*{Sensible set.} For such a metric space, we can focus on a very specific set of sensible permutations. Let us describe the idea behind the definition of these first. Suppose that an algorithm serves a request $q$ at time $t$. When the algorithm finishes, say at time $t_f$, it will be located at the origin $\orig$. There is only one path $\mathcal{P}_q$ from $q$ to $\orig$, and it must be traversed after $t$. Hence, any request $q'$ on $\mathcal{P}_q$ will be visited at some point after $t$ anyway. Therefore, there is no reason for the algorithm to try and serve $q'$ before time $t$ (i.e., before serving $q$), since it would serve $q'$ on the way back to the origin after $t$ anyway. From this, it follows that for any sequence of requests $S_q$ encountered along a path $\mathcal{P}_q$ towards the origin may be safely assumed to be a subsequence of the optimal permutation. This motivates the following.

\begin{definition}[Sensible Set for Trees]
    Let $\tree$ be a tree rooted at the origin $\orig$ and $X$ a set of request locations on $\tree$. The set $\Pi_{s}(\tree, X)$ of sensible permutations consists of all permutations $\pi$ where the following holds. Let $\pi = x_{\pi(1)}, x_{\pi(2)}, \dots, x_{\pi(n)}$. For any $x_{\pi(i)}$ and any $x_{\pi(j)}$ on the path $\mathcal{P}_{x_{\pi(i)}}$ from $x_{\pi(i)}$ to the origin, we have $j > i$.
\end{definition}

\begin{claim}
    For any tree $\tree$ and any set of request locations $X$, the set $\Pi_{s}(\tree, X)$ is safe.
\end{claim}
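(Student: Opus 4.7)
The plan is to show that for any release-time assignment $f$ on $X$, any optimal permutation can be converted into a sensible one with no increase in makespan. Let $\pi^*$ be an optimal permutation and consider the server's trajectory when $\pi^*$ is executed in the natural way: traverse the unique tree path to $x_{\pi^*(k)}$, wait if necessary, serve, continue, and finally return to $\orig$. For each request $q$, let $\tau(q)$ denote the latest time at which the server is located at $x_q$ along this trajectory. Since $\pi^*$ serves $q$, we have $\tau(q)\ge t_q$. I would then define $\pi$ by sorting requests in increasing order of $\tau$, breaking ties by request index.

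The first thing to verify is that $\pi$ is sensible. Suppose $x_{\pi(j)}$ lies strictly between $x_{\pi(i)}$ and $\orig$. After time $\tau(q_{\pi(i)})$, the server never revisits $x_{\pi(i)}$, yet it must eventually reach $\orig$; in a tree every route from $x_{\pi(i)}$ to $\orig$ passes through the proper ancestor $x_{\pi(j)}$, so the server is at $x_{\pi(j)}$ at some time at least $\tau(q_{\pi(i)}) + d(x_{\pi(i)}, x_{\pi(j)}) > \tau(q_{\pi(i)})$. Hence $\tau(q_{\pi(j)}) > \tau(q_{\pi(i)})$, and by our sort $j > i$.

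The second thing is to bound the makespan of the natural execution of $\pi$ by $|\pi^*|$. Let $T_i$ denote the serving time of $q_{\pi(i)}$ under $\pi$. I prove by induction that $T_i \le \tau(q_{\pi(i)})$. For $i=1$, $T_1 = \max(d(\orig, x_{\pi(1)}), t_{\pi(1)})$, and both arguments of the max are $\le \tau(q_{\pi(1)})$. For the inductive step, since $\pi^*$'s server is at $x_{\pi(i-1)}$ at time $\tau(q_{\pi(i-1)})$ and at $x_{\pi(i)}$ at time $\tau(q_{\pi(i)})$, unit-speed motion gives $\tau(q_{\pi(i)}) \ge \tau(q_{\pi(i-1)}) + d(x_{\pi(i-1)}, x_{\pi(i)})$; combined with $\tau(q_{\pi(i)}) \ge t_{\pi(i)}$ and $T_{i-1} \le \tau(q_{\pi(i-1)})$, this yields $T_i \le \tau(q_{\pi(i)})$. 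The closed makespan is then $T_n + d(x_{\pi(n)}, \orig) \le \tau(q_{\pi(n)}) + d(x_{\pi(n)}, \orig) \le |\pi^*|$, since $\pi^*$'s trajectory still has to travel from $x_{\pi(n)}$ back to $\orig$ by time $|\pi^*|$.

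The main subtlety is the strictness in the sensibility argument: one has to handle the possibility that the trajectory of $\pi^*$ takes detours into other subtrees after its last visit to $x_{\pi(i)}$. The resolution is that in a tree every such detour still ends with a traversal of the proper ancestor $x_{\pi(j)}$ on the way back to $\orig$, so the server is necessarily at $x_{\pi(j)}$ strictly after $\tau(q_{\pi(i)})$. Requests at coinciding locations are accommodated by the tie-breaking on request index, which renders the sensibility constraint vacuous on such pairs.
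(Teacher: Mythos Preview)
Your proof is correct and follows essentially the same construction as the paper: both sort the requests by the last time the server (following the original permutation) visits each location, then argue that the resulting permutation is sensible and has no larger makespan. Your sensibility argument and inductive makespan bound are more explicit than the paper's treatment (which asserts sensibility ``by construction'' and uses a last-wait-time argument for the makespan), but the core idea is identical.
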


\begin{proof}
    Let $\pi$ be an arbitrary permutation and $f$ an arbitrary assignment of release times to $X$. We will show that there exists $\pi_{s} \in \Pi_{s}(\tree, X)$ with $|\pi_{s}| \le |\pi|$, where $|\pi|$ denotes the completion time of the tour associated to permutation $\pi$ under $f$. Let $z(t)$ denote the position of a server following $\pi$ at time $t$. Let $l_i = \max\{t \: | \: z(t) = x_i\}$. Now, we sort the $l_i$ values and extract the sorted indices to get $\pi_{s} = x_{\pi_{s}(1)}, x_{\pi_{s}(2)}, \dots , x_{\pi_{s}(n)}$. In other words, $\pi_s$ is derived by the ordering of the requests based on their last visit by a server following $\pi$. By construction, $\pi_s \in \Pi_{s}(\tree, X)$. Now, let us see why $|\pi_s| \le |\pi|$. Let $t^{run}$ be the final time that $\pi_s$ waits on a request location $x_{\pi_s(f)}$. It is easy to see that the release time of $x_{\pi_s(f)}$ is at least $t^{run}$. This means that $\pi$ still has to perform a superpath of $\mathcal{P}_{run} = x_{\pi_s(f)} \rightarrow x_{\pi_s(f + 1)} \dots \rightarrow x_{\pi_s(n)} \rightarrow \orig$, after time $t^{run}$. Let $l_{run}$ denote the length of $\mathcal{P}_{run}$. It is clear that $|\pi| \ge t^{run} + l_{run} = |\pi_s|$. 
\end{proof}

\paragraph*{Dominating set.} For the above sensible set, we will define a dominating set whose cardinality is of order $O(2^l \cdot n)$, where $l$ is the number of leaves of the underlying tree. This yields an algorithm for the problem in trees that runs in FPT time parameterized by the number of leaves. As the reader may have already guessed, the idea is to look at subsets of leaves and, more interestingly, at the paths of those leaves to the origin. That is, for each subset of the paths of the leaves to the origin and each selection of intermediate request $q$, we will define the permutation which first optimally visits the selected subset of leaves finishing at $q$ and then ``cleans up'' the rest of the requests starting from $q$ and ending at the origin. The intuition explaining why this is enough is that any sensible permutation can only have served a request $q'$ before $q$ if it already has served the leaves ``below'' $q'$. 

We can now formally define the dominating permutations. To do that, we consider the restriction of the tree space to the span of the requests. That is, for every leaf, if there is a request along its incident edge, we replace it with a new node which is the request furthest along the possibly infinitely long edge of the leaf and replace the infinite length by that request's distance from the original leaf's parent. Additionally, we ``trim'' the tree past requests when there is no other request along the path to a leaf. Now, every leaf is a request. 

\begin{definition}[$(q,L')$-Scan]\label{def:scan}
Let $\tree$ be a rooted tree and $L$ its set of leaves. Let $L'$ be a subset of $L$ and $q$ be some request on the tree with location $x$ such that $x$ is on some path to a leaf in $L'$ ($x$ could be along an edge). Now, let $\mathcal{P}$ be the optimal path starting at the root and finishing at $x$ while also visiting all leaves in $L'$ along the way. We call $\mathcal{P}$ the $(q, L')$-scan of $\tree$. 
\end{definition}

\begin{definition}[$(q,L')$-Dominator]\label{def:tree-dominator}
    Let $R, U$ be the set of released and unreleased requests at time $t$. Let $L \subseteq R$ be the subset of released leaves and also let $L'$ be a subset of those leaves. Let $q \in U$ be a furthest unreleased request along a path to some leaf. Let $\mathcal{P}'$ be the $(q, L')$-scan of $\tree$. Now, let $U'$ be the set of unserved requests after this scan. That is, $U' = U \cup R'$, where $R'$ is the subset of released requests not served by $\mathcal{P}'$. Construct the tree $\tree'$ rooted at $q$, trimmed according to $U'$ and preserving the origin. Let $L''$ be the leaves of $\tree'$. Let $\mathcal{P}''$ be the $(\orig, L'')$-scan of $\tree'$. The implied permutation $\pi$ of the tour $\mathcal{P}'\mathcal{P}''$ is the $(q, L')$-dominator of $\tree$. We will use the notation $d_{R, U, \tree}(q, L')$ to refer to it.
\end{definition}

\begin{definition}[Dominating Set for Trees]\label{def:tree-dominators}
    Let $R, U$ be as in definition \ref{def:tree-dominator} and $\tree, L$ as in definition \ref{def:scan}. The tree dominating set is:
    \begin{displaymath}
    Dom^{\curlyvee}(R, U, \tree, L) = \bigcup_{q \in U^*, L' \subseteq L}d_{R, U, \tree}(q, L'),
    \end{displaymath}
    where $U^*$ is the set of furthest unreleased requests along the paths to leaves.
\end{definition}

We will now show that this set does indeed dominate the sensible permutations for trees. Recall that a permutation is sensible for the tree if all sequences of requests along a single path from a leaf to the root (origin) ordered decreasingly by distance to the origin are found in the same order in the permutation. 

\begin{lemma}[Tree Domination]
Let $\pi$ be a permutation which is sensible for a tree $\tree$ (trimmed according to the input requests). Let $t$ be the current time and $R, U$ the released and unreleased requests at time $t$. Also, let $L$ be the set of leaves of $\tree$. Then $Dom^{\curlyvee}(R, U, \tree, L)$ contains a permutation $\pi_{dom}$ which dominates $\pi$.
\end{lemma}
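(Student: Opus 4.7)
My plan is to produce a concrete dominator by reading it off from $\pi$. Let $q := q_\pi$ be the first unreleased request of $\pi$ at position $k$ (so $q = \pi[k]$), and let $L' := L \cap \{\pi[1], \ldots, \pi[k-1]\}$ be the set of released leaves served by $\pi$ before $q$. The candidate is $\pi_{dom} := d_{R, U, \tree}(q, L')$. I need to check both that $\pi_{dom}$ actually lies in $Dom^{\curlyvee}(R, U, \tree, L)$ (i.e., that $q \in U^*$) and that it satisfies the two inequalities of Definition~\ref{Def:DominationPermutation}.

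For $q \in U^*$, pick any leaf $\lambda$ whose path from $\orig$ passes through $q$ (taking $\lambda = q$ if $q$ is itself a leaf). Any request $r$ strictly between $q$ and $\lambda$ on this path satisfies $q \in \mathcal{P}_r$, so sensibility of $\pi$ places $r$ before $q$ in the permutation; since $q$ is the first unreleased in $\pi$, this forces $r$ to be released. Hence $q$ is the deepest unreleased request along $\orig \to \lambda$, giving $q \in U^*$.

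For the length inequality $\ell_{\pi_{dom}} \le \ell_\pi$, split $\pi$'s tour at $q$ into a prefix walk $B_1$ from $\orig$ to $q$ visiting $\pi[1], \ldots, \pi[k-1]$ (which contains $L'$), and a suffix walk $B_2$ from $q$ through the remaining requests back to $\orig$. Optimality of the $(q, L')$-scan gives $|B_1| \ge \ell_{\mathcal{P}'}$. For $B_2$, I plan to apply the standard lower bound for tree walks (twice the sum of spanned edge lengths minus the start-to-end distance), using that $\pi$ as a whole serves every request, so $B_2$ must traverse every edge relevant to $\mathcal{P}''$; this yields $|B_2| \ge \ell_{\mathcal{P}''}$. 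Summing gives $\ell_\pi \ge \ell_{\mathcal{P}'} + \ell_{\mathcal{P}''} \ge \ell_{\pi_{dom}}$.

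The main obstacle is the unreleased portion inequality. The difficulty is that $\pi_{dom}$'s first unreleased request need not be $q$ itself: unreleased requests can lie along the $\orig$-to-$q$ portion of $\mathcal{P}'$ (though never strictly below $q$, by the same sensibility argument as above). I intend to exploit the freedom in how $\mathcal{P}'$ orders its side trips, routing it so that any unreleased request encountered appears on the final segment approaching $q$. With this arrangement, $(1 - \alpha_{\pi_{dom}}(t)) \ell_{\pi_{dom}}$ can be bounded by $d(r^\star, q) + \ell_{\mathcal{P}''}$ for some last released $r^\star$ on the scan, while $(1 - \alpha_\pi(t)) \ell_\pi = d(\pi[k-1], q) + |B_2| \ge d(\pi[k-1], q) + \ell_{\mathcal{P}''}$. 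The comparison then reduces to showing $d(r^\star, q) \le d(\pi[k-1], q)$, which I expect to follow from careful tree-distance bookkeeping, since in a tree the direct-path distance cannot be enlarged by detours. The bulk of the technical work is the case analysis here, handling when $\pi$ visits requests in $B_1$ outside the minimal subtree spanning $\{\orig, q\} \cup L'$ and when multiple unreleased requests lie along $\mathcal{P}'$.
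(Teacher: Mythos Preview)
Your choice of dominator via $q$ and $L' = L \cap \{\pi[1],\dots,\pi[k-1]\}$ is exactly the paper's, and your check that $q\in U^*$ is a nice detail the paper leaves implicit. However, the argument you sketch for the second (unreleased-portion) inequality has a genuine gap, and the missing ingredient also leaves a hole in your length argument.

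The proposed re-routing fix does not work. Unreleased requests in $\mathcal{P}'$ need not lie on the $\orig$--$q$ path at all; they can sit on the stem of a side branch toward a leaf in $L'$. Concretely, take $\orig$--$u'$--$a$ with $a$ branching to a single leaf $\lambda_1\in L'$ and to $q$; let $u'$ be unreleased and $\lambda_1$ released, with $\pi=\lambda_1,q,u'$ (which is sensible). Any $(q,\{\lambda_1\})$-scan must pass through $u'$ exactly once, on the way \emph{to} $\lambda_1$, and no ordering of side trips can place $u'$ on the final leg toward $q$. Your reduction to $d(r^\star,q)\le d(\pi[k-1],q)$ then fails outright: here $r^\star$ is vacuous (nothing released precedes $u'$ on the scan), while $\pi[k-1]=\lambda_1$ can be arbitrarily close to $q$.

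The paper sidesteps all of this with one structural observation you are missing: because $\pi$ is sensible, the set $R_{served}=\{\pi[1],\dots,\pi[k-1]\}$ is \emph{closed under path-extension}---if $r\in R_{served}$ then every request deeper than $r$ is also in $R_{served}$. Hence every element of $R_{served}$ lies on a root-to-leaf path with that leaf in $L'$, so $\mathcal{P}'$ serves \emph{all} of $R_{served}$. Two things follow immediately. First, the targets of $\mathcal{P}''$ are a subset of what $\pi$'s suffix $B_2$ must serve, so $|B_2|\ge |\mathcal{P}''|$ (your tree-walk bound for $B_2$ tacitly assumed this containment without proving it). Second, one may take the implied permutation of $\mathcal{P}'\mathcal{P}''$ to list each unreleased request at its $\mathcal{P}''$-visit (they are all in $U'\supseteq U$); then $q$ is the first unreleased entry of $\pi_{dom}$, and both domination inequalities drop out directly from ``superset served, in no more length'' on the prefix and optimality of $\mathcal{P}''$ on the suffix. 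No case analysis on the geometry of $\mathcal{P}'$ is needed.
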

\begin{proof}
Let $\pi = q_{\pi(1)}, q_{\pi(2)}, \dots, q_{\pi(n)}$. Let $q_{\pi(u)}$ be the first unreleased request in $\pi$. Let $R_{served} =  \{q_{\pi(1)}, q_{\pi(2)}, \dots, q_{\pi(u - 1)}\}$ be the requests served by $\pi$ before $q_{\pi(u)}$. Let $L' = L \cap R_{served}$. Then, the $(q_{\pi(u)}, L')$-dominator $\pi_{dom} = d_{R, U, \tree}(q_{\pi(u)}, L') \in Dom^{\curlyvee}(R, U, \tree, L)$ dominates $\pi$. Due to the construction of $\pi_{dom}$, specifically because an optimal path is calculated for the parts before and after $q_{\pi(u)}$, we need only show that $\pi_{dom}$ serves a superset of what $\pi$ serves before $q_{\pi(u)}$ in no more time. Note that because $\pi$ is sensible, $R_{served}$ must be closed under path-extension. That is, for every $q \in R_{served}$ and every path $\mathcal{P}$ that $q$ is on, all requests further down the path $\mathcal{P}$ than $q$ must also be in $R_{served}$. This implies that all requests in $R_{served}$ are along some path to a leaf in $L'$. But, $\pi_{dom}$ traverses all these paths before reaching $q_{\pi(u)}$, which proves that $\pi_{dom}$ serves a superset of $R_{served}$ before $q_{\pi(u)}$. It remains to show that it does so at least as fast as $\pi$ serves $R_{served}$. This follows because $\pi_{dom}$ is chosen to optimally serve $L'$ only and end up at $q_{\pi(u)}$, which $\pi$ also has to do anyway, completing the proof.
\end{proof}

Now, using the above lemma, we obtain our main theorem for trees. The complexity stems from the term related to the cardinality of the dominating set in Lemma \ref{general:time}, since solving classical TSP on trees is later shown to take $O(n)$ time.

\trees*

\subsection{Ring}


\paragraph*{Sensible set.} The ring turns out to be quite similar to the line. In fact, to define the sensible permutations for it, we simply need to make the following observation. If a permutation makes more than one ``full'' loops around the ring, it can drop all but the last, since any loop will move through every request. Since it does not make sense to serve anything before doing a full loop, the loop can be the first movement after leaving $\orig$. After such a loop, if it even exists, we can assume that the permutation moves along the ring as if it were a line, i.e., it never crosses the antipodal point of the origin. Hence, we can define the sensible permutations by branching on whether a loop is performed and then reduce to the tree case. 

\begin{definition}[Sensible Set for the Ring]
    Let $X$ be a set of request locations on the ring. The sensible set of permutations $\Pi_{s}^{\circ}(X)$ consists of all permutations $\pi$ resulting from the concatenation of $\pi_{loop}$ and $\pi_{line}$ where $\pi_{loop}$ is a subpermutation covering $X' \subseteq X$ in a cyclic fashion and $\pi_{line} \in \Pi_{s}(T, X \setminus X')$, where $T$ is the tree resulting from splitting the ring at the midpoint across from the origin.
\end{definition}

\begin{claim}
\label{Claim:Opt-Sensible}
    There is always an optimal solution which follows a sensible permutation.
\end{claim}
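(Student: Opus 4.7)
The plan is to take an arbitrary optimal schedule $\opt$ with permutation $\pi_{opt}$ and argue that it can be transformed into one in $\Pi_s^{\circ}(X)$ without increasing the makespan. For this, I would lift the server's trajectory to the universal cover $\mathbb{R}$ of the ring of circumference $C$, obtaining a continuous function $z : [0, t_f] \to \mathbb{R}$ with $z(0) = 0$ and $z(t_f) \in C\mathbb{Z}$ (the latter in the closed variant). Let $[m, M]$ denote the range of $z$. Then I would split into two cases according to whether $M - m < C$ or $M - m \ge C$, i.e., whether the trajectory ever spans a full circumference.

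In the first case, the lifted trajectory stays strictly inside a window of length $C$, so on the ring it avoids at least one point and can therefore be viewed as a trajectory on the tree $T$ obtained by splitting the ring at the antipode of $\orig$ (after possibly rotating the cut point, which is without loss of generality because the untouched point can be chosen as the new antipode). Applying Claim~\ref{Claim:Opt-Sensible} on trees (the analogous tree statement proved just above) gives a permutation in $\Pi_s(T, X)$ with completion time at most $|\opt|$. Taking $X' = \emptyset$ and $\pi_{loop}$ empty yields a sensible permutation of the required form. In the second case, the lifted trajectory contains a subinterval on which it moves monotonically through a displacement of $C$, i.e.\ a full loop occurs. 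I would first argue that two or more loops can always be collapsed to a single loop: each loop of duration $C$ visits every position on the ring, so the requests served by a later loop that were already released when an earlier loop passed their position can be reassigned to the earlier loop; requests only released later can be handled after a single consolidated loop without extra loops, since waiting in place costs no more than traversing an unnecessary loop. After this reduction, one shows that the single loop can be shifted to the very beginning of the schedule: doing the loop first only advances the times at which each position is visited, so no release constraint is violated, and all subsequent movement stays within a window of length $\le C$ and can therefore be treated as a trajectory on $T$. Applying the tree result to this post-loop trajectory produces $\pi_{line} \in \Pi_s(T, X \setminus X')$, where $X'$ is the subset of requests served by the loop.

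The main obstacle is the second case, specifically the rearrangement that isolates a single loop at the start without breaking release-time feasibility or increasing the makespan. The delicate point is that shifting a loop earlier can alter where and when the server is forced to wait, so one needs a careful exchange argument: the crucial property is that the loop visits every position, which makes the set of times at which each request can be served monotone with respect to shifting the loop earlier. Once this structural reduction is established, the rest of the argument just invokes the tree safety result on the remaining trajectory to obtain $\pi_{line}$.
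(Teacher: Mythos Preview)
Your plan has a structural gap in the case split. The condition $M - m < C$ on the universal cover range is \emph{not} the same as ``the trajectory avoids the antipode $\orig'$''. For instance, with $C = 1$, a lifted trajectory with $m = -0.1$ and $M = 0.7$ satisfies $M - m = 0.8 < 1$, yet on the ring it passes through $\orig' = 0.5$. In this situation the trajectory is \emph{not} a trajectory on the tree $T$ obtained by splitting at $\orig'$, and your fix of ``rotating the cut point'' is not without loss of generality: the sensible set $\Pi_s^\circ(X)$ is defined with $T$ split at the fixed antipode of $\orig$, so a permutation that is tree-sensible with respect to a different split point need not lie in $\Pi_s^\circ(X)$. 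Thus your Case~1 does not produce a member of $\Pi_s^\circ(X)$ in general. A second, smaller issue: in Case~2, $M - m \ge C$ does \emph{not} imply the existence of a monotone subinterval of displacement $C$ (take $z$ going $0 \to 0.6 \to -0.5 \to 0$); it only implies that every ring position is visited at some time.

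The paper's argument sidesteps both problems by splitting on the correct event: whether $\opt$ ever crosses $\orig'$. If not, $\opt$ is already a zig-zag on $T$. If it does, look at the \emph{last} crossing, say from the $A$-semicircle to the $B$-semicircle, and let $t^*$ be the first subsequent time $\opt$ is back at $\orig$ (so between that crossing and $t^*$, $\opt$ stays strictly in the open $B$-arc). Define $\opt'$ to wait at $\orig$ until $t^* - 1$, then perform a single full loop $\orig \to A \to \orig' \to B \to \orig$, arriving at $\orig$ at time $t^*$, and thereafter copy $\opt$. Feasibility of the loop is a one-line time--distance check: any request $q$ at position $x$ served by $\opt$ at time $t_q \le t^*$ has, after $t_q$, a path to $\orig$ through $\orig'$ of length at least $1 - d_{\text{loop}}(\orig, x)$, hence $t_q \le t^* - 1 + d_{\text{loop}}(\orig, x)$, which is exactly when $\opt'$'s loop reaches $x$. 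This replaces your delicate ``shift the loop earlier'' exchange argument by a single explicit construction; the loop is not moved from anywhere but simply manufactured at the right moment.
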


\begin{proof}
    We can assume that the ring is a circle with a circumference of $1$. Let us now consider an optimal solution $\opt$, and denote by $\sigma ^*$ the order in which $\opt$ serves the requests (a permutation of the requests). We recall that w.l.o.g.\ $\opt$ follows the tour/path $\sigma^*$, waiting only at requests' positions: it goes from $\orig$ to (the position of) $\sigma^*[1])$ in time $d(\orig,\sigma^*[1])$, waits at (the position of) $\sigma^*[1]$ if the request is not released, then from $\sigma^*[1]$ to $\sigma^*[2]$ in time $d(\sigma^*[1],\sigma^*[2])$, \dots  We will then prove that for every optimal solution $\opt$, there is another optimal solution $\opt'$ that is a sensible permutation.

    Consider the last time $\opt$ moves from the one semicircle $\widehat{\orig A\orig'}$ to the other one $\widehat{\orig B\orig'}$ (or vice versa) through the antipodal point $\orig'$ of the origin (Figure~\ref{fig:ring-antipodal}).
\begin{figure}[t]
    \centering
    \begin{tikzpicture}[scale=0.8]
	\draw (0,0) arc (270:-90:2.5);  
	\node[] at (0,-0.425) (0) {$\orig$};
    \draw (0,0) -- (0,-0.2);
    \draw [dashed] (0,0) -- ({2.5*cos(90)},{2.5+2.5*sin(90)});
    
	\node[] at ({2.95*cos(180) }, {2.5+2.95*sin(180) }) (0) {$A$};
     \draw ({2.5*cos(180)},{2.5+2.5*sin(180)}) -- ({2.7*cos(180)},{2.5+2.7*sin(180)});
    \node[] at ({2.95*cos(150) }, {2.8+2.95*sin(150) }) (0) {$\sigma^*[i_{l}]$};
    \draw ({2.5*cos(150)},{2.5+2.5*sin(150)}) -- ({2.7*cos(150)},{2.5+2.7*sin(150)});
    \node[] at ({2.95*cos(90) }, {2.5+2.95*sin(90) }) (0) {$\orig'$};
    \draw ({2.5*cos(90)},{2.5+2.5*sin(90)}) -- ({2.7*cos(90)},{2.5+2.7*sin(90)});
    \node[] at ({3.2*cos(-15) }, {5+3.2*sin(-15) }) (0) {$\sigma^*[i_{l}+1]$};
    \draw ({2.5*cos(30)},{2.5+2.5*sin(30)}) -- ({2.7*cos(30)},{2.5+2.7*sin(30)});
    \node[] at ({2.95*cos(0) }, {2.5+2.95*sin(0) }) (0) {$B$};
    \draw ({2.5*cos(0)},{2.5+2.5*sin(0)}) -- ({2.7*cos(0)},{2.5+2.7*sin(0)});
    \end{tikzpicture}
    \caption{Ring for the proof of Claim~\ref{Claim:Opt-Sensible}.}
    \label{fig:ring-antipodal}
\end{figure}
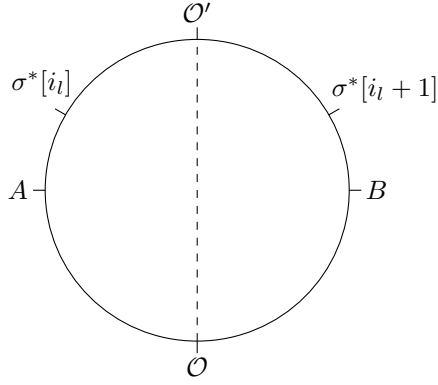
    If $\opt$ doesn't cross $\orig'$ in any moment, then $\opt'$ can just follow $\opt$ which has a zig-zag motion and is a sensible permutation. Otherwise, there is a pair of consecutive requests $\sigma^*[i_{l}]$, $\sigma^*[i_{l}+1]$ in the permutation between which the crossing of $\orig'$ occurs. Assume w.l.o.g. that $\sigma^*[i_{l}] \in \widehat{\orig A\orig'}$ and $\sigma^*[i_{l}+1] \in \widehat{\orig B\orig'}$.
    Then, $\opt$ has to move from $\sigma^*[i_{l}+1]$ to the origin without crossing $\orig'$ again. Let $t^*$ be the first time that $\opt$ reaches the origin after serving the request at $\sigma^*[i_{l}+1]$. Instead of following $\opt$, $\opt'$ can wait in the origin for time $t^*-1$ and then make a full circle $\widehat{\orig A\orig'B\orig}$ without stopping anywhere. In this case, it reaches $\orig$ at time $t^*$ (the same as $\opt$) serving at least the requests $\opt$ has served until that time. Next, $\opt'$ just follows $\opt$ which has a zig-zag motion since it never crosses $\orig'$ again. Therefore, $\opt'$ is a sensible permutation and $|\opt'| \leq |\opt|$.
    
\end{proof}

\paragraph*{Dominating set.}
As we have discussed, a sensible permutation for the ring can have at most one full loop in the beginning. Note that if this loop is missing, the case degenerates to the line, for which we have already defined the dominators. Therefore, it only remains to define the dominators under the assumption that $\pi$ does indeed have a loop in the beginning, since then we can just take the union of the two sets to construct the set of all dominators.

A crucial distinction is whether the current request $q$ of $\pi$ is part of the loop or not. In the first case, $\pi$ has only traveled a portion of the loop so far and in the second case it has traveled a full loop and then taken some possibly complex line-like tour. For every choice of $q$, we define two permutations, one pertaining to the first case and another to the second one.

The first permutation serves all released requests within $\pi$'s traveled portion of the loop (in the same direction) and then moves to $q$. From there, it follows an optimal permutation serving the rest of the requests  (disregarding release times), which can easily be computed for the ring. The second permutation serves all released requests with a full loop (in clockwise order if $q$ is to the left arc of the ring and vice versa) and then moves to $q$. It finishes in the same way by computing an optimal path from $q$ to the origin. We now define this set of dominating permutations more formally.

\begin{definition}[Crescent Permutations]\label{def:crescents}
Let $R$ be a set of released requests and $U$ be the corresponding set of unreleased requests on the ring at time $t$. Let $r_1, \dots, r_m$ be the released requests in clockwise order from the origin. For every $q \in U$, the \textbf{left-crescent permutation} of $q$ is
\begin{displaymath}
(r_1, \dots, r_l)\ ||\ (q)\ ||\ opt_{q}((R \cup U) \setminus \{r_1, \dots, r_l\}),
\end{displaymath}
where $l$ is the largest index such that $r_l$ is to the left of $q$ in clockwise order from the origin and $opt_{s}(S)$ denotes the fastest path without release times that starts from $s$, visits every point in $S$ and ends at the origin. The \textbf{right-crescent permutation} of $q$ is defined analogously. We will use $lc_{R, U}(q)$ and $rc_{R, U}(q)$, respectively, to refer to these permutations.
\end{definition}

\begin{definition}[Full-Moon Permutation]\label{def:full-moon}
Let $R, U$ be as in Definition~\ref{def:crescents} and $q \in U$. If $q$ is on the left arc of the ring, i.e., it is encountered before the antipodal point of the origin in a clockwise traversal of the ring, then the full-moon permutation of $q$ is 
\begin{displaymath}
    (r_1, \dots, r_m)\ ||\ (q)\ ||\ opt_{q}(U).
\end{displaymath}
If $q$ is on the other arc, the order of serving the released requests is reversed. We will use $fm_{R, U}(q)$ to refer to this permutation.
\end{definition}

\begin{definition}[Dominating Set for the Ring]\label{def:ring-dominators}
    For given $R, U$ as in Definitions~\ref{def:crescents} and~\ref{def:full-moon}, the ring dominating set of $R, U$ is
    \begin{displaymath}
         Dom^{\circ}(R, U) = \bigcup_{q \in U} \{rc_{R, U}(q), lc_{R, U}(q), fm_{R, U}(q)\}.
    \end{displaymath}
\end{definition}

Now that we have defined the dominators, it remains to show that indeed they dominate the sensible set of permutations which contain a loop. 

\begin{lemma}[Ring Domination]
Let $\pi$ be a sensible permutation whose tour contains a full loop in the beginning. Let $t$ be the current time and $R, U$ the released and unreleased requests at time $t$. Then, $Dom^{\circ}(R, U)$ contains a permutation $\pi_{dom}$ which dominates $\pi$.
\end{lemma}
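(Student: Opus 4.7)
The plan is to split on the position of $q$, the first unreleased request in $\pi$, within $\pi$'s decomposition $\pi_{\text{loop}}\cdot\pi_{\text{line}}$. Without loss of generality, I will assume $\pi$'s loop direction is clockwise.

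First I would consider the case $q\in\pi_{\text{loop}}$. In $\pi$'s order, the prefix up to $q$ consists of exactly the requests of $X'$ lying on the clockwise arc from $\orig$ to $q$, traversed in clockwise order; its total length is therefore the clockwise distance from $\orig$ to $q$. I would take $\pi_{\text{dom}}=lc_{R,U}(q)$, which visits all released requests $r_1,\dots,r_l$ on that same clockwise arc and then $q$, with the same prefix length. The set of released requests served by $\pi$ before $q$ is a subset of $\{r_1,\dots,r_l\}$, so the residual set that $\pi_{\text{dom}}$'s continuation $opt_q$ must cover is a subset of $\pi$'s residual set. Since any path from $q$ to $\orig$ covering a superset is also a valid path for any subset, one gets $(1-\alpha_{\pi_{\text{dom}}}(t))\ell_{\pi_{\text{dom}}}\le(1-\alpha_\pi(t))\ell_\pi$, and combined with the equality of prefix lengths, also $\ell_{\pi_{\text{dom}}}\le \ell_\pi$.

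Next I would consider the case $q\in\pi_{\text{line}}$. Now $\pi$'s prefix up to $q$ is a full loop (of length equal to the circumference $C$) followed by a line-like traversal inside the tree $T$ obtained by cutting the ring at the antipodal point of $\orig$; the length of the latter is at least the tree distance $d_T(\orig,q)$, so the whole prefix is at least $C+d_T(\orig,q)$. I would take $\pi_{\text{dom}}=fm_{R,U}(q)$, which visits all released requests in the cyclic direction selected by which arc $q$ lies on, and then $q$. A direct geometric case analysis on the position of $r_m$ relative to $q$ and to the antipodal of $\orig$ shows that its prefix length is at most $C+d_T(\orig,q)$; the two extremes are when $r_m$ lies past the antipodal (giving exactly $C+d_T(\orig,q)$, by going through $\orig$ back to $q$) and when all released requests sit on the same arc as $q$ (giving at most $d_T(\orig,q)$). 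As in the previous case, $\pi_{\text{dom}}$ then serves all of $R$, a superset of what $\pi$ has served before $q$, so $\pi_{\text{dom}}$'s continuation set is a subset of $\pi$'s, and the optimality of $opt_q$ delivers both domination inequalities.

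The main obstacle is the Case~2 prefix bound: because $fm_{R,U}(q)$ does not literally perform a full loop when the released requests cluster on one side, one must treat the subcases geometrically to confirm that serving all of $R$ in the prescribed cyclic order and then returning to $q$ never exceeds $C+d_T(\orig,q)$. Once this is verified, the rest of the argument is the uniform ``superset-before-$q$ plus $opt_q$-after-$q$'' template that also powered the analogous domination lemma for trees.
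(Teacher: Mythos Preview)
Your proposal is correct and matches the paper's proof: both split on whether the first unreleased request $q$ lies in $\pi_{\text{loop}}$ or in $\pi_{\text{line}}$, select $lc_{R,U}(q)$ in the first case and $fm_{R,U}(q)$ in the second, and then argue domination via the ``equal/shorter prefix to $q$, superset served before $q$, optimal cleanup after $q$'' template. One small slip: in your second case the extreme where all released requests sit on $q$'s arc does \emph{not} give prefix length at most $d_T(\orig,q)$ (it can exceed it, e.g.\ when some $r_i$ lies beyond $q$ on that arc); only the weaker bound $C+d_T(\orig,q)$ holds there, which is all you actually need and is exactly what the paper asserts.
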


\begin{proof}
Let $\pi = q_1, q_2, \dots, q_n$. Assume without loss of generality that $\pi$'s loop is clockwise. Also, let $q_x \in U$ be the first unreleased request in $\pi$ at time $t$. We distinguish two cases:

\begin{itemize}
    \item The request $q_x$ is on the loop part of $\pi$. Then, $lc_{R, U}(q_x) \in Dom^{\circ}(R, U)$ is a dominator for $\pi$. This is because both permutations travel the same distance to $q_x$ (the clockwise arc to it) and $lc_{R, U}(q_x)$ serves a superset (all released request in that arc) of what $\pi$ serves (some released requests of that arc) until this point. Also, after reaching $q_x$, $lc_{R, U}(q_x)$ optimally serves a subset of what $\pi$ has to serve, and is thus faster on that part too, which means that it dominates $\pi$.
    \item The request $q_x$ is after the loop part of $\pi$. Then, $fm_{R, U}(q_x) \in Dom^{\circ}(R, U)$ is a dominator for $\pi$. This can be seen because $fm_{R, U}(q_x)$ serves all released requests in its loop part and moves to $q_x$ after the loop at the least possible time (with cost 1 for the loop and then the smallest arc to $q_x$). Thus, it serves a superset of what $\pi$ serves until reaching $q_x$ in no more time than the tour induced by $\pi$ needs. Then, it optimally serves a subset of what $\pi$ has to serve, and is thus faster on that part too, which means that it dominates $\pi$.
\end{itemize}
\end{proof}

Thus, our main theorem for the ring follows.

\ring*
\clearpage
\subsection{Flowers}
\begin{wrapfigure}{r}{2cm}
\includegraphics[width=2cm]{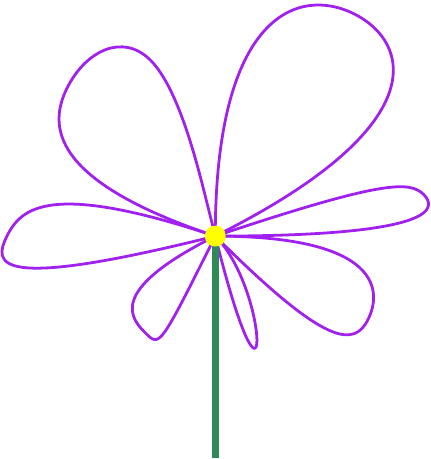}
\caption{\phantom{A} A flower with 7 petals}\label{wrap-fig:1}
\label{fig:flower}
\end{wrapfigure}
In this section we extend the reasoning used in the previous subsections on rings and trees to the so-called flowers. A flower consists of a number of rings (petals), all of which are attached to the origin point (receptacle). For reasons of artistic completion, a semiline (stem) disjoint from the petals is also attached to the origin point\footnote{In fact, one could do the same for graphs which are trees plus some edge-disjoint cycles attached to a single vertex of the tree. Thus, one can even include the flower's (ground) roots.}. An illustration is given in Figure~\ref{fig:flower}. In the following, we define the sets of sensible and dominating permutations for these kinds of graphs.

\paragraph*{Sensible set.}
The central idea is the following. We can assume that any sensible solution traverses a petal of the flower in a cyclic fashion at most once, for the same reason that this is true for the ring. Moreover, the petal is not visited before without loss of generality. After such a traversal (if it ever occurs), we can safely ``snip'' the petal in its halfpoint from the origin and view it as two different branches of a tree, since the halfpoint will never be crossed again. Thus, the sensible set here is defined to include the permutations which have at most one loop for each petal, such loops are the first visit on the corresponding petal (but may be carried out fully during any contiguous part of the permutation) and the restrictions of the sensible set on trees also apply to every petal (which becomes a line) after such a loop (if it even exists) is carried out.

\paragraph*{Dominating set.}
Utilizing this observation, we can construct the dominating permutations as follows. First, choose an unreleased request $q$. This will be the ``current'' request of the sensible permutation for which we want to define the dominator. Let $\pi$ refer to that permutation. Note that $\pi$, throughout its entirety, will utilize a subset of the petals in a cyclic fashion. So, we ``guess'' this subset and call it $\mathcal{P}$. Moreover, a subset $\mathcal{P}_{done}$ of $\mathcal{P}$ will have been fully traversed before reaching $q$. We also guess $\mathcal{P}_{done}$. This also defines $\mathcal{P}_{left} = \mathcal{P}\setminus \mathcal{P}_{done}$. Then, the dominator is defined in the following way. First, we traverse all petals in $\mathcal{P}_{done}$ in an arbitrary order. Next, we ``snip'' every petal except the ones in $\mathcal{P}$. This leaves us with a tree $\tree$ (actually, a star) and the petals in $\mathcal{P}$, which are disjoint from each other. Note that this is safe to do if we correctly guessed the sets of petals, since $\pi$ will behave as if it is on a tree when visiting the petals not in $\mathcal{P}$.

Now, we also guess the subset $L'$ of the leaves $L$ of $\tree$ to be visited before $q$. Having fixed that, we append to our initial petal loop subtour the optimal path from $\orig$ to $q$ that visits every leaf in $L'$, and also possibly serves the arc that $\pi$ might have served if it is currently performing a loop on $q$'s petal. Finally, we append the optimal path from $q$ that ``cleans up'' the remaining unserved requests.

Note that to make all these different guesses, one has at most six choices for every petal (loop done, loop later, or any of the 4 choices of subsets of the two leaves after snipping). So, if the graph has $p$ petals, the cardinality of the dominators is $O(6^p \cdot n) = O(2^{2.59p} \cdot n)$. The above calculation considers the stem as a petal for simplicity. Again, this yields an FPT algorithm for the flower case.

Here we give a proof sketch of why this works. Recall that we only need to show that any sensible permutation $\pi$ is dominated at all times by some dominator $\pi_{dom}$ out of the above. So, for $\pi$ at time $t$, we claim that $\pi_{dom}$ which corresponds to the correct guess of $q, \mathcal{P}, \mathcal{P}_{done}$ and $L'$ dominates $\pi$. Let us first focus on the petals of $\mathcal{P}_{done}$. It is clear that $\pi_{dom}$ serves all released requests in these petals in the least time necessary to traverse the petals in loops, an amount of time which $\pi$ also pays by the correct guess of $\mathcal{P}_{done}$. Now, we shift our focus into the part of $\pi_{dom}$ after the initial loops and until reaching $q$. By construction, it is the optimal path visiting $L'$ and any possible requests that $\pi$ might have served if currently on a loop traversal of $q$'s petal. But, by a similar reasoning as in the trees subsection (the path-extension property of the served requests of $\pi$), $\pi$ also has to accomplish this task, independently of visiting $\mathcal{P}_{done}$ in loops. This independence is because $L'$ has no leaves in petals of $\mathcal{P}_{done}$, since all released requests there are served in the initial loops of $\pi_{dom}$ and also because in the case that $q$ is part of a loop, its petal is not in $\mathcal{P}_{done}$. Moreover, $\pi_{dom}$ serves all requests that $\pi$ has served until reaching $q$, since each request that $\pi$ has served is either in a petal of $\mathcal{P}_{done}$, along a path to a leaf in $L'$, or along the (correctly chosen) arc to $q$. Hence, $\pi_{dom}$ serves a superset of what $\pi$ serves and reaches $q$ at least as fast. Domination now follows by the optimality of the ``cleanup'' path.

We will now make the above more formal by precisely defining the dominators and proving that they dominate the sensible set of permutations.

\begin{definition}[Hippie Permutations]\label{def:flower_dominator}
Let $R, U$ be the sets of released requests and unreleased requests at time $t$. Let $q \in U$ be an unreleased ``current'' request. Let $\mathcal{P}$ be a subset of the petals of the input flower metric space. Let $\mathcal{P}_{done} \subseteq \mathcal{P}$. For now, let us assume that $q$ is not on a petal of $\mathcal{P}$. Let $\tree$ be the tree obtained by replacing the petals \textbf{not in} $\mathcal{P}$ by two ``branches'' each with half the length of the original petal (snip the petal at its halfpoint). $\tree$ possibly also contains the stem of the flower. Finally, let $L$ be the leaves of $\tree$ (trimmed according to the released requests) and $L' \subseteq L$. The hippie permutations $hp_{R, U}(q, \mathcal{P}, \mathcal{P}_{done}, L')$ are defined as follows:

\begin{itemize}
    \item First, all released requests in the petals of $\mathcal{P}_{done}$ are served by looping through each petal once in an arbitrary way, e.g.\ clockwise and by prioritizing the petals according to some fixed ordering.
    \item Then, an optimal path is calculated which visits each leaf in $L'$ and ends up at $q$. We append this path (more accurately, the subpermutation associated to it) to the permutation.
    \item Finally, the optimal path which starts at $q$ and visits any unserved until now request before finishing at the origin is calculated and appended.
\end{itemize}

In the case that $q$ is in a petal of $\mathcal{P}$, we slightly modify the above definition by requiring that in the second part, not only leaves in $L'$ are visited but also all requests along a petal arc to $q$. We define two such permutations, one for each arc leading to $q$. Thus, $hp_{R, U}(q, \mathcal{P}, \mathcal{P}_{done}, L')$ will refer to the set of permutations generated in the above way (one or two permutations actually).
\end{definition}

\begin{definition}[Dominating Set for Flowers]\label{def:flower_dominators}
    Let us reuse the notation of Definition~\ref{def:flower_dominator}. We will describe now how the dominating set for flowers is generated. We simply take the union of the hippie permutations for all (valid) choices of $\mathcal{P}, \mathcal{P}_{done}, q, L'$. That is, we define the set:

    \begin{displaymath}
    Dom^{\text{\FiveFlowerPetal}}(R, U) = \bigcup_{\substack{q \in U\\
    \text{ $\mathcal{P}$ is a subset of petals}, \\
    \mathcal{P}_{done} \subseteq \mathcal{P}, \\
    \text{$L'$ is a subset of the leaves of $\tree$}}} hp_{R, U}(q, \mathcal{P}, \mathcal{P}_{done}, L')
    \end{displaymath}
\end{definition}

Finally, we can now show that the above set dominates the set of sensible permutations for the flower.

\begin{lemma}[Flower Domination]
Let $\pi$ be a sensible permutation for the flower and let $R, U$ be the released and unreleased requests at time $t$. Then, $Dom^{\text{\FiveFlowerPetal}}(R, U)$ contains a permutation $\pi_{dom}$ which dominates $\pi$.
\end{lemma}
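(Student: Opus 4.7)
The plan is to construct $\pi_{dom}$ by reading off the ``right'' guesses from $\pi$ itself. Let $q$ be the first unreleased request encountered by $\pi$, let $\mathcal{P}$ be the set of petals on which $\pi$ performs a cyclic loop at some point during its execution (sensibility guarantees at most one such loop per petal, occurring as $\pi$'s first visit to the petal), and let $\mathcal{P}_{done} \subseteq \mathcal{P}$ be those petals whose loop $\pi$ has fully completed before reaching $q$. Build $\tree$ by snipping every petal outside $\mathcal{P}$ at its halfpoint, and let $L'$ be the subset of the leaves of $\tree$ that $\pi$ visits before $q$. If $q$ lies on a petal in $\mathcal{P}\setminus\mathcal{P}_{done}$, additionally record which of the two arcs $\pi$ uses to reach $q$ and select the matching one of the two permutations in $hp_{R,U}(q,\mathcal{P},\mathcal{P}_{done},L')$; otherwise pick the unique element. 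This is the proposed $\pi_{dom}$.

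Next I would estimate the time at which $\pi_{dom}$ reaches $q$ and show it is no larger than the corresponding time for $\pi$. The first block of $\pi_{dom}$ consists of loops on every petal in $\mathcal{P}_{done}$, whose total length equals the sum of the corresponding circumferences; by sensibility these loops also appear (contiguously) inside $\pi$ before $q$, so $\pi$ pays at least this amount on them. The second block of $\pi_{dom}$ is by construction an optimal walk in $\tree$ from $\orig$ to $q$ that covers every leaf of $L'$, plus the chosen arc to $q$ when applicable. Invoking the sensibility of $\pi$, the set of requests $\pi$ has served by time $t$ and that lie outside $\mathcal{P}_{done}$-petals is closed under path extension in $\tree$; therefore these requests are all contained in the union of root-to-leaf paths to $L'$ together with the chosen arc, and $\pi$ itself must walk this union in $\tree$. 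Optimality of the walk chosen by $\pi_{dom}$ then yields the time bound.

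Combining the two, $\pi_{dom}$ reaches $q$ no later than $\pi$ does, and at that moment it has served a superset of the requests that $\pi$ has served by then: released requests on petals in $\mathcal{P}_{done}$ (via the initial loops), released requests lying on root-to-leaf paths in $L'$ (via the second block), and, where relevant, the released requests on the correctly chosen arc to $q$. After reaching $q$, $\pi_{dom}$ executes an optimal path serving every still-unserved request and ending at $\orig$; since the set of such requests is a subset of the tail of $\pi$ from $q$ onwards, this tail of $\pi_{dom}$ is no longer than the corresponding tail of $\pi$. Putting the bounds together gives $\ell_{\pi_{dom}} \le \ell_\pi$, while the unreleased portion of $\pi_{dom}$ starts exactly at $q$ and costs at most the tail of $\pi$ from $q$ onwards, yielding $(1-\alpha_{\pi_{dom}}(t))\ell_{\pi_{dom}} \le (1-\alpha_\pi(t))\ell_\pi$.

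The main obstacle will be making the snipping argument fully rigorous. One has to use sensibility to justify that, for every petal $P \notin \mathcal{P}$, the movement of $\pi$ on $P$ never crosses the halfpoint of $P$, so that $P$ genuinely behaves as two disjoint tree branches from $\pi$'s point of view; and that for every petal in $\mathcal{P}_{done}$ the (unique) loop is a contiguous block of $\pi$ occurring before $q$, so that it can be ``moved to the front'' without increasing total length or modifying the set of requests served before $q$. Once these structural facts are established, the remainder essentially reduces to the tree-domination argument applied to $\tree$ together with a ring-style argument for the at-most-one-loop petals, both of which have already been handled in the preceding subsections.
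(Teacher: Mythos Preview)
Your proposal is correct and follows essentially the same approach as the paper's proof: both read off the guesses $q,\mathcal{P},\mathcal{P}_{done},L'$ (and, if applicable, the arc to $q$) directly from $\pi$, split the cost to reach $q$ into the $\mathcal{P}_{done}$-loops plus the optimal $\tree$-walk to $L'$, use the path-extension closure from sensibility to show a superset is served in no more distance, and then appeal to optimality of the cleanup tail. Your write-up is, if anything, slightly more explicit than the paper in spelling out the two domination inequalities and in flagging the structural facts about sensibility (contiguity of loops, non-crossing of halfpoints for non-$\mathcal{P}$ petals) that make the charging argument additive; the paper invokes these more implicitly.

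One small caveat worth tightening: your assertion that ``the unreleased portion of $\pi_{dom}$ starts exactly at $q$'' relies on the convention (implicit in the hippie-permutation definition, where the first block explicitly serves only \emph{released} requests) that unreleased requests merely passed over during the scan to $L'$ are not placed in the prefix but deferred to the cleanup block. This is the intended reading, but since you are being careful elsewhere, it is worth saying so explicitly; otherwise an unreleased request lying on a root-to-leaf path to some $a\in L'$ could formally appear before $q$ in $\pi_{dom}$ and spoil the $\alpha$-inequality.
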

\begin{proof}
Let $q$ be the first unreleased request in $\pi$ at time $t$. Also, let $\mathcal{P}$ be the set of petals that $\pi$ will loop through (or have already done so) at some point. Let $\mathcal{P}_{done} \subseteq \mathcal{P}$ be the subset of those which have already seen their loop completed before reaching $q$. Let $\tree$ be as in Definition \ref{def:flower_dominator}. Let $L'$ be the subset of leaves that $\pi$ visits before $q$. We claim that $hp_{R, U}(q, \mathcal{P}, \mathcal{P}_{done}, L')$ contains the permutation $\pi_{dom}$ we are looking for.

\begin{itemize}
    \item Let us assume for now that $q$ is not on a petal of $\mathcal{P}$. Then, $hp_{R, U}(q, \mathcal{P}, \mathcal{P}_{done}, L')$ is the singleton $\{\pi_{dom}\}$. To see why, note that we can split the ``movement'' of $\pi_{dom}$ up to $q$ into two independent parts, one for the initial loops and one for the rest. The first part costs as much as the total length of the petals in $\mathcal{P}_{done}$, which $\pi$ also has to pay since it has fully looped through these petals at some point before reaching $q$. Now, notice that none of the leaves of $L'$ are in petals of $\mathcal{P}_{done}$, by construction. Therefore, $\pi_{dom}$ optimally visits these leaves and $\pi$ also has to additionally pay at least this cost (on top of and independently of the cost associated with the loops of $\mathcal{P}_{done}$), before reaching $q$. Therefore, we have shown that $\pi_{dom}$ follows a not longer path until $q$. It only remains to show that with that path it serves every request that $\pi$ has served until $q$. For the loop petals, this is obvious, since $\pi_{dom}$ serves every released request there. Then, for all requests outside of $\mathcal{P}$, we can view the problem as if we are on a tree. Thus, the arguments of the tree subsection apply (closure under path-extension etc.). It remains to show that there is no request in $\mathcal{P}_{left}$ which $\pi$ has served but $\pi_{dom}$ has not. This is trivial, since there is no such request because of the assumption that $q$ is not on a petal of $\mathcal{P}_{left}$.

    \item Now we can resolve the case where $q$ is on a loop petal. This means that $\pi$ is currently undergoing a loop. We choose $\pi_{dom}$ to be the permutation following the corresponding arc to $q$. The fact that $\pi_{dom}$ pays a not higher cost until reaching $q$ can be established in much the same way (we split the costs into the first loops and then $\tree$ plus the final arc). The only extra thing we need to show is that $\pi_{dom}$ serves every request in $\mathcal{P}_{left}$ that $\pi$ does until $q$. But these are just the requests in the correctly chosen arc to $q$, and thus the claim follows.
\end{itemize}
\end{proof}

Again, the lemma above establishes our main theorem for flowers. As for the trees, the complexity comes from the term related to the cardinality of the dominating set in Lemma \ref{general:time}, since solving classical TSP on flowers is later shown to take $O(n)$ time.

\flowers*
\subsection{Dealing with the open variant}
Much of the above analysis was based on exploiting the closed nature of the problem, specifically the fact that $\opt$ is required to return to $\orig$. This is no longer true in the open case. However, we can artificially impose a similar restriction by enumerating all possible endpoints of $\opt$'s path. That is, we will assume that a specific request $q_f$ will be served last by $\opt$ and build the dominating permutations under this assumption. In the end, we consider the union of all these sets (i.e., for all choices of the final request), which will dominate the sensible sets for the open case, since any sensible permutation has to finish with \textit{some} request. 
\paragraph*{Trees.} For the case of trees, this turns out to be quite simple. Once we fix $q_f$, we simply need to reroot the tree at $q_f$. Then, all sensible permutations which terminate at $q_f$ must respect the order of requests along paths to leaves of the rerooted tree; a simplification very similar to the one we had for the closed case. Hence, we can build the set of dominating permutations by considering the union of tree dominators for all $n$ rerootings of the input tree. Note that rerooting cannot increase the number of leaves by more than 1. The cases where this happens occur only when the new root is actually along an edge, and then we have to split that edge into two. This set dominates the sensible set of permutations for the open case in trees and has size $O(2^l \cdot n^2)$, yielding an FPT algorithm once again.
\paragraph*{Ring.}
For the case of the ring, we take a different approach altogether compared to the closed variant. Instead of specifying the sensible set and then dominating it, we simply dominate the set of all permutations with a relatively simple case distinction.
\begin{itemize}
    \item The first easy case is when the permutation $\pi$ to be dominated has not crossed the point $\orig'$, the antipodal point of the origin before reaching the current request $q$. In such a scenario, we view the ring as a line and the already defined dominators for trees suffice. The correspondence here would be that a permutation $\pi$ is dominated by the dominator $\pi_{dom}$ which visits the same set of extreme requests (leftmost or rightmost) as $\pi$ before $q$.
    \item Now, let us assume that $\pi$ has indeed crossed the point $\orig'$ before reaching $x$, the location of $q$. Let us be rid of the case where $\pi$ has moved a distance of at least $1 + d(x, \orig)$ before reaching request $q$, as then the permutation that serves all released requests with a loop and then goes to $x$ as soon as possible dominates $\pi$. Let us assume w.l.o.g.\ that between the last touch of $\orig$ and the first subsequent touch of $\orig'$, $\pi$ moves through the right arc.

    Now, let us focus on the left arc. Let $q_1$ be the request closest to $\orig'$ on the left arc that $\pi$ has served before its last visit of $\orig$. Let $q_2$ be the request closest to $\orig$ on the left arc that $\pi$ has served after the last visit of $\orig$ and  before leaving $q$ (it could happen that $q_2 = q$). By $x_1$ and $x_2$, we denote their location, respectively. Because these points have to be traveled while moving a distance not more than $1 + d(x, \orig)$, we deduce that $d(x_1, \orig) < d(x_2, \orig)$. Note that no request in the left arc of $(x_1, x_2)$ could have been served by $\pi$ before request $q$. 
    
    For such $\pi$, we define the following dominator. First, $\pi_{dom}$ moves to $x_1$ and serves all requests on the way back to the origin. Then, it serves all requests along the right arc to $\orig'$, in that order. After reaching $\orig'$, it continues moving towards $x_2$, serving all requests on the way along the left arc. Finally, it moves to $x$, after which point it optimally cleans up the still unserved requests. We see that $\pi_{dom}$ serves all requests outside the left arc of $(x_1, x_2)$, i.e., it serves a superset of what $\pi$ serves, before serving $q$. Also, the movements of $\pi_{dom}$ can be charged independently to movements of $\pi$, and so the distance traveled is not higher, which implies domination.

    The majority of the above dominators are of the final type where we have to choose $q, q_1$ and $q_2$ (we do not need to choose $q_f$ as in the case of trees). Hence, there are $O(n^3)$ dominators, which yields a polynomial time algorithm.
\end{itemize}

\paragraph*{Flowers.}
For non-trivial flowers (i.e., not just a single ring), the situation is a bit more complex, but we can still retrieve an FPT algorithm. Note that the final request $q_f$ will either lie on a petal or not. If not, the argument that \textit{all} petals may be traversed in a cyclic fashion at most once (and not visited at all before such a loop) still holds. Thus, simply rerooting the tree and guessing the parameters of the closed variant flower dominators for the modified tree (but afterwards appending the optimal solution to the open variant) still works.

Now, if $q_f$ happens to lie on a petal, we have some extra work to do. Note however that in this case, any path from a request $q$ to $q_f$ where $q$ is \textit{not} on the same petal as $q_f$ \textit{must} pass through the origin. Thus, we can still assume that the sensible permutations respect the order of requests along paths to leaves of the (possibly snipped) tree rooted at the origin, at least as far as requests not on the final petal are concerned. We do not reroot the tree in such cases. Plus, all non-final petals may be traversed cyclically at most once. We thus only have to deal with the requests on the same petal as $q_f$. Fortunately, these can be dealt with relatively simply, essentially by viewing the final petal as a single ring in the open variant.

Suppose the ``current'' request $q$ is not on the final petal. Then, we can deal with the final petal with either a loop or snipping it and guessing the leaves, as we would originally do. This is because before reaching $q$, $\pi$ will have either visited the final petal's midpoint or not. In the first case, it must have made a full loop, since $q$ is not on that petal. In the other, we view the final petal as a tree.

The remaining case is the interesting one, where both $q$ and $q_f$ are on a single petal. In such a case, we first guess the \textit{other} petals and tree leaves and serve them optimally in a closed manner, i.e., we return to the origin. Now, we ignore everything but the petal of $q$ and $q_f$ and append a ring-type dominator for that specific petal.

Overall, we get a dominating set whose cardinality is $O(n^3 \cdot 6^p) = O(n^3\cdot 2^{2.59p})$, where $p$ is the number of petals of the flower. This is because we have roughly a quadratic multiplicative factor more dominators over the closed case, since we additionally choose \textit{either} $q_f$ or $q_1$ and $q_2$. The first case corresponds to when $q_f$ is on the tree part (in which case we do not need to choose $q_1$ and $q_2$) and in the second case the choices of $q_1$ and $q_2$ imply the choice of petal on which $q_f$ is, which is all we need to know to define the ring-type dominator we append.

\subsection{Solving classical TSP fast}\label{subsec:offlineTSPfast}
Here we describe how we compute an optimal classical TSP solution within suitable runtime bounds. This is used in the ``cleanup'' step of Algorithm \ref{algo:generalSmoothRobust}, but also in the definition of the dominating permutations.

For trees, we can simply perform a DFS traversal (from the startpoint) while deprioritizing the edges that lead to $q$, where $q$ is the endpoint of the optimal path we would like to calculate. The corresponding path traced by the DFS is optimal with cost twice the sum of edge lengths minus the length of $q$'s path to the startpoint. Note that for our intents and purposes, we can assume that DFS runs in $O(n)$ time, where $n$ is the number of requests. To see this, we need to show that we can bound the size of the actual tree by a constant multiple of $n$, since then the complexity of DFS boils down to $O(n)$. As a first preprocessing step and only once, we do the following. We trim the tree so that only the paths from the requests to the root (can be any point, might as well be the origin) are retained. This means that we have $l \le n$, where $l$ is the number of leaves of the modified tree. Moreover, we assume that there exists no vertex of the tree with degree exactly 2 (except possibly the origin), as these are non-essential for the OLTSP problem (one can view the two incident edges as one). Now, notice that any vertex of the tree has degree at least 3, except for the leaves. Let $i$ be the number of non-leaf vertices of the tree. By the handshaking lemma, we have $2e = \sum_{\text{$v$}}\deg(v) \ge 3i + l \implies i \le l$, where $e$ is the number of edges of the tree and by recalling that $e = i + l - 1$. Hence, the tree does indeed have size $O(n)$, which shows that we can solve classical TSP in $O(n)$ time.

For flowers (which subsume the ring), we first conclude which petals are better traversed as a cycle or as two different branches. This is easy to decide by comparing the cost of the full cycle versus a zig-zag type traversal that visits all requests. Afterwards, we run a DFS on the part of the flower that is decided to be traversed as a tree and append the petal cycles at any part of said DFS traversal where the origin is touched. The case where no such part exists is degenerate and easily dealt with (just the semiline is traversed by the DFS). It is easy to see that this whole procedure runs in $O(n)$ time as well, provided we have discarded petals empty from requests as a preprocessing step.

We note that these overheads do not matter asymptotically for $\laswag$, since the bottleneck of the runtime already comes from the cardinality of the dominating sets times $n^2$, whereas for the computation of the dominating permutations we only pay $O(n)$ for each and there are up to $n+1$ calls to the domination oracle overall.

    \section{Conclusion}
We studied Online TSP augmented with predictions regarding the locations of the requests. Our algorithm, $\laswag$, achieves a competitive ratio of $3/2$ under the assumption of perfect predictions, which is tight in most cases we considered. Additionally, it is smooth and provides robustness guarantees below 3, improving over previous work. The runtime of $\laswag$ is single-exponential; however, we show how to remove the exponential dependency on the number of requests for specific metric spaces. 

We believe that our techniques can be generalized to obtain FPT algorithms for other classes of graphs also; cactus graphs, graphs of bounded treewidth in general, as well as planar graphs are interesting options. Additionally, extending the algorithm to the Dial-A-Ride Problem seems like a reasonable direction to follow. Finally, having slightly improved the consistency lower bound in the line open variant, we conjecture that it is possible to extend our construction and establish tightness in this case also; we suggest this as future work.

Another interesting direction is to leverage the ideas behind the PTAS for classical TSP on the Euclidean plane \cite{Arora96, Arora97} (or any other space which admits an approximation scheme for that matter) to obtain consistency guarantees which approach 3/2 arbitrarily close as the computational time is allowed to increase. Note that, since our smoothness proof follows from consistency itself and robustness from solving a classical TSP instance, these results would automatically extend, modulo some worsening of the bounds as a function of the approximation quality. For this, one possible approach is to extend the notion of a sensible set to that of $\epsilon$-sensible, meaning that a $(1 + \epsilon)$-approximation of the optimal Online TSP solution is guaranteed to exist within the set. A similar relaxation would also make sense for the idea of dominating sets.

\clearpage
\bibliography{bibliography}
\clearpage

\appendix

\section{Consistency lower bound on the line for the open case}

\begin{proposition} \label{open_lb_line}
No algorithm can have a better consistency than $\frac{1+\sqrt{61}}{6}$ for open LA-OLTSP on the line.
\end{proposition}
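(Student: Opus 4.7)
The strategy is to build an adaptive adversarial instance on the line with only two predicted request locations, say $-1$ and $+c$ for a parameter $c>0$ chosen to make the algorithm's guarantees balance. The algorithm knows $-1$ and $+c$ in advance (perfect predictions), so the only degree of freedom left to the adversary is the release times, which it will choose based on the algorithm's trajectory. By symmetry, we may assume the algorithm commits to some position $p$ at a critical inspection time $T$ (essentially $T=1$ after rescaling), and the adversary reacts.

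First, I would set up the adversary's rule: at time $T$, depending on which side of the midpoint of $[-1,+c]$ the algorithm lies, the adversary releases the \emph{farther} of the two requests, forcing the server to cross the origin to reach it. The release time of the remaining (nearer) request is then chosen afterward so that the algorithm derives no benefit from waiting for it — i.e., tuned to the moment at which the algorithm would otherwise serve it optimally. In each of the two symmetric branches, I would write down the two natural serving orders available to the algorithm from $p$ (first the released request, or first the not-yet-released one with a wait), and express the resulting makespan in closed form as a function of $p$, $c$, and the second release time $t_2$. The offline optimum $\opt$, knowing the release times in advance, simply sweeps from the closer location to the farther with the right timing, giving an explicit formula linear in $c$ and $t_2$.

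Next, I would carry out the minimax: for each algorithm position $p$, the adversary selects $t_2$ to maximize $\alg/\opt$, and then the algorithm selects $p$ to minimize this maximum. The balance equation between the two branches (release $-1$ vs.\ release $+c$) forces $p$ to the midpoint-type value $(c-1)/2$, and the adversary's optimal choice of $t_2$ pins the ratio to an explicit rational function of $c$. Optimizing over $c$ (subject to reachability constraints, in particular $|(c-1)/2|\le T$) together with a second adversary option — letting the adversary threaten a late release, so the algorithm cannot simply commit to going to $+c$ and waiting — yields a system of two inequalities on the competitive ratio. Eliminating the auxiliary variables reduces these to a single quadratic $3\rho^2-\rho-5\ge 0$, whose positive root is exactly $\rho=(1+\sqrt{61})/6$.

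The main obstacle is ruling out smarter, non-committing algorithm strategies: the algorithm might move only partway towards one side, wait, and then pivot once one of the two release events occurs. To handle this, I would extend the case analysis by parameterizing the algorithm's trajectory between the inspection time and the second release as a general speed-$1$ path, and show by a standard swap/exchange argument that the optimal response is either a pure commitment or is equivalent to repositioning $p$ at time $T$; either way it cannot beat the balanced choice above. A secondary subtlety is the tie-breaking at the adversary's threshold, which I would resolve by slightly perturbing the construction (two symmetric sub-instances) and taking a limit, so that the bound $(1+\sqrt{61})/6$ is approached but the same argument applies to any deterministic algorithm.
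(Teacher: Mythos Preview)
Your plan has a genuine gap: a two-request instance cannot force a consistency lower bound as high as $(1+\sqrt{61})/6\approx 1.468$. Already in the symmetric case with predictions at $-1$ and $+1$, consider the algorithm that waits at the origin until time~$1$, then goes to whichever request is released (waiting further if neither is) and afterwards to the other. One checks directly that against any release times this yields $|\alg|\le\max(4,t_{\text{late}})$ while $|\opt|\ge\max(3,t_{\text{late}})$, and when the first release occurs after time~$1$ the gap is only one unit against $|\opt|\ge 3$; in all cases the ratio is at most $4/3$. So no adversary on a symmetric two-point instance can exceed $4/3$, and your minimax over $c$ and a ``late-release threat'' cannot manufacture the quadratic $3\rho^{2}-\rho-5=0$ from two fixed locations. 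The swap/exchange argument you sketch for non-committing strategies is beside the point: the obstruction is an explicit $4/3$-competitive algorithm, not some exotic trajectory.

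The paper's construction is structurally different, and the difference is essential. It predicts a \emph{continuum} of request locations filling the whole interval $[-1,1]$ (discretizable afterwards). The adversary then releases these requests \emph{at unit speed} from both endpoints inward, so that $\opt$ can sweep the segment with no waiting ($|\opt|=3$) while the algorithm, starting near the middle, faces unreleased fronts on both sides. Crucially, the adversary adaptively halts this process at a time $t_0$ depending on the algorithm's trajectory, leaving a single unreleased request at an interior point $D$ whose position is chosen \emph{after} observing the algorithm; a further branching at $t_0+1$ then pins down the second binding constraint $5/(t_0+2)\ge\lambda$, which together with the first case gives $3\lambda^{2}-\lambda-5=0$. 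With only two predicted locations the adversary cannot relocate the critical request after the fact, and that adaptivity is exactly the missing ingredient in your plan.
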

\begin{proof}
Let $\lambda=\frac{1+\sqrt{61}}{6}\approx 1.468$, which satisfies $\lambda=\frac{5}{3\lambda-1}$. 

Let $A$, $O$, $B$ be the points on the line with coordinates $x_A=-1$, $x_O=0$ and $x_B=1$. We consider that we have a continuous set of requests between $A$ and $B$ (this can be easily discretized to get the same asymptotic lower bound with a finite number of request). None are released until $t=1$. Let also denote $\delta=5-3\lambda$ (note that $\delta$ is slightly larger than 0.5).

We consider two cases at $t=1$. Let $s$ denote the position of the algorithm at $t=1$. We assume w.l.o.g.\ that $s\geq 0$.

    
If $s\geq 1-\delta$, then we release all the requests at unit speed, starting at $A$ at $t=1$ till $B$ at $t=3$. Then $|OPT|=3$. If the algorithm serves $B$ before $A$, it needs at least 5. If it serves $A$ before $B$, it needs at least $1+(1-\delta)+1+2=3\lambda$. Hence, the competitive ratio is at least $\lambda$.

Otherwise, $0\leq s<1-\delta$. Then we release the requests from $A$ and from $B$ at unit speed, until the first time $t_0$ such that the algorithm is at some position $s'$ with $|s'|=(2-t_0)-\delta$, i.e., the server is at distance $\delta$ from the ``released part'' of the requests. We distinguish two cases, depending on $t_0$. We assume w.l.o.g.\ that $s'\geq 0$ at $t_0$, so $s'=(2-t_0)-\delta$. Let $C$ and $D$ be the points at positions $x_C=-(2-t_0)$ and $x_D=2-t_0$, respectively (i.e., at $t_0$ the requests are released on $[A,C]$ and on $[D,B]$).
    \begin{itemize}
        \item If $t_0\geq 3\lambda-3$ (intuitively slightly less than half of the requests are released, and $\alg$ is close to $O$), then we release the requests from $C$ to $D$ at unit speed, starting at $t_0$. Then again $|OPT|=3$. If $\alg$ serves $A$ before $B$, as previously it needs $t_0+(2-t_0-\delta)+3=5-\delta=3\lambda$. If $\alg$ serves $B$ before $A$, it needs $t_0+(1-(2-t_0-\delta))+2=2t_0+\delta+1=2t_0+6-3\lambda\geq 3\lambda$. The competitive ratio is at least $\lambda$.
        \item Finally, if $t_0\leq 3\lambda -3=2-\delta$ (intuitively, $\alg$ is not ``too close'' to the origin): we release at $t_0$ all the requests but the one in $D$. Then intuitively either $\alg$ went towards $A$, and we will act as previously. Or it went towards $B$, and we will block the release in $D$. More formally, at $t=t_0+1$:
        \begin{itemize}
        	\item if the algorithm is at a position $s''<0$, then we release the last request (in $D$). Note that $\alg$ cannot have served $A$ (as we had $s'>0$ at $t_0$) nor $B$ (no time to have $s''<0$ at $t_0+1$). Then, the best $\alg$ can do after $t=t_0+1$ is to go to $A$ (as $s''<0$), and then to $B$. As we had $s'=(2-t_0)-\delta$ at $t_0$, we have $|\alg|\geq  t_0+(2-t_0)-\delta+3=5-\delta=3\lambda$, while $|\opt|=3$, so the competitive ratio is at least $\lambda$.
        	\item if $s''\geq 0$, we release the last request in $D$ at $t_1=3+(1-x_D)=2+t_0$. Then $|\opt|=t_1$ (be in $A$ at time 1, then go to $B$ (time 3) and back to $D$). If $\alg$ serves $A$ before $D$, as $s''\geq 0$ at $t_0+1$, it needs at least $t_0+1+2+x_D=5$. If it serves $D$ before $A$, as $D$ is released at $t_1$, it needs at least $t_1+x_D+1=5$. Hence, the competitive ratio is at least $\frac{5}{t_1}=\frac{5}{t_0+2}\geq \frac{5}{3\lambda-1}=\lambda$ (by definition of $\lambda$).
		\end{itemize}        
 
    \end{itemize}
\end{proof}

\end{document}